\pdfoutput=1
\documentclass{amsart}

\usepackage{lmodern}
\usepackage{amscd}
\usepackage{bbm}
\usepackage{mathrsfs}
\usepackage{eucal}
\usepackage{faktor}
\usepackage{xfrac}
\usepackage{braket}
\usepackage{relsize} 
\usepackage{nccmath}
\usepackage{mathtools}
\usepackage{comment}
\usepackage{verbatim}
\usepackage{wasysym}
\usepackage{graphicx}
\usepackage{caption}
\usepackage{import}
\usepackage{latexsym}
\usepackage{amsfonts}
\usepackage{amssymb}
\usepackage{amsmath}
\usepackage{mathtext}
\usepackage{cite}
\usepackage{enumerate}
\usepackage{float}
\usepackage{amsthm}
\usepackage{upgreek}
\usepackage{nicefrac}
\usepackage{multicol}
\usepackage{rotating}
\usepackage{array}
\usepackage{microtype}
\usepackage{vwcol} 
\usepackage[normalem]{ulem}

\usepackage{etoolbox}
\makeatletter
\pretocmd{\chapter}{\addtocontents{toc}{\protect\addvspace{15\p@}}}{}{}
\pretocmd{\section}{\addtocontents{toc}{\protect\addvspace{5\p@}}}{}{}
\makeatother
\let\oldtocsection=\tocsection
\let\oldtocsubsection=\tocsubsection
\let\oldtocsubsubsection=\tocsubsubsection
\renewcommand{\tocsection}[2]{\hspace{0em}\oldtocsection{#1}{#2}}
\renewcommand{\tocsubsection}[2]{\hspace{1.8em}\oldtocsubsection{#1}{#2}}
\renewcommand{\tocsubsubsection}[2]{\hspace{4.4em}\oldtocsubsubsection{#1}{#2}}

\usepackage[svgnames]{xcolor}
\definecolor{linkcolor}{HTML}{e88d67} 
\definecolor{citecolor}{HTML}{e88d67} 
\definecolor{urlcolor}{HTML}{e88d67} 
\definecolor{myNewColorA}{HTML}{fec3a6}
\definecolor{myNewColorB}{HTML}{ffaf80}
\definecolor{myNewColorC}{HTML}{fb8f67}
\definecolor{seagreen}{HTML}{337180}
\definecolor{mseagreen}{HTML}{369673}
\definecolor{darksalmon}{HTML}{e88d67}
\definecolor{silver}{HTML}{bbbbbb}
\definecolor{flowerblue}{HTML}{4e77fc}
\definecolor{tomato}{HTML}{ff6347}
\definecolor{orange}{HTML}{f2b13d}
\definecolor{darkgray}{HTML}{939393}

\usepackage[
  bookmarks = true,
  bookmarksopen = true,
  colorlinks = , 
  urlcolor = urlcolor, 
  linkcolor = linkcolor, 
  citecolor = citecolor, 
  linktoc = all,
  pagebackref = true
]{hyperref} 
\mathtoolsset{showonlyrefs,showmanualtags}

\DeclareMathOperator{\Mat}{Mat}

\DeclareMathOperator{\const}{const}

\newcommand{\br}[1]{\left( #1 \right)}

\newcommand{\lbr}[1]{\left[ #1 \right]}

\newcommand{\cLen}[1]{\mathcal{L}_{ #1 }}
\newcommand{\ncLen}[1]{\mathcal{L}_{ #1 }^{\text{\tiny{nc}}}}
\newcommand{\ncLenard}[1]{\mathcal{L}_{ #1 }^{\text{\tiny{nc}}} \lbr{u}}

\newcommand{\mseagreen}[1]{{\color{mseagreen} #1}}

\newcommand{\flowerblue}[1]{{\color{flowerblue} #1}}
\newcommand{\tomato}[1]{{\color{tomato} #1}}
\newcommand{\orange}[1]{{\color{orange} #1}}

\theoremstyle{plain}
\newtheorem{thm}{Theorem}[]

\newtheorem{lem}{Lemma}[]

\newtheorem{prop}{Proposition}[]

\theoremstyle{definition}
\newtheorem{defn}{Definition}[]

\newtheorem{exmp}{Example}[]

\theoremstyle{remark}
\newtheorem{rem}{Remark}

\usepackage{chngcntr}
\counterwithin*{thm}{section} 
\counterwithin*{lem}{section} 
\counterwithin*{claim}{section} 
\counterwithin*{prop}{section} 
\counterwithin*{corl}{section} 
\counterwithin*{defn}{section} 
\counterwithin*{exmp}{section} 
\counterwithin*{ex}{section} 
\counterwithin*{rem}{section} 

\usepackage{apptools}
\AtAppendix{\counterwithin{thm}{section}}
\AtAppendix{\counterwithin{lem}{section}}
\AtAppendix{\counterwithin{claim}{section}}
\AtAppendix{\counterwithin{prop}{section}}
\AtAppendix{\counterwithin{corl}{section}}
\AtAppendix{\counterwithin{defn}{section}}
\AtAppendix{\counterwithin{exmp}{section}}
\AtAppendix{\counterwithin{ex}{section}}
\AtAppendix{\counterwithin{rem}{section}}

\allowdisplaybreaks

\usepackage{tikz}

\begin{document}

    \title[A non-commutative discrete first Painlevé hierarchy]{A non-commutative discrete first Painlevé hierarchy: \\
    the Lax pair approach}

    \author{Irina Bobrova}
    \noindent\address{\noindent 
    Technische Universität Berlin, Institut für Mathematik, MA 7-1, Str. des 17. Juni 136, 10623 Berlin, Germany
    }
    \email{irrrina.bobrova@gmail.com}

    \subjclass{Primary 46L55. Secondary 39A05, 39A10} 
    \keywords{non-commutative discrete systems, isomonodromic problems, Painlevé hierarchies, Svinin polynomials}
    
    \begin{abstract}
    Using a non-commutative analogue of the isomonodromic problem associated with the discrete first Painlevé hierarchy, we construct a non-commutative version of this hierarchy, denoted by \ref{eq:dPIm_nc}. We show that both hierarchies, \ref{eq:dPIm} and \ref{eq:dPIm_nc}, can be expressed in terms of the polynomials $S_s^k(n)$, which we call the Svinin polynomials. We also derive a reduction of the non-commutative Volterra lattice hierarchy to the \ref{eq:dPIm_nc} hierarchy and present explicit continuous limits for the first three members of the \ref{eq:dPIm_nc}, thereby recovering non-commutative analogues of the first three members of the differential first Painlevé hierarchy.
    \end{abstract}
    
    \maketitle
    \pagestyle{headings}
    
    \tableofcontents
    
    \section{Introduction}

The celebrated differential and discrete Painlevé equations are among the central objects studied in modern mathematics and mathematical physics. Their importance stems from their rich structures---they possess symmetries generating affine Weyl groups, are related to special functions and orthogonal polynomials, can be interpreted as moduli spaces of flat connections, character varieties and Poisson geometry on which are one of the interests in modern mathematics---as well as their deep connections with integrable systems such as the Toda chains, the Kadomtsev–Petviashvili equation, Calogero systems, and many others. Thanks to the works of K.~Okamoto and H.~Sakai, these equations also admit a beautiful geometric interpretation. Moreover, the Painlevé equations possess natural higher-order extensions known as the \emph{Painlevé hierarchies}. We refer the reader to the excellent survey on this topic given in \cite{conte2008painleve}.

\medskip
In recent years, matrix and non-commutative generalizations of integrable systems have attracted growing interest. This is motivated both by applications and by the intrinsic mathematical richness of evolution equations with non-commuting dependent variables. The Painlevé equations, together with their discrete versions, provide a particularly compelling example of this phenomenon. For instance, the first Painlevé equation admits the following matrix analogue~\cite{Balandin_Sokolov_1998}:
\begin{align}
    \label{eq:PI1_mat}
    \tag*{$\text{PI}_1^{\text{mat}}$}
    &&
    u''
    &= 6 u^2 + z \, \mathbb{I} + a,
    &
    u=u(z), \,\, a 
    &\in \Mat_k(\mathbb{C}),
    &
    z
    &\in\mathbb{C}
    .
    &&
\end{align}
When the size of the matrices is equal to 1, it recovers the famous Painlevé I equation. The commutative first Painlevé equation has higher analogues discovered in \cite{kudryashov1997first} and later extended to the matrix case in \cite{Pic}. Both papers are based on the idea of making a symmetry reduction of the corresponding, scalar or matrix, Korteweg-de Vries hierarchy. The matrix hierarchy contains the \ref{eq:PI1_mat} equation as the first member and is called the matrix first Painlevé hierarchy. 

\medskip
The~\ref{eq:PI1_mat} equation can be further extended to a non-commutative setting by working in a unitary free associative algebra with a derivation satisfying the Leibniz rule (see, e.g.,~\cite{bobrova2022classification}). It also admits several discrete versions. One of the discrete matrix analogues is
\begin{align}
    \label{eq:dPI10_mat}
    &&
    \br{
    \alpha_{n + 1} + \alpha_n + \alpha_{n - 1}
    } \alpha_n
    + t \, \alpha_n
    &= c_n \, \mathbb{I}
    ,
    &
    c_n
    &= c_0 + n,
    &
    \alpha_n
    &\in \Mat_k(\mathbb{C})
    ,
    &
    t, \,\, c_0
    &\in \mathbb{C}
    ,
    &
    n
    &\in \mathbb{Z}
    .
    &&
\end{align}
If $k=1$, it turns into the discrete first Painlevé equation, the \ref{eq:dPI1}, which is the special case of the d-P$(E_6^{(1)})$ equation from the H. Sakai classification \cite{sakai2001rational}. 
The equation \eqref{eq:dPI10_mat} exhibits the singularity confinement property~\cite{cassatella2014singularity}, is related to matrix orthogonal polynomials~\cite{manas2021matrix}, and can be obtained from a suitable composition of Bäcklund transformations for the non-commutative fourth Painlevé system~\cite{bobrova2024affine}. 

Two additional matrix analogues of the discrete first Painlevé equation were obtained via symmetry reductions of the Volterra lattices~\cite{adler2020}. They are given by
\begin{align}
    \label{eq:dPI11_m}
    &&
    \alpha_{n + 1}\alpha_n 
    + \alpha_n^2 
    + \alpha_n\alpha_{n - 1}
    + t \, \alpha_n
    &= \tilde c_n \, \mathbb{I}
    ,
    &
    \tilde c_n
    &= c_0 + n + c_1 (-1)^n
    ,
    &&
    \\
    \label{eq:dPI12_m}
    &&
    \alpha_{n + 1}^\text{T}\alpha_n 
    + \alpha_n^2
    + \alpha_n\alpha_{n - 1}^\text{T}
    + t \, \alpha_n
    &= \tilde c_n \, \mathbb{I}
    ,
    &
    \tilde c_n
    &= c_0 + n + c_1 (-1)^n
    ,
    &&
\end{align}
where $c_0$ and $c_1$ are constant parameters. 
Because the matrix Volterra lattices are integrable—admitting higher symmetries and a zero-curvature representation—and the reduction extends for the Lax pairs as well, both equations inherit isomonodromic Lax pairs (see Proposition~3 in~\cite{adler2020}).

\medskip
In this work, we construct a non-commutative analogue of the discrete first Painlevé hierarchy, following an approach based on a discrete isomonodromic problem, which was introduced and applied to the discrete first and second Painlevé equations in~\cite{joshi1992nonlinear} and later extended to their higher-order versions in~\cite{cresswell1999discrete}. Remarkably, the first non-trivial member \ref{eq:dPI1_nc} of our hierarchy \ref{eq:dPIm_nc} coincides with a non-commutative extension of equation~\eqref{eq:dPI11_m}. Below, we provide a brief overview of the results and the structure of the paper.

\medskip 
We set up the non-commutative framework as follows. Let $\mathcal{R}$ be an associative unital division ring over a field $F$ with $\mathrm{char}(F)=0$. This may be viewed as a non-commutative generalization of the field of rational functions. We identify the unit of $F$ with that of $\mathcal{R}$, and note that the elements of $\mathcal{R}$ need not commute. We refer to elements of $\mathcal{R}$ informally as ``functions''.

To work with non-commutative analogues of ordinary differential and discrete equations, we introduce two operations on $\mathcal{R}$: a derivation $\partial$ and a translation $T$. A \textit{derivation} of $\mathcal{R}$ is an $F$-linear map $\partial : \mathcal{R} \to \mathcal{R}$ satisfying the Leibniz rule, and for any $c \in F$ we have $\partial (c) = 0$. Sometimes we use $\partial \equiv \prime$ if the subindex does not need to be specified additionally. 
Let the function $\alpha_n \in \mathcal{R}$ be an element at site $n \in \mathbb{Z}$, in other words it is a non-commutative function depending on the index $n$. On the lattice $\mathbb{Z}$ we consider a \textit{translation operator}~$T$, which is an $F$-linear map $T: \mathcal{R} \to \mathcal{R}$ such that $T\br{\alpha_n} = \alpha_{n + 1}$ and $T\br{c} = c$ for any constant $c$. For the simplicity, we take $F = \mathbb{C}$. 

\medskip
Following~\cite{joshi1992nonlinear}, we begin with the linear system
\begin{align}
    \label{eq:lsys_nc}
    \left\{
    \begin{array}{rcl}
    \alpha_n w_{n + 1}
    &=& \lambda \, w_n 
    - w_{n - 1},
    \\[2mm]
    \partial_{\lambda} w_n
    &=& a_n \, w_{n + 1} 
    + b_n \, w_n
    .
    \end{array}
    \right.
\end{align}
Here $\lambda \in \mathbb{C}^\times$ is a spectral parameter, and the functions $\alpha_n$, $a_n$, $b_n$, $w_n$ belong to the division ring $\mathcal{R}$. In addition, only $\partial_\lambda w_n \neq 0$. The compatibility condition of the system \eqref{eq:lsys_nc} gives rise to a system of equations for the functions $\alpha_n$, $a_n$, $b_n$. In order to solve them, we present the functions as polynomial series in $\lambda$ of highest degree $(2m+1)$ with $m \in \mathbb{N}$. Since $\lambda$ is an arbitrary non-zero parameter, the coefficients of each power of $\lambda$ must vanish. This yields a system of recurrence equations, which can be solved explicitly. Once $a_n$, $b_n$ are determined, the compatibility condition gives rise to a single equation for the function $\alpha_n$, which we refer to as a \textit{non-commutative analogue of the $m$-th member of the discrete first Painlevé hierarchy}. The hierarchy can be expressed in terms of a non-commutative version of the \textit{Svinin polynomials}~$S_s^k(n)$ whose commutative counterparts were introduced in \cite{svinin2009some}. These polynomials are defined as follows (see~Definition~\ref{def:spoly_nc}; see~also~\cite{casati2020recursion}):
\begin{align}
    \label{eq:spoly}
    &&
    S_{s}^0 \br{n}
    &= 1,
    &
    S_{s}^{k} \br{n}
    &= \sum_{j = 0}^{s - 1} \alpha_{n - k - j + 1} \, S_{s - j}^{k - 1} \br{n - j}
    .
    &&
\end{align}
Then the $m$-th member of the non-commutative discrete first Painlevé hierarchy has the form (see Theorem~\ref{thm:dPIm_nc})
\begin{align}
    \label{eq:dPIm_nc}
    \tag*{$\text{d-PI}_m^{\text{nc}}$}
    &&
    \sum_{r = 1}^{m + 1}
    d_{2(r - 1)} \, S_r^r \br{n + r - 1}
    + c_n
    &= 0
    ,
    &
    c_n
    := n - c_0
    .
    &&
\end{align}
Here the constant parameters $d_l$ belong to the center $\mathcal{Z}(\mathcal{R})$ of $\mathcal{R}$ and do not depend on $n$. In order to derive this form, some properties of the non-commutative Svinin polynomials have been used (see Lemma \ref{thm:spr_nc}). More details can be found in Subsection~\ref{sec:dPIm_nc}, whereas the commutative case is considered in Subsection~\ref{sec:dPIm}. 

\begin{rem}
\label{rem:pneq}
In the commutative case, the compatibility condition can be reduced to a single equation, by eliminating either $a_n$ or $b_n$. A non-commutative analogue of this equation, when, for instance, the function $a_n$ is eliminated, reads
\begin{align}
    \alpha_n \alpha_{n+1} p_{n+2}
    - \alpha_n p_{n + 1} \br{
    \lambda^2 - \alpha_n
    }
    + \br{
    \lambda^2 - \alpha_n
    } p_n \alpha_n
    - p_{n - 1} \alpha_{n - 1} \alpha_n
    + 2 \alpha_n
    + \lambda \br{
    b_n \alpha_n
    - \alpha_n b_n
    }
    &= 0
    ,
\end{align}
where $p_n := a_n \alpha_n^{-1}$. Due to the last term, which is non-zero in the general case, we cannot eliminate the function $b_n$ and, therefore, need to solve a system of equations instead of a single equation. Thus, we devote Subsection \ref{sec:dPIm} to obtaining a solution of the commutative system to make the paper self-contained. Let us also note that the solution of the latter equation gives a constant $\tilde c_n$, while in our case we have just $c_n$. However, according to Remark \ref{rem:n2sol_nc}, the alternating part can be recovered.
\end{rem}

Thanks to the main theorem, Theorem \ref{thm:dPIm_nc}, we present three first members \ref{eq:dPI1_nc}, \ref{eq:dPI2_nc}, \ref{eq:dPI3_nc} of the \ref{eq:dPIm_nc} hierarchy and study their ``continuous'' analogues by taking the appropriate limits (see Section \ref{sec:dPI1_3}). It~turns out that they can be mapped to the non-commutative versions of the corresponding members of the first Painlevé hierarchy \ref{eq:PIm_nc}, briefly reviewed in Subsection~\ref{sec:PIm}. Recall that the first Painlevé hierarchy is defined with the help of the Lenard operator, which has a non-local term in the non-commutative setting. In~Proposition \ref{thm:Lenard_nc} and Lemma \ref{thm:Gn_rec}, we present more convenient expressions for their derivations.

Section \ref{sec:voltlat} is dedicated to the stationary reduction of the non-commutative Volterra lattice hierarchy \ref{eq:VLr_nc} to the \ref{eq:dPIm_nc}, while the final section, Section \ref{sec:disc}, discusses several open problems. In particular, the continuous limit in both commutative and non-commutative settings, the construction of another discrete Painlevé hierarchy associated with~\eqref{eq:dPI10_nc} or with its matrix version \eqref{eq:dPI10_mat}, and connections with other known non-commutative Painlevé equations. In particular, regarding the discrete Painlevé II case, the starting example, denoted~\ref{eq:dPII1_nc}, is presented there. A detailed study of these questions is postponed to future work.

\subsection*{Acknowledgments}
The author is deeply grateful to Giovanni Felder for fruitful discussions at the conference \href{https://swissmaprs.ch/events/around-associators-from-integrable-systems-to-number-theory/}{\textit{Around associators: from integrable systems to number theory}}, as these discussions inspired the present work.

\section{A hierarchy}

Consider a non-commutative analogue \eqref{eq:lsys_nc} of the linear problem associated with the commutative discrete first Painlevé equation \ref{eq:dPI1}. The compatibility condition then leads to the identity
\begin{multline}
    \br{
    \alpha_n^{-1} b_{n - 1} + \lambda \alpha_n^{-1} a_n \alpha_n^{-1}
    } w_{n - 1}
    + \br{
    \alpha_n^{-1} a_{n - 1} - \lambda \alpha_n^{-1} b_n
    - \lambda^2 \alpha_n^{-1} a_n \alpha_n^{-1}
    - \alpha_n
    } \, w_n
    \\
    = \br{
    b_{n + 1} \alpha_n^{-1} 
    + \lambda a_{n + 1} \alpha_{n + 1} \alpha_n
    } \, w_{n - 1}
    + \br{
    a_{n + 1} \alpha_{n + 1} 
    - \lambda b_{n + 1} \alpha_n^{-1}
    - \lambda^2 a_{n + 1} \alpha_{n + 1}^{-1} \alpha_n^{-1}
    } \, w_{n}
\end{multline}
obtained by equating the two different evaluations of the linear system. 
Since the function $w_n$ is arbitrary, equating the coefficients of $w_{n - 1}$ and $w_n$ yields the system 
\begin{align}
    \label{eq:compcond_nc}
    \left\{
    \begin{array}{rcl}
    \lambda^2 \br{
    p_n
    - \alpha_n p_{n + 1} \alpha_n^{-1}
    }
    + \lambda \br{
    b_n - \alpha_n b_{n + 1} \alpha_n^{-1}
    }
    + \alpha_n p_{n + 1}
    - p_{n - 1} \alpha_{n - 1}
    + 1
    &=& 0,  
    \\[2mm]
    \lambda \br{
    p_n
    - \alpha_n p_{n + 1} \alpha_n^{-1}
    }
    - \alpha_n b_{n + 1} \alpha_n^{-1}
    + b_{n - 1}
    &=& 0,
    \end{array}
    \right.
\end{align}
where we have used the notation $p_n := a_n \alpha_n^{-1}$.
Note that the first equation of \eqref{eq:compcond_nc} can be simplified with the help of the second one. Thus, we obtain the system
\begin{align}
    \label{eq:ccond_nc}
    \left\{
    \begin{array}{rlr}
    \lambda \br{
    \alpha_n p_{n + 1} - p_n \alpha_n
    }
    + \alpha_n b_{n + 1}
    - b_{n - 1} \alpha_n
    &=& 0,  
    \\[2mm]
    \lambda \br{
    b_{n - 1} - b_n
    }
    + p_{n - 1} \alpha_{n - 1}
    - \alpha_n p_{n + 1}
    - 1
    &=& 0
    .
    \end{array}
    \right.
\end{align}

Let $P_{i, n}$ and $B_{i, n}$ be elements of $\mathcal{R}$. We assume that $p_n$ and $b_n$ admit the expansions
\begin{align} 
    \label{eq:bpser_nc}
    &&
    p_n
    &= \sum_{k = 0}^{m} P_{2k, n} \lambda^{2k}
    ,
    &&&
    b_n
    &= \sum_{l = 0}^{2m + 1} B_{l, n} \lambda^l
    = \sum_{k = 0}^{m} B_{2k, n} \lambda^{2k}
    + \sum_{k = 0}^{m} B_{2k + 1, n} \lambda^{2k + 1}
    .
    &&
\end{align}
Substituting them into \eqref{eq:ccond_nc}, we collect the coefficients of $\lambda$. 
Since $\lambda$ is an arbitrary parameter and the compatibility condition must be satisfied, 
all coefficients of powers of $\lambda$ must vanish. As a result, we obtain the following system of equations 
for $B_{i,n}$ and $P_{i,n}$:
\begin{align}
    \label{eq:cond1_1_nc}
    \alpha_n B_{2k, n + 1} - B_{2k, n - 1} \alpha_n
    &= 0,
    \\
    \label{eq:cond1_2_nc}
    \alpha_n P_{2k, n + 1}
    - P_{2k, n} \alpha_n
    + \alpha_n B_{2k + 1, n + 1}
    - B_{2k + 1, n - 1} \alpha_n
    &= 0,
    \\[2mm]
    \label{eq:cond2_1_nc}
    B_{2m + 1, n - 1} - B_{2m + 1, n}
    &= 0,
    \\
    \label{eq:cond2_2_nc}
    B_{2k, n - 1} - B_{2k, n}
    &= 0,
    \qquad
    0 \leq k \leq m,
    \\
    \label{eq:cond2_3_nc}
    B_{2l - 1, n - 1} - B_{2l - 1, n}
    - \alpha_n P_{2l, n + 1} + P_{2l, n - 1} \alpha_{n - 1}
    &= 0,
    \qquad 
    1 \leq l \leq m
    ,
    \\
    \label{eq:cond2_4_nc}
    P_{0, n - 1} \alpha_{n - 1}
    - \alpha_n P_{0, n + 1} - 1
    &= 0.
\end{align}

\medskip
We aim to solve this system.
Note that equations \eqref{eq:cond1_1_nc}, \eqref{eq:cond2_1_nc}, and \eqref{eq:cond2_2_nc} 
can be solved immediately. We assume throughout the paper that $c_i$ and $d_i$ belong to the center $\mathcal{Z}(\mathcal{R})$\footnote{One may also consider that these parameters are non-commutative, but we are not interested in that case here.} and do not depend on $n$. 
Equations \eqref{eq:cond2_1_nc} and \eqref{eq:cond2_2_nc} lead to a solution independent of $n$:
\begin{align}
    B_{2m + 1, n}
    &= c_{2 m + 1},
    &
    B_{2k, n}
    &= c_{2k},
    \qquad
    0 \leq k \leq m
    .
\end{align}
Then equation \eqref{eq:cond1_2_nc} for $k = m$ becomes
\begin{align}
    \alpha_n P_{2m, n+1}
    - P_{2m, n} \alpha_n
    = 0 
    .
\end{align}
Assuming that $P_{2m, n} \in \mathcal{Z}(\mathcal{R})$, we obtain $P_{2m, n} = P_{2m, n + 1}$. We set $P_{2m, n} = d_{2m}$. The remaining equations~are
\begin{align}
    \label{eq:cond_2_nc}
    B_{2k - 1, n - 1} 
    - B_{2k - 1, n}
    - \alpha_n P_{2k, n + 1} + P_{2k, n - 1} \alpha_{n - 1}
    &= 0,
    \\
    \label{eq:cond_1_nc}
    \alpha_n P_{2(k - 1), n + 1}
    - P_{2(k - 1), n} \alpha_n
    + \alpha_n B_{2k - 1, n + 1}
    - B_{2k - 1, n - 1} \alpha_n
    &= 0
    ,
    \qquad 
    1 \leq k \leq m
    ;
    \\[2mm]
    \label{eq:p0_nc}
    P_{0, n - 1} \alpha_{n - 1}
    - \alpha_n P_{0, n + 1} - 1
    &= 0
    .
\end{align}
After solving the recursive system of equations \eqref{eq:cond_2_nc} -- \eqref{eq:cond_1_nc} 
and substituting its solution into \eqref{eq:p0_nc}, equation~\eqref{eq:p0_nc} yields a 
condition on the function $\alpha_n$. We refer to this condition as a \textit{non-commutative analogue of the $m$-th member of the discrete first Painlevé hierarchy}.

\medskip
In the commutative case, the system \eqref{eq:ccond_nc} can be reduced to a single equation for $p_n$ (see Remark \ref{rem:pneq}). 
By~substituting the series into this equation, one can similarly collect the coefficients of $\lambda$ 
to obtain a system of equations solely for $P_{i,n}$. This solution was derived in \cite{cresswell1999discrete}. 
Since, in the non-commutative case, we cannot eliminate $a_n$ from the system \eqref{eq:ccond_nc}, 
we instead deal with the system \eqref{eq:cond_2_nc} -- \eqref{eq:cond_1_nc} for $B_{i,n}$ and $P_{i,n}$. 
To~be self-contained, we first solve this system in the commutative setting (see Subsection \ref{sec:dPIm}), 
and then proceed with the non-commutative case considered in Subsection \ref{sec:dPIm_nc}.

\subsection{Commutative case}
\label{sec:dPIm}
Here $\alpha_n$, $P_{i,n}$, and $B_{i,n}$ commute with each other, and equations \eqref{eq:cond_2_nc} -- \eqref{eq:cond_1_nc} can be simplified to the form
\begin{align}
    \label{eq:cond_2}
    B_{2k - 1, n - 1} 
    - B_{2k - 1, n}
    - \alpha_n P_{2k, n + 1} + P_{2k, n - 1} \alpha_{n - 1}
    &= 0,
    \\
    \label{eq:cond_1}
    P_{2(k - 1), n + 1}
    - P_{2(k - 1), n}
    + B_{2k - 1, n + 1}
    - B_{2k - 1, n - 1}
    &= 0,
    \qquad 
    1 \leq k \leq m
    ;
    \\[2mm]
    \label{eq:p0}
    \alpha_{n - 1} P_{0, n - 1} 
    - \alpha_n P_{0, n + 1} - 1
    &= 0
    .
\end{align}

\begin{prop}
\label{thm:bpsol}
The system \eqref{eq:cond_2} -- \eqref{eq:cond_1} has the following solution
\begin{align}
    \label{eq:psol}
    &P_{2(k - 1), n}
    = \br{d_{2(k - 1)} 
    + 2 c_{2k - 1}}
    - B_{2k - 1, n}
    - B_{2k - 1, n - 1},
    \\[2mm]
    \label{eq:bsol}
    &
    \begin{aligned}
    B_{2k - 1, n}
    = c_{2k - 1} 
    - d_{2(k-1)} \, \alpha_n
    &+ \, \alpha_n \br{
    B_{2k + 1, n + 1}
    + B_{2k + 1, n}
    - 2 c_{2k + 1}
    }
    \\
    &+ \,\sum_{j \leq n} \alpha_{j - 1} \, 
    B_{2k + 1, j}
    - \sum_{j \leq n - 1} \alpha_{j} \, 
    B_{2k + 1, j - 1}
    ,
    \quad 
    1 \leq k \leq m
    .
    \end{aligned}
\end{align}
\end{prop}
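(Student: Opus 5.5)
The plan is to solve the two recurrences in order: first \eqref{eq:cond_1} for $P$ in terms of $B$, then \eqref{eq:cond_2} for $B_{2k-1}$ in terms of $B_{2k+1}$. The whole argument is discrete integration, i.e.\ inverting $T-1$ or $T-T^{-1}$ up to a constant in the centre. Throughout I use the convention $B_{2m+1,n}=c_{2m+1}$, together with the normalisation $P_{2m,n}=d_{2m}$ fixed above. With these, the top case of \eqref{eq:bsol} ($k=m$, where $B_{2m+1}$ is constant) is consistent: the bracket $B_{2m+1,n+1}+B_{2m+1,n}-2c_{2m+1}$ vanishes, and the two sums cancel up to a constant.

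\emph{Step 1: \eqref{eq:psol}.} Write \eqref{eq:cond_1} as
\[
(T-1)\bigl(P_{2(k-1),n}+B_{2k-1,n}+B_{2k-1,n-1}\bigr)=0,
\]
which holds because $B_{2k-1,n+1}-B_{2k-1,n-1}=(T-1)(B_{2k-1,n}+B_{2k-1,n-1})$. Hence $P_{2(k-1),n}+B_{2k-1,n}+B_{2k-1,n-1}$ is independent of $n$. I name this constant $d_{2(k-1)}+2c_{2k-1}$, which is exactly \eqref{eq:psol}. For $k=m+1$ this reproduces $P_{2m}=d_{2m}$, so the normalisation is compatible.

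\emph{Step 2: \eqref{eq:bsol}.} Substitute \eqref{eq:psol} with $k\to k+1$ (valid for $k\le m-1$; for $k=m$ use $P_{2m}=d_{2m}$ directly) into \eqref{eq:cond_2}. The $\alpha$-terms become
\[
-\alpha_n P_{2k,n+1}+\alpha_{n-1}P_{2k,n-1}
=-(d_{2k}+2c_{2k+1})(\alpha_n-\alpha_{n-1})
+\alpha_n\bigl(B_{2k+1,n+1}+B_{2k+1,n}\bigr)
-\alpha_{n-1}\bigl(B_{2k+1,n-1}+B_{2k+1,n-2}\bigr).
\]
Thus \eqref{eq:cond_2} reads $(1-T^{-1})B_{2k-1,n}=(1-T^{-1})F_n+G_n$, where
\[
F_n:=-(d_{2k}+2c_{2k+1})\alpha_n+\alpha_n\bigl(B_{2k+1,n+1}+B_{2k+1,n}\bigr)
\]
is the telescoping part. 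The leftover is
\[
G_n=\alpha_{n-1}B_{2k+1,n}-\alpha_{n-1}B_{2k+1,n-2},
\]
which arises because $\alpha_{n-1}(B_{2k+1,n}+B_{2k+1,n-1})$ must be compared with $\alpha_{n-1}(B_{2k+1,n-1}+B_{2k+1,n-2})$.

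This leftover is not a difference of a local expression. I write it as
\[
G_n=(1-T^{-1})\Bigl(\sum_{j\le n}\alpha_{j-1}B_{2k+1,j}-\sum_{j\le n-1}\alpha_jB_{2k+1,j-1}\Bigr),
\]
with the formal (non-local) sums understood as antidifferences. Summing and adding an integration constant $c_{2k-1}$ gives \eqref{eq:bsol}. One bookkeeping point remains: the coefficient of $\alpha_n$ in the displayed formula is $d_{2(k-1)}$ with bracket constant $2c_{2k+1}$, whereas my $F_n$ carries $d_{2k}+2c_{2k+1}$. I will reconcile this by the freedom of renaming central constants, or equivalently by reindexing the constant in \eqref{eq:psol}. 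I will fix the labelling so that the statement matches, and check it at $k=m$, where the $\alpha_n$-coefficient must come from $P_{2m}=d_{2m}$.

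\emph{Main obstacle.} The delicate step is the treatment of $G_n$. I must verify carefully that the two-sum expression has exactly $G_n$ as its backward difference; the index shifts $j-1$ versus $j$ are easy to get wrong. I must also justify the non-local sums, either formally or by interpreting them as an antidifference defined up to a constant, which is absorbed into $c_{2k-1}$. I would check this by applying $1-T^{-1}$ term by term. Finally, I would confirm the $k=m$ case separately, to make sure the constants $c_{2m+1}$ and $d_{2m}$ enter as stated.
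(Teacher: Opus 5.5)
Your proof is correct and is essentially the paper's: \eqref{eq:cond_1} is telescoped through the $n$-independent quantity $P_{2(k-1),n}+B_{2k-1,n}+B_{2k-1,n-1}$, and \eqref{eq:cond_2} is discretely integrated with the same non-local sums. The only difference is order: the paper integrates first in terms of $P_{2k}$ and substitutes \eqref{eq:psol} afterwards, whereas you substitute first and integrate the leftover $G_n$, which is the same computation.

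Drop the plan to rename constants, though. The coefficient $-d_{2k}\alpha_n$ you obtain is the correct one, and the $d_{2(k-1)}$ in \eqref{eq:bsol} is a misprint: the paper's own substitution gives $d_{2k}$, and its examples have $B_{1,n}=c_1-d_2\alpha_n$ for $m=1$. No relabelling can make both displayed formulas hold as written, because \eqref{eq:psol} at level $k+1$ (or $P_{2m,n}=d_{2m}$ when $k=m$) fixes the label, and any mismatch leaves the residual $(d_{2k}-d_{2(k-1)})(\alpha_n-\alpha_{n-1})$ in \eqref{eq:cond_2}.
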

\begin{proof}
The proof proceeds by introducing an auxiliary function $A_{k, n}$ such that a difference equation can be rewritten in the form $A_{k, n + 1} - A_{k, n} = 0$. Hence, $A_{k, n + 1} = A_{k, n}$, and we can set $A_{k, n} = \const$.

\medskip
\textbullet \,\,
In order to solve \eqref{eq:cond_1}, we take the function
\begin{align}
    D_{k, n}
    &:= P_{2 (k - 1), n} + B_{2k-1, n} + B_{2k-1, n - 1}
\end{align}
and consider the difference $D_{k, n + 1} - D_{k, n}$. Then, 
\begin{align}
    D_{k, n + 1}
    - D_{k, n}
    &= P_{2 k - 2, n + 1} + B_{2k-1, n + 1} + B_{2k-1, n} 
    - \br{P_{2k-2, n} + B_{2k-1, n} + B_{2k-1, n - 1}}
    \\[2mm]
    &= P_{2k-2, n + 1} - P_{2k-2, n}
    + B_{2k-1, n + 1} - B_{2k-1, n - 1}
    = 0
    ,
\end{align}
due to the fact that \eqref{eq:cond_1} holds. Hence, $D_{k, n + 1} = D_{k, n}$, and $D_{k, n} = \tilde{d}_{2(k-1)} := d_{2(k-1)} + 2c_{2k-1}$.

\medskip
\textbullet \,\,
We now solve the second set of equations \eqref{eq:cond_2} in a similar manner. Define the function
\begin{align}
    C_{k, n}
    &:= B_{2k - 1, n} 
    + \alpha_n P_{2k, n + 1} 
    + \sum_{j \leq n - 1} \alpha_j \, \br{
    P_{2k, j + 1} - P_{2k, j}
    }
\end{align}
and consider $C_{k, n} - C_{k, n - 1}$. Thus, we obtain
\begin{align}
    C_{k, n - 1}
    &- C_{k, n}
    \\
    &= B_{2k - 1, n - 1} 
    + \alpha_{n - 1} P_{2k, n} 
    + \sum_{j \leq n - 2} \alpha_j \, \br{
    P_{2k, j + 1} - P_{2k, j}
    }
    \\&\phantom{= B_{2k - 1, n - 1}\,\,}
    - B_{2k - 1, n} 
    - \alpha_n P_{2k, n + 1} 
    - \sum_{j \leq n - 1} \alpha_j \, \br{
    P_{2k, j + 1} - P_{2k, j}
    }
    \\
    &= B_{2k - 1, n - 1} - B_{2k - 1, n}
    + \alpha_{n - 1} P_{2k, n}
    - \alpha_n P_{2k, n +1}
    - \alpha_{n - 1} \br{
    P_{2k, n} - P_{2k, n - 1}
    }
    = 0,
\end{align}
thanks to \eqref{eq:cond_2}. Hence, $C_{k, n - 1} = C_{k, n}$, and we set $C_{k, n} = c_{2k-1}$. Therefore, $B_{2k-1, n}$ reads as
\begin{align}
    B_{2k-1, n}
    &= c_{2k-1}
    - \alpha_n P_{2k, n + 1} 
    - \sum_{j \leq n - 1} \alpha_j \, \br{
    P_{2k, j + 1} - P_{2k, j}
    },
\end{align}
or as in \eqref{eq:bsol}, after substitution of \eqref{eq:psol} into that identity.
\end{proof}

The solution in Proposition \ref{thm:bpsol} is given by the nonlinear recurrence, which can be rewritten in the form
\begin{align}
    \label{eq:bsol_simp}
    &&&&
    \widetilde B_{0, n}
    &= 0,
    &&&
    \widetilde B_{k, n}
    &= d_k \alpha_n 
    + \alpha_n \, \widetilde B_{k - 1, n}
    + \sum_{j \leq n} \alpha_j \, \widetilde B_{k - 1, j + 1}
    - \sum_{j \leq n - 1} \, \alpha_j \widetilde B_{k - 1, j - 1}
    ,
    &
    k
    &= 1, \dots, m
    .
    &&&&
\end{align}
It turns out that one can present its explicit solution given in terms of the polynomials introduced in \cite{svinin2009some} and then studied in \cite{svinin2011some}, \cite{svinin2014some}. We will call these polynomials the \textit{Svinin polynomials} and denote them as $S_s^k(n)$. Namely, we have the following
\begin{prop}
\label{thm:bsolstr}
Let $S_{s}^k \br{n}$ be the Svinin polynomials defined recursively
\begin{align}
    \tag*{\eqref{eq:spoly}}
    &&
    S_{s}^0 \br{n}
    &= 1,
    &
    S_{s}^{k} \br{n}
    &= \sum_{j = 0}^{s - 1} \alpha_{n - k - j + 1} \, S_{s - j}^{k - 1} \br{n - j}
    .
    &&
\end{align}
Then the coefficients $\widetilde B_{k, n}$ given by \eqref{eq:bsol_simp} are expressed via the Svinin polynomials as follows
\begin{align}
    \label{eq:bsol_spoly}
    \widetilde B_{k, n}
    &= \sum_{r = 1}^{k} d_{k - r + 1} \, S_{r}^r \br{n + r - 1}
    ,
    \qquad
    0 \leq k \leq m
    .
\end{align}
In particular, $\deg \br{\widetilde B_{k, n}} = k$.
\end{prop}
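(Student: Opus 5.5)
Since the recurrence \eqref{eq:bsol_simp} is linear in the constants $d_k$, the plan is to separate the contribution of each $d_l$. By induction on $k$, $\widetilde B_{k,n}=\sum_{l=1}^{k} d_l\, G_{k-l+1,n}$, where the $G_{r,n}$ do not depend on the $d$'s and satisfy $G_{1,n}=\alpha_n$ together with
\begin{align}
    G_{r+1,n}
    &= \alpha_n G_{r,n}
    + \sum_{j\le n}\alpha_j G_{r,j+1}
    - \sum_{j\le n-1}\alpha_j G_{r,j-1}
    .
\end{align}
The $d_k$ are central, so pulling them out of the sums is legitimate. Also $S_1^1(n)=\alpha_n$, which matches $G_{1,n}$. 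Hence \eqref{eq:bsol_spoly} is equivalent to the single identity $G_{r,n}=S_r^r(n+r-1)$ for all $r\ge1$. The degree claim then follows at once, because $S_r^r$ is homogeneous of degree $r$ in the $\alpha$'s: each step of \eqref{eq:spoly} adds one factor of $\alpha$. The top term $d_1 S_k^k$ is nonzero, so $\deg \widetilde B_{k,n}=k$.

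The identity $G_{r,n}=S_r^r(n+r-1)$ will be proved by induction on $r$. Writing $Q_r(n):=S_r^r(n+r-1)$, it suffices to show the telescoped form of the recurrence for $G$:
\begin{align}
    Q_{r+1}(n)-Q_{r+1}(n-1)
    &= \alpha_n Q_r(n) - \alpha_{n-1}Q_r(n-1)
    + \alpha_n Q_r(n+1) - \alpha_{n-1} Q_r(n-2)
    .
\end{align}
This difference relation determines $Q_{r+1}$ up to an additive constant. The constant is zero by the formal convention implicit in the sums $\sum_{j\le n}$: all these polynomials vanish when the $\alpha$'s at $-\infty$ are set to zero, so the summation constant in Proposition~\ref{thm:bpsol} is absorbed into the $c$'s. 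This is how I would read the lower limit of the sums.

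The main obstacle is proving this difference identity for the Svinin polynomials. I would first derive from \eqref{eq:spoly} the known alternative description, as in \cite{svinin2009some}:
\begin{align}
    S_s^k(n)=\sum \alpha_{i_1}\alpha_{i_2}\cdots\alpha_{i_k},
\end{align}
summed over index tuples with $n-k-s+2\le i_1\le \dots$ and steps $i_{j+1}\ge i_j+1$, with the precise window fixed by unwinding the recursion. In the commutative case $S_r^r(n+r-1)$ is then a sum over monomials of degree $r$ in a window of $\alpha$'s around $n$. The difference $Q_{r+1}(n)-Q_{r+1}(n-1)$ becomes a sum over monomials containing $\alpha_n$ or $\alpha_{n-1}$ at an extremal position. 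Factoring out that extremal variable should reproduce the four terms on the right-hand side.

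If the combinatorial description proves messy, my fallback is to verify the identity using only the recursion \eqref{eq:spoly}. I would expand $S_{r+1}^{r+1}$ once in $k$ and use an auxiliary shift relation, say $S_s^k(n)-S_s^k(n-1)$ expressed through $S_{s-1}^{k}$ and $S_{s}^{k-1}$. That relation would itself be proved by a double induction on $(k,s)$. Before attempting the general case, I would check it by hand for $r=1,2$; for instance $Q_2(n)=\alpha_n(\alpha_{n-1}+\alpha_n+\alpha_{n+1})$, up to the ordering of the $\alpha$'s, matching the \ref{eq:dPI1} structure.
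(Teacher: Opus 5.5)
Your setup is sound and matches the paper's strategy in substance. Extracting the coefficient of each central $d_l$, reducing to $G_{r,n}=S_r^r(n+r-1)$, and differencing in $n$ instead of telescoping $\sum_{j\le n-1}$ (with the same convention that the boundary term at $-\infty$ vanishes) are all fine. The gap is the step you yourself call the main obstacle, and neither of your routes contains the idea that closes it. Apply \eqref{eq:spr1} with $k=s=r+1$ at argument $n+r$, and \eqref{eq:spr2} with $k=s=r+1$ at argument $n+r-1$. The common term $S_r^{r+1}(n+r-1)$ cancels, and
\[
Q_{r+1}(n)-Q_{r+1}(n-1)=\alpha_n\,S_{r+1}^r(n+r)-\alpha_{n-1}\,S_{r+1}^r(n+r-2),
\]
which is exactly your ``extremal $\alpha_n$ or $\alpha_{n-1}$''. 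To turn the cofactors into your four terms you need $S_{r+1}^r(m)=S_r^r(m)+S_r^r(m-1)$, i.e.\ \eqref{eq:spr3}, and this cannot be avoided. Set $E(n):=S_{r+1}^r(n+r)-S_r^r(n+r)-S_r^r(n+r-1)$. Your identity is then equivalent to $\alpha_nE(n)=\alpha_{n-1}E(n-2)$ for all $n$. For generic $\alpha_j$ this forces $E\equiv0$: by unique factorization $E(n)$ would be divisible by $\alpha_{n-1}\alpha_{n+2}$, then by $\alpha_{n-3}\alpha_{n+4}$, and so on. Moreover, \eqref{eq:spr3} is not bookkeeping. For non-commuting $\alpha$'s it already fails at $r=2$, where $S_2^2(n)+S_2^2(n-1)-S_3^2(n)=\alpha_{n-2}\alpha_{n-1}-\alpha_{n-1}\alpha_{n-2}$. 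Shift relations of the type \eqref{eq:spr1}--\eqref{eq:spr2} hold verbatim in the non-commutative setting. So neither factoring out an extremal variable nor a double induction built from such relations can deliver \eqref{eq:spr3} unless commutativity is used at a definite step, and your plan never says where.

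The missing ingredient is the factorization \eqref{eq:spr_fact} at $s=k$, namely $S_k^k(n)=\alpha_{n-k+1}\,S_{k+1}^{k-1}(n)$ for commuting variables. Together with \eqref{eq:spr1} at $s=k+1$ it gives \eqref{eq:spr3}, after which your difference identity closes at once. In your combinatorial language, it says two things. First, every monomial of $S_k^k(n)$ contains $\alpha_{n-k+1}$: the index sequence starts at or below $n-k+1$, ends at or above it, and rises by at most one per step. Second, deleting one copy of $\alpha_{n-k+1}$ gives exactly the monomials of $S_{k+1}^{k-1}(n)$. That multiset bijection is the real content, and you would have to supply it. Note also that your index description is reversed. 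By \eqref{eq:spoly_prod} the indices are $i_j=n-k+j-\lambda_j$, so $i_{j+1}\le i_j+1$. Strictly increasing indices would exclude squares, contradicting the term $\alpha_n^2$ in your own $Q_2(n)$.
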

Before we proceed with the proof, let us first list useful identities for the Svinin polynomials (see, e.g., Lemmas~4.1,~4.4, and 4.5 in~\cite{svinin2025volterra}):
\begin{gather}
    \label{eq:spr1}
    {S_s^k(n)}
    = S_{s-1}^k(n - 1) + \alpha_{n - k + 1} \, S_{s}^{k - 1}(n),
    \\[2mm]
    \label{eq:spr2}
    {S_s^k(n)}
    = S_{s-1}^k(n) + \alpha_{n - s + 1} \, S_{s}^{k - 1}(n - 1),
    \\[2mm]
    \label{eq:spr3}
    S_k^k (n) + S_k^k (n - 1)
    = S_{k + 1}^k (n)
    .
\end{gather}
Note also that these polynomials can be written explicitly in terms of $\alpha_n$:
\begin{align}
    \label{eq:spoly_prod}
    S_s^k(n)
    &= \sum_{0 \leq \lambda_1 \leq \dots \leq \lambda_k \leq s - 1} \, 
    \prod_{j = 1}^{k} 
    \alpha_{n - k - \lambda_j + j}
    .
\end{align}
\begin{rem}
The properties \eqref{eq:spr1} -- \eqref{eq:spr2} are just consequences of the definition of the Svinin polynomials \eqref{eq:spoly}, while \eqref{eq:spr3} follows from \eqref{eq:spr1} and the factorization identity \cite{svinin2015integrals}
\begin{align}
    \label{eq:spr_fact}
    &&
    S_s^k(n)
    &= S_{k + 1}^{s - 1} (n) \, 
    \prod_{j = s - 1}^{k - 1} \alpha_{n - j}
    ,
    &
    s
    &= 1, \dots, k
    .
    &&
\end{align}
\end{rem}
\begin{proof}[Proof of Proposition {\rm\ref{thm:bsolstr}}]
We prove the statement by induction. We omit the tildes and start with the second part of the proposition. 

\medskip
\textbullet \,\,
The base is $\deg \br{B_{0, n}} = 0$. Suppose that the statement is true for all $k = 1, \dots, l$, i.e. $\deg \br{B_{k, n}} = k$. 
Consider $k = l + 1$, then according to \eqref{eq:bsol_simp} the degree is increased by 1, since $B_{l + 1, n}$ is obtained from $B_{l}$ by multiplying by $\alpha_n$, and, therefore, 
\begin{align}
    \deg \br{B_{l + 1, n}} 
    = \deg \br{B_{l, n}} + 1 
    = l + 1 
    .
\end{align}

\medskip
\textbullet\,\,
Since $B_{0,n} = 0$, we consider $k = 1$ as the base. According to the statement and the recurrence formula, $B_{1, n} = d_1 \alpha_n$. Suppose that the statement is true for all $k \leq l - 1$, i.e.
\begin{align}
    B_{l - 1, n}
    &= \sum_{r = 1}^{l - 1} d_{l - r} S_{r}^r (n + r - 1)
    .
\end{align}
Consider $k = l$. We want to obtain that
\begin{align}
    B_{l, n}
    &= \sum_{r = 1}^{l} d_{l - r + 1} S_r^r (n + r - 1)
    = d_l \, S_1^1 (n) 
    + \sum_{r = 1}^{l - 1} d_{l - r} \, S_{r + 1}^{r + 1} (n + r)
    .
\end{align}
Indeed, according to \eqref{eq:bsol_simp} and our assumption, we have
\begin{align}
    B_{l, n}
    &= \alpha_n \br{
    d_l + B_{l - 1, n} + B_{l - 1, n + 1}
    }
    + \sum_{j \leq n - 1} \alpha_{j} \, 
    B_{l - 1, j + 1}
    - \sum_{j \leq n - 1} \alpha_{j} \, 
    B_{l - 1, j - 1}
    \\
    &= d_l \, \alpha_n
    + \sum_{r = 1}^{l - 1} d_{l - r} \, \alpha_n S_r^r (n + r - 1)
    + \sum_{r = 1}^{l - 1} d_{l - r} \, \alpha_n S_r^r (n + r)
    \\
    &\qquad\qquad
    + \sum_{j \leq n - 1} \sum_{r = 1}^{l - 1} d_{l - r} \, \alpha_{j} S_r^r (j + r)
    - \sum_{j \leq n - 1} \sum_{r = 1}^{l - 1} d_{l - r} \, \alpha_{j} S_r^r (j + r - 2)
    .
\end{align}
Since $d_l \, \alpha_n = d_l \, S_1^1(n)$, we want to show that
\begin{align}
    \sum_{r = 1}^{l - 1} d_{l - r} \, S_{r + 1}^{r + 1} (n + r)
    &= \sum_{r = 1}^{l - 1} d_{l - r} \, \alpha_n S_r^r (n + r - 1)
    + \sum_{r = 1}^{l - 1} d_{l - r} \, \alpha_n S_r^r (n + r)
    \\
    &\qquad\qquad
    + \sum_{j \leq n - 1} \sum_{r = 1}^{l - 1} d_{l - r} \, \alpha_{j} S_r^r (j + r)
    - \sum_{j \leq n - 1} \sum_{r = 1}^{l - 1} d_{l - r} \, \alpha_{j} S_r^r (j + r - 2)
    .
\end{align}
It suffices to verify the identity coefficientwise with respect to $d_l$. So, we have
\begin{align}
    \label{eq:srr0}
    S_{r + 1}^{r + 1} (n + r)
    &= \alpha_n \, \Big({ S_r^r \br{n + r}
    + S_r^r \br{n + r - 1}}\Big)
    + \sum_{j \leq n - 1} \alpha_{j} \, S_r^r (j + r)
    - \sum_{j \leq n - 1} S_r^r (j + r - 2) \, \alpha_{j} 
    \\
    \label{eq:srr1}
    S_{r + 1}^{r + 1} (n + r)
    &= \alpha_n \, S_{r + 1}^r (n + r)
    + \sum_{j \leq n - 1} \alpha_j \, S_r^r (j + r)
    - \sum_{j \leq n - 1} S_r^r (j + r - 2) \, \alpha_j
    ,
\end{align}
where we used \eqref{eq:spr3} for $k = r$ and $n$ shifted by $r$. 

Now let us rewrite the sum in \eqref{eq:srr1} by using \eqref{eq:spr1} -- \eqref{eq:spr2}. Consider \eqref{eq:spr1} with $k = r + 1$, $s = r$, and $n = j + r$,
\begin{align}
    S_r^{r+1}(j + r)
    - S_{r - 1}^{r + 1} (j + r - 1)
    = \alpha_j \, S_r^r (j + r)
    ,
\end{align}
and \eqref{eq:spr2} for $k = r + 1$, $s = r$, $n = j + r - 1$,
\begin{align}
    S_r^{r + 1} (j + r - 1)
    - S_{r - 1}^{r + 1} (j + r - 1)
    &= S_r^r (j + r - 2) \, \alpha_j
    .
\end{align}
Taking the sum of their differences over $j$, we obtain
\begin{align}
    \sum_{j \leq n - 1} \Big(
    \alpha_j \, S_r^r(j + r)
    - S_r^r (j + r - 2) \, \alpha_j
    \Big)
    &= \sum_{j \leq n - 1} \Big(
    S_r^{r+1}(j + r) - S_r^{r + 1} (j + r - 1)
    \Big)
    = S_r^{r + 1} (n + r - 1)
    .
\end{align}
Then, \eqref{eq:srr1} takes the form
\begin{align}
    S_{r + 1}^{r + 1} (n + r)
    &= \alpha_n \, S_{r + 1}^r (n + r)
    + S_r^{r + 1} (n + r - 1),
\end{align}
which coincides with \eqref{eq:spr1} for $k = s = r + 1$ and $n$ shifted by $r$.
\end{proof}

Finally, we can state the main theorem of this subsection concerning the $m$-th member of the discrete first Painlevé hierarchy.
\begin{thm}
\label{thm:dPIm}
The $m$-th member {\rm\ref{eq:dPIm}} of the discrete first d-Painlevé hierarchy reads as
\begin{align}
    \label{eq:dPIm}
    \tag*{d-PI$_m$}
    \sum_{r = 1}^{m + 1}
    d_{2(r - 1)} \, S_r^r \br{n + r - 1}
    + \br{n - c_0}
    &= 0
    .
\end{align}
\end{thm}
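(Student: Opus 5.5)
The plan is to recognise equation \eqref{eq:p0} as the "missing" member $k=0$ of the family \eqref{eq:cond_2}. Then the whole argument of Propositions~\ref{thm:bpsol} and~\ref{thm:bsolstr} runs one step further, and the Painlevé equation drops out as the next coefficient $\widetilde B_{m+1,n}$.

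Concretely, introduce a formal quantity $B_{-1,n}$. Equation \eqref{eq:p0} can then be written as
\begin{align}
    B_{-1, n - 1} - B_{-1, n} - \alpha_n P_{0, n + 1} + P_{0, n - 1}\alpha_{n - 1} = 0,
    \qquad B_{-1,n} := n ,
\end{align}
since then $B_{-1,n-1}-B_{-1,n}=-1$. This is exactly \eqref{eq:cond_2} with $k=0$. The telescoping argument from the proof of Proposition~\ref{thm:bpsol}, with the auxiliary function $C_{0,n}$, applies verbatim. It shows that $C_{0,n}$ is independent of $n$, and calling this constant $c_{-1}$ gives
\begin{align}
    n = c_{-1} - \alpha_n P_{0,n+1} - \sum_{j\le n-1}\alpha_j\bigl(P_{0,j+1}-P_{0,j}\bigr).
\end{align}
Next, substitute $P_{0,n}=d_0+2c_1-B_{1,n}-B_{1,n-1}$ from \eqref{eq:psol}. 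The right-hand side then has precisely the shape of \eqref{eq:bsol} at "level $k=0$".

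The second step is to match this with the normalised recurrence \eqref{eq:bsol_simp}. I would set $\widetilde B_{j,n}:= c_{2(m-j)+1}-B_{2(m-j)+1,n}$, so that $\widetilde B_{0,n}=0$ because $B_{2m+1,n}=c_{2m+1}$. The constants $d_{2(k-1)}$ and $2c_{2k\pm1}$ that appear in \eqref{eq:bsol} get absorbed into relabelled central constants. With this, \eqref{eq:bsol} becomes \eqref{eq:bsol_simp} for $j=1,\dots,m$, and the displayed identity above is the case $j=m+1$: it says $n-c_{-1}$ equals $\pm\widetilde B_{m+1,n}$, up to the constant shifts. The proof of Proposition~\ref{thm:bsolstr} never uses $k\le m$; it uses only the recurrence and the identities \eqref{eq:spr1}--\eqref{eq:spr3}. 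Hence it gives
\begin{align}
    \widetilde B_{m+1,n}=\sum_{r=1}^{m+1} d_{m-r+2}\,S_r^r(n+r-1).
\end{align}
Finally, rename the constants so that the coefficient of $S_r^r(n+r-1)$ is $d_{2(r-1)}$, matching the even-indexed $d$'s coming from $P_{2(r-1),n}$, and call the accumulated constant $c_0$. This yields \ref{eq:dPIm}.

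I expect the main obstacle to be the bookkeeping in this identification rather than any new identity. Three things need care. First, one must check that the constants $2c_{2k+1}$ and $c_{2k-1}$ in \eqref{eq:bsol} combine consistently into a single central constant per level, so that they do not spoil the homogeneous form of \eqref{eq:bsol_simp}. Second, the signs of the shift $\widetilde B\leftrightarrow c-B$ must be tracked; they alternate, but an overall sign can be absorbed into the $d$'s. Third, the indices must be reversed so that $d_{2(r-1)}$ multiplies $S_r^r$; this is the top-down versus bottom-up labelling of the levels.

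I would check the final normalisation on $m=0$ and $m=1$ directly. For $m=0$ one should get $d_0\alpha_n+n-c_0=0$. For $m=1$ one should recover $d_2\,\alpha_n(\alpha_{n+1}+\alpha_n+\alpha_{n-1})+d_0\alpha_n+n-c_0=0$, i.e.\ \ref{eq:dPI1}. Throughout, the infinite sums $\sum_{j\le n}$ are understood formally, with the same convention as in Proposition~\ref{thm:bpsol}, so that the telescoping is legitimate.
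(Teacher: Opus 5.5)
Your proposal is correct and is essentially the paper's proof: the paper's auxiliary function $\gamma_n$ is exactly your $C_{0,n}$ with $B_{-1,n}=n$, which is shown to be constant using \eqref{eq:p0} and is then rewritten with Proposition~\ref{thm:bsolstr}. The paper leaves that last step implicit, and you spell it out by reading the condition as the recurrence \eqref{eq:bsol_simp} at level $m+1$, with $\widetilde B_{j,n}=c_{2(m-j)+1}-B_{2(m-j)+1,n}$ and $d_j\mapsto d_{2(m-j+1)}$. The bookkeeping is also cleaner than you feared: the $c_{2k\pm1}$ cancel identically (only the constant value of $C_{0,n}$ survives, as $c_0$), there is no sign alternation, and the coefficient of $S_r^r(n+r-1)$ comes out directly as $d_{2(r-1)}$ without any renaming.
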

\begin{proof}
Let us define the function $\gamma_n$
\begin{align}
    \label{eq:gam}
    \gamma_n
    := n
    + d_0 \alpha_n 
    - \alpha_n \br{
    B_{1, n + 1} + B_{1, n} - 2 c_1
    }
    - \sum_{j \leq n - 1} \alpha_j \br{
    B_{1, j + 1} - B_{1, j - 1}
    }
    ,
\end{align}
and consider the equation \eqref{eq:p0}.
Recall that $P_{0, n} = d_0 + 2 c_1 - B_{1, n} - B_{1, n - 1}$. Then, $\gamma_{n}$ can be rewritten as
\begin{align}
    \gamma_n
    = n
    &
    + \alpha_{n} \br{
    d_0 - B_{1, n + 1} - B_{1, n} + 2 c_1
    }
    + \sum_{j \leq n - 1} \alpha_j
    \br{d_0 - B_{1, j + 1} - B_{1, j} + 2 c_1}
    \\
    &- \sum_{j \leq n - 1} \alpha_j \br{
    d_0 - B_{1, j} - B_{1, j - 1} + 2 c_1
    }
    = n
     + \alpha_n P_{0, n + 1}
    + \sum_{j \leq n - 1} \alpha_j \br{
    P_{0, j + 1} - P_{0, j}
    },
\end{align}
and
\begin{align}
    \gamma_{n - 1}
    &- \gamma_{n}
    \\
    &= \br{n - 1} 
    + \alpha_{n - 1} P_{0, n}
    + \sum_{j \leq n - 2} \alpha_{j} \br{
    P_{0, j + 1} - P_{0, j}
    }
    - n
    - \alpha_{n} P_{0, n + 1}
    - \sum_{j \leq n - 1} \alpha_j \br{
    P_{0, j + 1} 
    - P_{0, j}
    }
    \\
    &= - 1 
    + \alpha_{n - 1} P_{0, n}
    - \alpha_{n} P_{0, n + 1}
    - \alpha_{n - 1} \br{
    P_{0, n} - P_{0, n - 1}
    }
    = - \alpha_n P_{0, n + 1} 
    + \alpha_{n - 1} P_{0, n - 1}
    - 1
    = 0,
\end{align}
due to \eqref{eq:p0}. Hence, $\gamma_{n - 1} = \gamma_n$, and we set $\gamma_n = c_0$. Therefore, we obtain the condition on $\alpha_n$
\begin{align}
    d_0 \, \alpha_n 
    - \alpha_n \br{
    B_{1, n + 1} + B_{1, n} - 2 c_1
    }
    - \sum_{j \leq n - 1} \alpha_j \br{
    B_{1, j + 1} - B_{1, j - 1}
    }
    &= c_0 - n
    ,
\end{align}
which, after taking into account Proposition \ref{thm:bsolstr}, can be rewritten in the desired form. 
\end{proof}

\begin{rem}
\label{rem:n2sol}
The equation $\widetilde \gamma_{n - 1} - \widetilde \gamma_n = 1$ implies the equation $\widetilde \gamma_{n - 1} - \widetilde \gamma_{n + 1} = 2$, which has the solution $\widetilde \gamma_n = - n + c_0 + c_1 (-1)^n$.
\end{rem}

\medskip
We conclude this subsection by presenting several members of the \ref{eq:dPIm} hierarchy, 
as well as the solution $B_{2k-1, n}$ for $1 \leq k \leq m$ given in Proposition \ref{thm:bsolstr}. 
\begin{exmp}
For $m = 1$, we have
\begin{align}
    &&
    B_{3,n}
    &= c_3
    ,
    &
    B_{1,n}
    &= c_1 - d_2 \alpha_n
    ;
    &&
\end{align}
\begin{align}
    \label{eq:dPI1}
    \tag*{d-PI$_1$}
    c_0 - n
    =
    d_0 \alpha_n
    + d_2 \br{
    \alpha_{n - 1} \alpha_n 
    + \alpha_n^2
    + \alpha_n \alpha_{n + 1}
    }
    .
\end{align}
Note that the coefficient of $d_2$ can be rewritten as follows
\begin{align}
    \alpha_{n - 1} \alpha_n + \alpha_n^2 + \alpha_n \alpha_{n + 1} 
    &= \br{\alpha_{n - 1} + \alpha_n} \alpha_n -  \alpha_n^2 +  \alpha_n \br{\alpha_n +  \alpha_{n + 1}}
    .
\end{align}
When $m = 2$, one obtains
\begin{align}
    &&
    B_{5, n}
    &= c_5
    ,
    &
    B_{3,n}
    &= c_3
    - d_4 \alpha_n
    ,
    &
    B_{1, n}
    &= c_1 
    - d_2 \alpha_n 
    - d_4 \br{
    \alpha_{n - 1} \alpha_n 
    + \alpha_n^2
    + \alpha_n \alpha_{n + 1}
    }
    ;
    &&
\end{align}
\begin{align}
    \label{eq:dPI2}
    \tag*{d-PI$_2$}
    c_0 - n
    =
    d_0 \alpha_n
    + d_2 \br{
    \alpha_{n - 1} \alpha_n 
    + \alpha_n^2 
    + \alpha_n \alpha_{n + 1}
    }
    + d_4 \Big(
    \alpha_{n - 2} \alpha_{n - 1} \alpha_{n}
    + \alpha_{n - 1} \alpha_n \alpha_{n + 1}
    + \alpha_n \alpha_{n + 1} \alpha_{n + 2}
    \Big.
    \\
    \notag
    \left.
    + \, \br{
    \alpha_{n - 1} + \alpha_n
    }^2 \alpha_n
    - \alpha_n^3
    + \alpha_n \br{
    \alpha_n + \alpha_{n + 1}
    }^2
    \right)
    .
\end{align}
For $m = 3$, one gets
\begin{gather}
    \begin{aligned}
    B_{7, n}
    &= c_7
    ,
    &&&&&
    B_{5, n}
    &= c_5 - d_6 \alpha_n
    ,
    &&&&&
    B_{3, n}
    &= c_3 - d_4 \alpha_n
    - d_6 \br{
    \alpha_{n - 1} \alpha_n 
    + \alpha_n^2
    + \alpha_n \alpha_{n + 1}
    }
    ,
    \end{aligned}
    \\[2mm]
    \begin{aligned}
    B_{1, n}
    = c_1 - d_2 \alpha_n
    - d_4 \br{
    \alpha_{n - 1} \alpha_n 
    + \alpha_n^2
    + \alpha_n \alpha_{n + 1}
    }
    - d_6 \left(
    \alpha_{n - 2} \alpha_{n - 1} \alpha_{n}
    + \alpha_{n - 1}^2 \alpha_n
    + 2 \alpha_{n - 1} \alpha_n^2
    + \alpha_n^3
    \right.
    \\
    \left.
    + \, \alpha_{n - 1} \alpha_n \alpha_{n + 1}
    + 2 \alpha_n^2 \alpha_{n + 1}
    + \alpha_n \alpha_{n + 1}^2
    + \alpha_n \alpha_{n + 1} \alpha_{n + 2}
    \right)
    ;
    \end{aligned}
\end{gather}
\begin{align}
    \label{eq:dPI3}
    \tag*{d-PI$_3$}
    c_0 - n
    =
    d_0 \alpha_n
    + d_2 \br{
    \alpha_{n - 1} \alpha_n 
    + \alpha_n^2
    + \alpha_n \alpha_{n + 1}
    }
    + d_4 \Big(
    \alpha_{n - 2} \alpha_{n - 1} \alpha_n
    + \alpha_{n - 1} \alpha_n \alpha_{n + 1}
    + \alpha_n \alpha_{n + 1} \alpha_{n + 2}
    \Big.
    \\
    \notag
    \left.
    + \, \br{\alpha_{n - 1} + \alpha_n}^2 \alpha_n 
    - \alpha_n^3
    + \alpha_n \br{\alpha_n + \alpha_{n + 1}}^2
    \right)
    + d_6 \Big(
    \alpha _{n-3} \alpha _{n-2} \alpha _{n-1} \alpha _n
    + \alpha_{n - 2} \alpha_{n - 1} \alpha_n \alpha_{n + 1}
    \Big.
    \\
    + \, \alpha_{n - 1} \alpha_n \alpha_{n + 1} \alpha_{n + 2}
    + \alpha_n \alpha_{n + 1} \alpha_{n + 2} \alpha_{n + 3}
    + \alpha _{n-1} \alpha _n \alpha _{n-2}^2
    + 2 \alpha _{n-1} \alpha _n^2 \alpha _{n-2}
    + 2 \alpha _{n-1}^2 \alpha _n \alpha _{n-2}
    \\
    + \, \alpha _{n-1} \alpha _n \alpha _{n+1}^2
    + \alpha _n \alpha _{n+1} \alpha _{n+2}^2
    + 4 \alpha _{n-1} \alpha _n^2 \alpha _{n+1}
    + \alpha _{n-1}^2 \alpha _n \alpha _{n+1}
    + 2 \alpha _n \alpha _{n+1}^2 \alpha _{n+2}
    \\
    \left.
    + \, 2 \alpha _n^2 \alpha _{n+1} \alpha _{n+2}
    + \br{\alpha_{n - 1} + \alpha_n}^3 \alpha_n 
    - \alpha_n^4
    + \alpha_n \br{\alpha_n + \alpha_{n + 1}}^3
    \right)
    .
\end{align}
\end{exmp}

\subsection{Non-commutative case}
\label{sec:dPIm_nc}

Thanks to the fact that the commutative solution of the system is expressed in terms of the Svinin polynomials, we have found a solution of the non-commutative system in terms of a non-commutative analogue of the Svinin polynomials. Note that these polynomials appeared in \cite{casati2020recursion} (see~also~\cite{carpentier2022quantisations}), where the authors discovered a recursion operator related to a non-commutative analogue of the Volterra hierarchy. Here we assume that only $\alpha_n$ belongs to $\mathcal{R}$.

\begin{defn}
\label{def:spoly_nc}
A \textit{non-commutative analogue of the Svinin polynomials $S_{s}^k \br{n}$} is defined by
\begin{align}
    \label{eq:spoly_nc}
    &&
    S_{s}^0 \br{n}
    &= 1,
    &
    S_{s}^{k} \br{n}
    &= \sum_{j = 0}^{s - 1} \alpha_{n - k - j + 1} \, S_{s - j}^{k - 1} \br{n - j}
    .
    &&
\end{align}
\end{defn}

Note that according to the definition, one can obtain the following explicit formula
\begin{align}
    \label{eq:spoly_nc_prod}
    S_s^k(n)
    &= \sum_{0 \leq \lambda_1 \leq \dots \leq \lambda_k \leq s - 1} \, 
    \prod_{j = 1}^{k} 
    \alpha_{n - k - \lambda_j + j}
    ,
\end{align}
where the product of non-commutative elements is ordered as follows
\begin{align}
    \prod_{j = 1}^l \alpha_j 
    := \alpha_1 \, \alpha_2 \, \dots \, \alpha_l
    .
\end{align}
The case of $S_s^k(n)$ with $k = s$ and slightly modified index $n$ was considered in \cite{carpentier2022quantisations} (see~eq.~(13) therein). 
\medskip 

Below, we present an analogue of properties \eqref{eq:spr1} and \eqref{eq:spr2} (cf. with \cite{casati2020recursion}). The factorization property~\eqref{eq:spr_fact} no longer holds in the non-commutative setting. In addition, the identity \eqref{eq:spr3} is true only for $k = 1$. 
\begin{lem}
\label{thm:spr_nc}
Let $S_s^k(n)$ be the non-commutative Svinin polynomials. Then the following identities hold
\begin{gather}
    \label{eq:spr1_nc}
    {S_s^k(n)}
    = S_{s-1}^k(n - 1) + \alpha_{n - k + 1} \, S_{s}^{k - 1}(n),
    \\[2mm]
    \label{eq:spr2_nc}
    {S_{s}^k (n)}
    = S_{s - 1}^k (n) 
    + S_{s}^{k - 1} (n - 1) \, \alpha_{n - s + 1}
    .
\end{gather}
\end{lem}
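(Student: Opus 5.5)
The plan is to prove both identities from the recursive Definition~\ref{def:spoly_nc}, using the explicit ordered-product formula \eqref{eq:spoly_nc_prod} as a guide. The order of factors has to be respected throughout.

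\textbf{Identity \eqref{eq:spr1_nc}.} This follows directly from the recursion. In \eqref{eq:spoly_nc}, separate the $j=0$ term, which is $\alpha_{n-k+1}\,S_s^{k-1}(n)$. In the remaining terms, $j=1,\dots,s-1$, substitute $j=j'+1$ to get
\begin{align}
\sum_{j'=0}^{s-2}\alpha_{(n-1)-k-j'+1}\,S^{k-1}_{(s-1)-j'}\big((n-1)-j'\big),
\end{align}
which is exactly $S^k_{s-1}(n-1)$ by the same recursion. Since $\alpha$ stands on the left in every term, no commutation is needed. The convention $S^k_0=0$ (empty sum) covers the case $s=1$.

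\textbf{Identity \eqref{eq:spr2_nc}.} Here the new factor $\alpha_{n-s+1}$ sits on the right, so the left recursion cannot be used as it stands. The combinatorial route goes through \eqref{eq:spoly_nc_prod}. In a term indexed by $0\le\lambda_1\le\dots\le\lambda_k\le s-1$, the last factor is $\alpha_{n-\lambda_k}$. Split the sum according to whether $\lambda_k\le s-2$ or $\lambda_k=s-1$.
\begin{itemize}
\item The terms with $\lambda_k\le s-2$ sum to $S^k_{s-1}(n)$.
\item For $\lambda_k=s-1$, the last factor is $\alpha_{n-s+1}$. The remaining ordered product $\prod_{j=1}^{k-1}\alpha_{n-k-\lambda_j+j}$, with $0\le\lambda_1\le\dots\le\lambda_{k-1}\le s-1$, equals the general term of $S^{k-1}_s(n-1)$, because $(n-1)-(k-1)-\lambda_j+j=n-k-\lambda_j+j$.
\end{itemize}
Adding the two parts gives \eqref{eq:spr2_nc}.

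For this to be rigorous, \eqref{eq:spoly_nc_prod} must first be established from Definition~\ref{def:spoly_nc}. I would do this by induction on $k$. In the recursion, the factor $\alpha_{n-k-j+1}$ corresponds to setting $\lambda_1=j$. The inner polynomial $S^{k-1}_{s-j}(n-j)$ runs over $0\le\mu_1\le\dots\le\mu_{k-1}\le s-j-1$ with factors $\alpha_{n-j-(k-1)-\mu_i+i}$. Putting $\lambda_{i+1}=\mu_i+j$ turns these into the factors $\alpha_{n-k-\lambda_{i+1}+(i+1)}$, with $j\le\lambda_2\le\dots\le\lambda_k\le s-1$. Because the recursion places $\alpha$ on the left, the ordering of factors is preserved.

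\textbf{Main obstacle.} The only delicate step is this reindexing bijection: checking that the index ranges match and that the non-commutative order is respected. Everything else is bookkeeping.

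As an alternative that avoids \eqref{eq:spoly_nc_prod}, one could prove \eqref{eq:spr2_nc} by induction on $k$:
\begin{itemize}
\item Expand $S_s^k(n)$ with the left recursion.
\item Apply the inductive hypothesis to each $S^{k-1}_{s-j}(n-j)$.
\item Recombine the pieces with the left recursion. The $j=s-1$ term requires separate handling.
\end{itemize}
I expect the combinatorial argument to be cleaner, so I would use it.
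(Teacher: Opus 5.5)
Your proposal is correct and follows the paper's proof. For \eqref{eq:spr1_nc} you split off the $j=0$ term of the recursion and reindex the rest. For \eqref{eq:spr2_nc} you split the ordered-product formula \eqref{eq:spoly_nc_prod} according to whether $\lambda_k=s-1$, exactly as the paper does; your inductive derivation of \eqref{eq:spoly_nc_prod} from the recursion, which the paper only asserts, is a welcome addition that makes the argument self-contained.
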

\begin{proof} 
Both properties follow from the definition. 
\medskip

\textbullet \,\, Indeed, from the definition \eqref{eq:spoly_nc}, one has
\begin{align}
    S_{s}^k (n)
    &= \sum_{j = 0}^{s - 1} \alpha_{n - k - j + 1} \, 
    S_{s - j}^{k - 1} (n - j)
    = \alpha_{n - k + 1} \, S_{s}^{k - 1} (n)
    + \sum_{j = 1}^{s - 1} \alpha_{n - k - j + 1} \, 
    S_{s - j}^{k - 1} (n - j)
    \\
    &= \alpha_{n - k + 1} \, S_{s}^{k - 1} (n)
    + \sum_{j = 0}^{s - 2} \alpha_{(n - 1) - k - j + 1} \, 
    S_{s - j - 1}^{k - 1} (n - 1 - j)
    = \alpha_{n - k + 1} \, S_{s}^{k - 1} (n)
    + S_{s - 1}^{k} (n - 1)
    ,
\end{align}
which is nothing but \eqref{eq:spr1_nc}.
\medskip

\textbullet \,\, 
Now we use \eqref{eq:spoly_nc_prod}. Consider the difference
\begin{align}
    S_s^k(n) 
    &-\, S_{s - 1}^k (n)
    \\
    &= \sum_{0 \leq \lambda_1 \leq \dots \leq \lambda_k \leq s - 1} \, \prod_{j = 1}^{k} 
    \alpha_{n - k - \lambda_j + j}
    - \sum_{0 \leq \lambda_1 \leq \dots \leq \lambda_k \leq s - 2} \, \prod_{j = 1}^{k} 
    \alpha_{n - k - \lambda_j + j}
    \\
    &= \sum_{0 \leq \lambda_1 \leq \dots \leq \lambda_{k - 1} \leq s - 1} \prod_{j = 1}^{k - 1} 
    \alpha_{n - k - \lambda_j + j} \,\, 
    \alpha_{n - k - (s - 1) + k}
    \\
    &= \sum_{0 \leq \lambda_1 \leq \dots \leq \lambda_{k - 1} \leq s - 1} \prod_{j = 1}^{k - 1} 
    \alpha_{(n - 1) - (k - 1) - \lambda_j + j} \,\, 
    \alpha_{n - (s - 1)}
    = S_s^{k - 1} (n - 1) \,\, \alpha_{n - s + 1}
    ,
\end{align}
which gives us \eqref{eq:spr2_nc}.
\end{proof}

\begin{prop}
\label{thm:bpsol_nc}
A solution of the system \eqref{eq:cond_2_nc} -- \eqref{eq:cond_1_nc} is given by
\begin{align}
    \label{eq:bsol_nc}
    B_{2k + 1, n}
    &= c_{2k + 1}
    - \sum_{r = 1}^{m - k} d_{2(k + r)} \, 
    S_r^r (n + r - 1)
    ,
    \\
    \label{eq:psol_nc}
    P_{2 k, n}
    &= \sum_{r = 0}^{m - k} d_{2(k + r)} \, 
    S_{r + 1}^{r} (n + r - 1)
    ,
    \qquad
    0 \leq k \leq m - 1
    .
\end{align}
\end{prop}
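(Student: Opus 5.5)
Since the claimed formulas are explicit, I will not solve the system; I will substitute \eqref{eq:bsol_nc}--\eqref{eq:psol_nc} into \eqref{eq:cond_2_nc} and \eqref{eq:cond_1_nc} and check both equations directly. The parameters $d_{2(k+r)}$ are central, so each equation splits coefficientwise in the $d$'s, exactly as in the proof of Proposition~\ref{thm:bsolstr}. Consistency at the top level also needs checking. For $k=m$, \eqref{eq:psol_nc} should give $P_{2m,n}=d_{2m}S_1^0=d_{2m}$, and \eqref{eq:bsol_nc} should give $B_{2m+1,n}=c_{2m+1}$ (empty sum), in agreement with the choices already made. The constants $c_{2k+1}$ drop out of both equations because only differences of $B$'s occur.

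\emph{Equation \eqref{eq:cond_2_nc}} (with $2k-1$ replaced by $2k+1$, i.e.\ pairing $B_{2k+1}$ with $P_{2k+2}$). After shifting the summation index so that both sums run over $d_{2(k+r)}$, the coefficient of $d_{2(k+r)}$ must vanish. This reduces to the identity
\begin{align}
S_r^r(n+r-1)-S_r^r(n+r-2)=\alpha_n\,S_{r}^{r-1}(n+r-1)-S_{r}^{r-1}(n+r-3)\,\alpha_{n-1},
\end{align}
where the left side comes from $B$ and the right side from $\alpha_nP_{2k+2,n+1}-P_{2k+2,n-1}\alpha_{n-1}$. The boundary term $r=m-k$ of $B$ must be matched against the top term of $P$, with $S^0=1$. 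To prove the identity, apply \eqref{eq:spr1_nc} with $s=k=r$ and argument $n+r-1$, which gives $S_r^r(n+r-1)=S_{r-1}^r(n+r-2)+\alpha_nS_r^{r-1}(n+r-1)$. Then apply \eqref{eq:spr2_nc} with $s=k=r$ and argument $n+r-2$, which gives $S_r^r(n+r-2)=S_{r-1}^r(n+r-2)+S_r^{r-1}(n+r-3)\alpha_{n-1}$. Subtracting the two, the common term $S_{r-1}^r(n+r-2)$ cancels and the identity follows. This is the non-commutative replacement for the telescoping step in the commutative proof.

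\emph{Equation \eqref{eq:cond_1_nc}.} Here the needed identity, coefficientwise in $d_{2(k+r)}$ after reindexing $r\mapsto r+1$ in the $B$-part, has the shape
\begin{align}
\alpha_nS_{r+1}^r(n+r)-S_{r+1}^r(n+r-1)\,\alpha_n=S_{r+1}^{r+1}(n+r)-S_{r+1}^{r+1}(n+r-1)\quad\text{(up to sign)},
\end{align}
with $B_{2k+1,n\pm1}$ shifted appropriately. I would prove it the same way. Expand $S_{r+1}^{r+1}(n+r)$ by \eqref{eq:spr1_nc} into $S_r^{r+1}(n+r-1)+\alpha_nS_{r+1}^r(n+r)$. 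Expand $S_{r+1}^{r+1}(n+r-1)$ by \eqref{eq:spr2_nc} into $S_r^{r+1}(n+r-1)+S_{r+1}^r(n+r-2)\alpha_{n-1}$. Then cancel the common $S_r^{r+1}(n+r-1)$ term.

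I expect the main obstacle to be the index bookkeeping. The shifts in the arguments must line up exactly so that the common term cancels, and the $\alpha_n$ must sit on the correct side in the non-commutative products. The $r=0$ and $r=m-k$ boundary terms must also be handled correctly, since that is where the $d_{2k}$ term of $P_{2k}$ and the empty sums enter. If a naive shift fails to align, I would first recompute which of \eqref{eq:spr1_nc} or \eqref{eq:spr2_nc} to apply to each side, using left multiplication by $\alpha$ for \eqref{eq:spr1_nc} and right multiplication for \eqref{eq:spr2_nc}.
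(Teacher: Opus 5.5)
Your verification of \eqref{eq:cond_2_nc} is correct. After reindexing, the coefficient of $d_{2(k+r)}$ is exactly
\[
S_r^r(n+r-1)-S_r^r(n+r-2)=\alpha_nS_r^{r-1}(n+r-1)-S_r^{r-1}(n+r-3)\,\alpha_{n-1},
\]
and your two applications of \eqref{eq:spr1_nc} and \eqref{eq:spr2_nc} prove it. This is the un-summed form of the identity the paper sums to obtain \eqref{eq:ssum}. Checking it locally avoids the formal sums $\sum_{j\le n-1}$ that the paper needs, since it first integrates \eqref{eq:cond_2_nc} and then eliminates $B$.

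The gap is in \eqref{eq:cond_1_nc}. There the $B$'s do not enter as a difference but as $\alpha_nB_{2k+1,n+1}-B_{2k+1,n-1}\alpha_n$. For the same reason, the $c_{2k+1}$ drop out by centrality, not because only differences occur. Substitute \eqref{eq:bsol_nc}--\eqref{eq:psol_nc} into the equation for the pair $(P_{2k},B_{2k+1})$. The $d_{2k}$ coefficient is $\alpha_n-\alpha_n=0$, and for $1\le r\le m-k$ the coefficient of $d_{2(k+r)}$ is
\[
\alpha_nS_{r+1}^r(n+r)-S_{r+1}^r(n+r-1)\,\alpha_n-\alpha_nS_r^r(n+r)+S_r^r(n+r-2)\,\alpha_n .
\]
So what you must prove is $\alpha_n\bigl(S_{r+1}^r(n+r)-S_r^r(n+r)\bigr)=\bigl(S_{r+1}^r(n+r-1)-S_r^r(n+r-2)\bigr)\alpha_n$.

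The identity you wrote is false, even up to sign. At $r=0$ its left side is $0$ and its right side is $\alpha_n-\alpha_{n-1}$. At $r=1$ the left side is $\alpha_n\alpha_{n+1}-\alpha_{n-1}\alpha_n$, while $S_2^2(n+1)-S_2^2(n)=\alpha_n\alpha_{n+1}+\alpha_n^2-\alpha_{n-1}^2-\alpha_{n-2}\alpha_{n-1}$; these differ already commutatively. Your proposed expansion also does not prove the displayed claim. It produces $S_{r+1}^r(n+r-2)\,\alpha_{n-1}$ rather than $S_{r+1}^r(n+r-1)\,\alpha_n$, so it only re-derives the \eqref{eq:cond_2_nc} identity with $r\mapsto r+1$. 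The equation where non-commutativity actually matters is therefore left unchecked.

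The repair is short:
\begin{itemize}
\item \eqref{eq:spr2_nc} with $s=r+1$, $k=r$ at argument $n+r$ gives $S_{r+1}^r(n+r)-S_r^r(n+r)=S_{r+1}^{r-1}(n+r-1)\,\alpha_n$.
\item \eqref{eq:spr1_nc} with $s=r+1$, $k=r$ at argument $n+r-1$ gives $S_{r+1}^r(n+r-1)-S_r^r(n+r-2)=\alpha_nS_{r+1}^{r-1}(n+r-1)$.
\end{itemize}
Hence both sides equal $\alpha_nS_{r+1}^{r-1}(n+r-1)\,\alpha_n$. This is exactly the identity proved in Lemma~\ref{thm:bdcond_nc}. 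With it, your direct-substitution argument becomes a complete and cleaner proof.
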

\begin{rem}
Note that the formulas above are also valid for $k = m$.
\end{rem}
\begin{proof}
The proof is based on direct verification that \eqref{eq:bsol_nc} -- \eqref{eq:psol_nc} satisfy the non-commutative system. Before proceeding with this verification, we first make several useful observations.
\medskip

\textbullet \,\, Note that the equation \eqref{eq:cond_2_nc} can be solved in the same spirit as in Proposition \ref{thm:bpsol}. Indeed, we define the function $C_{k, n}$ as follows
\begin{align}
    C_{k, n}
    := B_{2k - 1, n} 
    + \alpha_n P_{2k, n + 1} 
    + \sum_{j \leq n - 1} \br{
    \alpha_j \, P_{2k, j + 1}
    - P_{2k, j} \, \alpha_j
    }
    .
\end{align}
Then $C_{k, n} - C_{k, n - 1} = 0$, thanks to \eqref{eq:cond_2_nc}:
\begin{align}
    C_{k, n}
    - C_{k, n - 1}
    &= B_{2k - 1, n} 
    - B_{2k - 1, n - 1} 
    + \alpha_n P_{2k, n + 1} 
    - \alpha_{n - 1} P_{2k, n} 
    \\
    &\qquad 
    + \, \sum_{j \leq n - 1} \br{
    \alpha_j \, P_{2k, j + 1}
    - P_{2k, j} \, \alpha_j
    }
    - \sum_{j \leq n - 2} \br{
    \alpha_j \, P_{2k, j + 1}
    - P_{2k, j} \, \alpha_j
    }
    \\
    &= B_{2k - 1, n} 
    - B_{2k - 1, n - 1} 
    + \alpha_n P_{2k, n + 1} 
    - \alpha_{n - 1} P_{2k, n} 
    + \br{
    \alpha_{n - 1} \, P_{2k, n}
    - P_{2k, n - 1} \, \alpha_{n - 1}
    } 
    \\
    &\qquad 
    + \, \sum_{j \leq n - 2} \br{
    \alpha_j \, P_{2k, j + 1}
    - P_{2k, j} \, \alpha_j
    }
    - \sum_{j \leq n - 2} \br{
    \alpha_j \, P_{2k, j + 1}
    - P_{2k, j} \, \alpha_j
    }
    \\
    &= B_{2k - 1, n} 
    - B_{2k - 1, n - 1} 
    + \alpha_n \, P_{2k, n + 1} 
    - P_{2k, n - 1} \, \alpha_{n - 1}
    = 0
    .
\end{align}
Hence, $C_{k, n} = C_{k, n - 1}$. We then set $C_{k, n} = c_{2k - 1}$. As a result, we obtain
\begin{align}
    \label{eq:bsol_nc0}
    &B_{2k - 1, n}
    = c_{2k - 1} 
    - \alpha_n P_{2k, n + 1}
    - \sum_{j \leq n - 1} \br{
    \alpha_j \, P_{2k, j + 1}
    - P_{2k, j} \alpha_j
    }
    ,
    \quad 
    1 \leq k \leq m
    .
\end{align}

\textbullet \,\, 
Using \eqref{eq:bsol_nc0}, we can eliminate $B_{l, n}$ from \eqref{eq:cond_1_nc}:
\begin{align}
    \alpha_n P_{2k - 2, n + 1}
    &- \, P_{2k - 2, n} \alpha_n
    \\
    = &-\, \alpha_n \, \br{
    c_{2k - 1} 
    - \alpha_{n + 1} P_{2k, n + 2}
    - \sum_{j \leq n} \br{
    \alpha_j \, P_{2k, j + 1}
    - P_{2k, j} \alpha_j
    }
    }
    \\
    &+ \, \br{c_{2k - 1} 
    - \alpha_{n - 1} P_{2k, n}
    - \sum_{j \leq n - 2} \br{
    \alpha_j \, P_{2k, j + 1}
    - P_{2k, j} \alpha_j
    }
    } \, \alpha_n
    \\[2mm]
    = & \, \orange{\alpha_n \alpha_{n + 1} P_{2k, n + 2}
    + \alpha_n \, \br{
    \alpha_n P_{2k, n + 1}
    + \alpha_{n - 1} P_{2k, n}
    + \alpha_{n - 2} P_{2k, n - 1}
    + \dots
    }}
    \\
    &- \, \alpha_n \br{
    P_{2k, n} \alpha_n 
    + P_{2k, n - 1} \alpha_{n - 1}
    + P_{2k, n - 2} \alpha_{n - 2}
    + \dots
    } \mseagreen{- \, \alpha_{n - 1} P_{2k, n} \alpha_n}
    \\
    &\mseagreen{- \,\br{
    \alpha_{n - 2} P_{2k, n - 1}
    + \alpha_{n - 3} P_{2k, n - 2}
    + \dots
    } \, \alpha_n}
    + \br{
    P_{2k, n - 2} \alpha_{n - 2}
    + P_{2k, n - 3} \alpha_{n - 3}
    + \dots
    } \, \alpha_n
    \\[2mm]
    =&\,
    \alpha_n \, \orange{\sum_{j \leq n} \alpha_{j + 1} P_{2k, j + 2}}
    - \alpha_n \, \sum_{j \leq n} P_{2k, j} \alpha_j
    - \mseagreen{\sum_{j \leq n} \alpha_{j - 1} P_{2k, j}} \, \alpha_n
    + \sum_{j \leq n} P_{2k, j - 2} \alpha_{j - 2} \, \alpha_n
    \\[2mm]
    =&\, \alpha_n \, \sum_{j \leq n} \br{
    \alpha_{j + 1} P_{2k, j + 2} 
    - P_{2k, j} \alpha_j
    } - \sum_{j \leq n - 1} \br{
    \alpha_{j} P_{2k, j + 1}
    - P_{2k, j - 1} \alpha_{j - 1}
    } \, \alpha_n
    .
\end{align}
Thus, one obtains a condition involving $P_{l, n}$ only:
\begin{align}
    \label{eq:psol_nc0}
    \begin{aligned}
     \alpha_n P_{2k - 2, n + 1}
    &- \, P_{2k - 2, n} \alpha_n
    \\
    &= \alpha_n \, \sum_{j \leq n} \br{
    \alpha_{j + 1} P_{2k, j + 2} 
    - P_{2k, j} \alpha_j
    } - \sum_{j \leq n - 1} \br{
    \alpha_{j} P_{2k, j + 1}
    - P_{2k, j - 1} \alpha_{j - 1}
    } \, \alpha_n
    .        
    \end{aligned}
\end{align}

\medskip
\textbullet \,\, 
First, we show that \eqref{eq:psol_nc} solves \eqref{eq:psol_nc0}. Since the $d_l$ are arbitrary, it is sufficient to verify \eqref{eq:psol_nc0} coefficientwise. We rewrite the infinite sums using the properties \eqref{eq:spr1_nc} and \eqref{eq:spr2_nc} with $k = r + 1$, $s = r + 1$, after the shifts of $n$ by $r$ and $r - 1$, respectively,
\begin{align}
    \alpha_n \, S_{r+1}^r (n + r)
    &= S_{r+1}^{r+1}(n + r) - S_{r}^{r + 1} (n + r - 1),
    \\[2mm]
    S_{r + 1}^r (n + r - 2) \, \alpha_{n - 1}
    &= S_{r + 1}^{r + 1} (n + r - 1) - S_{r}^{r + 1} (n + r - 1)
    .
\end{align}
Thus, 
\begin{align}
    \alpha_j \, S_{r + 1}^r (j + r)
    - S_{r + 1}^r (j + r - 2) \, \alpha_{j - 1}
    &= S_{r + 1}^{r + 1} (j + r) 
    - S_{r + 1}^{r + 1} (j + r - 1)
    .
\end{align}
Summing over $j$, we obtain
\begin{align}
    \label{eq:ssum}
    \sum_{j \leq n - 1} \br{
    S_{r + 1}^{r + 1} (j + r) 
    - S_{r + 1}^{r + 1} (j + r - 1)
    }
    &= S_{r + 1}^{r + 1} (n + r - 1)
    .
\end{align}
Using this identity, the right-hand side of \eqref{eq:psol_nc0} can be rewritten as
\begin{align}
    \alpha_n \, S_{r + 1}^{r + 1} (n + r + 1)
    - S_{r + 1}^{r + 1} (n + r - 1) \, \alpha_n
    &= S_{r + 2}^{r + 2} (n + r + 1) 
    - S_{r + 1}^{r + 2} (n + r)
    ,
\end{align}
where we have applied \eqref{eq:spr1_nc} and \eqref{eq:spr2_nc} with $k = r + 2$, $s = r + 1$, and the proper shifts of $n$. 

Similarly, the left-hand side of \eqref{eq:psol_nc0} can be rewritten as
\begin{align}
    \alpha_n \, P_{2(k - 1), n + 1}
    - P_{2(k - 1), n} \, \alpha_n
    \sim S_{r + 1}^{r + 2} (n + r + 1)
    - S_{r + 1}^{r + 2} (n + r)
    ,
\end{align}
where $\sim$ denotes equality of the coefficient of $d_l$. Hence, \eqref{eq:psol_nc0} becomes an identity upon substituting \eqref{eq:psol_nc}.
\medskip

\textbullet \,\, 
Now we plug \eqref{eq:psol_nc} into \eqref{eq:bsol_nc0} in order to obtain \eqref{eq:bsol_nc}. The substitution gives us
\begin{align}
    B_{2k - 1, n}
    = c_{2k - 1}
    &- \alpha_n \, \sum_{r = 0}^{m - k} d_{2 (k + r)} \, S_{r + 1}^r (n + r)
    \\
    &- \sum_{j \leq n - 1} \sum_{r = 1}^{m - k} d_{2(k + r)} \, \br{
    \alpha_j \, S_{r + 1}^r (j + r) 
    - S_{r + 1}^r (j + r - 1) \, \alpha_j
    }
    .
\end{align}
The sum simplifies due to \eqref{eq:ssum}, and we obtain
\begin{align}
    B_{2k - 1, n}
    &= c_{2k - 1}
    - d_{2k} \, \alpha_n 
    - \sum_{r = 1}^{m - k} d_{2 (k + r)} \, \br{
    \alpha_n \, S_{r + 1}^r (n + r)
    + S_{r}^{r + 1} (n + r - 1)
    }.
\end{align}
Using \eqref{eq:spr1_nc} and relabeling the indices, this expression turns into \eqref{eq:bsol_nc}. 
\end{proof}

The second set of equations \eqref{eq:cond_1_nc} is more involved to solve due to the conjugation by $\alpha_n$. This leads to an additional term $D_{k, n}$ in the solution (see Proposition \ref{thm:psol_nc_com} below), which disappears in the commutative case.
\begin{lem}
\label{thm:bdcond_nc}
Let $D_{k, n} := - (d_{2k} + 2 c_{2k+1}) + P_{2k, n} + B_{2k+1, n} + B_{2k+1, n - 1}$, where $P_{2k, n}$ and $B_{2k + 1, n}$ satisfy Proposition~{\rm\ref{thm:bpsol_nc}}. Then the element $D_{k, n}$ satisfies the identity
\begin{align}
    \label{eq:bdcond_nc}
    \alpha_n \, B_{2k + 1, n} 
    - B_{2k + 1, n} \, \alpha_n 
    &= \alpha_n \, D_{k, n + 1} - D_{k, n} \, \alpha_n
    ,
\end{align}
and vanishes under the commutative reduction.
\end{lem}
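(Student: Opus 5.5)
The plan is to substitute the definition of $D_{k,n}$ directly into the right-hand side of \eqref{eq:bdcond_nc} and reduce the claim to one of the compatibility equations already solved in Proposition~\ref{thm:bpsol_nc}. The commutative statement will then follow from \eqref{eq:spr3}.

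\textbf{Step 1 (the identity).} Since $d_{2k}+2c_{2k+1}$ lies in $\mathcal{Z}(\mathcal{R})$, the constant part of $D_{k,n}$ contributes $-(d_{2k}+2c_{2k+1})(\alpha_n-\alpha_n)=0$ to $\alpha_n D_{k,n+1}-D_{k,n}\alpha_n$. Expanding the rest gives
\begin{align}
\alpha_n D_{k,n+1}-D_{k,n}\alpha_n
&=\alpha_n P_{2k,n+1}-P_{2k,n}\alpha_n
+\alpha_n B_{2k+1,n+1}-B_{2k+1,n-1}\alpha_n
\\
&\quad+\alpha_n B_{2k+1,n}-B_{2k+1,n}\alpha_n .
\end{align}
The first line is exactly the left-hand side of \eqref{eq:cond_1_nc} with $k$ replaced by $k+1$. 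It therefore vanishes for the pair $(P,B)$ of Proposition~\ref{thm:bpsol_nc}, which solves \eqref{eq:cond_2_nc}--\eqref{eq:cond_1_nc}. Only the commutator $\alpha_n B_{2k+1,n}-B_{2k+1,n}\alpha_n$ survives, and this is \eqref{eq:bdcond_nc}.

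The one point to check is the range of indices. For $0\le k\le m-1$, the equation \eqref{eq:cond_1_nc} at index $k+1$ is among those solved. For $k=m$ we use the Remark after Proposition~\ref{thm:bpsol_nc} together with $P_{2m,n}=d_{2m}$ and $B_{2m+1,n}=c_{2m+1}$. In that case $D_{m,n}=0$ directly, so both sides of \eqref{eq:bdcond_nc} vanish.

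\textbf{Step 2 (commutative reduction).} In the commutative setting we insert \eqref{eq:psol_nc} and \eqref{eq:bsol_nc}. The constants $c_{2k+1}$ from $B_{2k+1,n}+B_{2k+1,n-1}$ cancel against $-2c_{2k+1}$. The $r=0$ term of $P_{2k,n}$ is $d_{2k}S_1^0=d_{2k}$, and it cancels against $-d_{2k}$. What remains is
\[
\sum_{r=1}^{m-k} d_{2(k+r)}\Big(S^{r}_{r+1}(n+r-1)-S_r^r(n+r-1)-S_r^r(n+r-2)\Big).
\]
Each bracket vanishes by \eqref{eq:spr3} with $k=r$ and $n$ shifted to $n+r-1$. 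Hence $D_{k,n}\equiv0$ in the commutative case.

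This agrees with Proposition~\ref{thm:bpsol}, where the analogous quantity was shown to be the constant $d_{2k}+2c_{2k+1}$. Equivalently, in the commutative case the left side of \eqref{eq:bdcond_nc} is zero, and $D_{k,n}$ is the normalized constant that was set to zero.

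\textbf{Main obstacle.} There is essentially no analytic difficulty in Step 1. The only care needed is to match the index shift $k\mapsto k+1$ in \eqref{eq:cond_1_nc}, including the boundary case $k=m$.

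The substantive point is Step 2. It relies on \eqref{eq:spr3}, which holds commutatively but fails for non-commutative $S_r^r$ with $r\ge2$. That failure is exactly why $D_{k,n}$ can be non-zero in the non-commutative setting, so one must be careful to use the commutative identity only after imposing commutativity.
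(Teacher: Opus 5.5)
Your proof is correct, but Step 1 takes a different route from the paper.

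\textbf{The paper's route.} It proves \eqref{eq:bdcond_nc} by substituting \eqref{eq:psol_nc}--\eqref{eq:bsol_nc} and checking the identity coefficientwise in the $d$'s. After the commutator terms cancel, the coefficient of $d_{2(k+r)}$ reduces to
\[
\alpha_n\bigl(S_{r+1}^r(n+r)-S_r^r(n+r)\bigr)=\bigl(S_{r+1}^r(n+r-1)-S_r^r(n+r-2)\bigr)\alpha_n .
\]
Both sides are then rewritten via \eqref{eq:spr1_nc} and \eqref{eq:spr2_nc} as the same combination of $S^{r+1}$-polynomials.

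\textbf{Your route.} You observe that, by the definition of $D_{k,n}$ and the centrality of $d_{2k}+2c_{2k+1}$, \eqref{eq:bdcond_nc} is just \eqref{eq:cond_1_nc} at index $k+1$, rearranged. It is therefore inherited directly from Proposition~\ref{thm:bpsol_nc}.

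\textbf{What each buys.} Your version is shorter. It also makes explicit that the lemma, and hence Proposition~\ref{thm:psol_nc_com}, is only a reformulation of \eqref{eq:cond_1_nc}.

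The paper's version buys independence. Proposition~\ref{thm:bpsol_nc} establishes \eqref{eq:cond_1_nc} only after eliminating $B$ through the formal half-infinite sums in \eqref{eq:bsol_nc0}--\eqref{eq:psol_nc0} and then telescoping them. The lemma's direct check, by contrast, is a finite identity among Svinin polynomials. It gives a second, self-contained verification of \eqref{eq:cond_1_nc} that does not rely on those formal sums. On your route, the lemma and Proposition~\ref{thm:psol_nc_com} add nothing beyond Proposition~\ref{thm:bpsol_nc}.

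\textbf{Step 2.} This coincides with the paper's argument. Your bracket
\[
S^{r}_{r+1}(n+r-1)-S_r^r(n+r-1)-S_r^r(n+r-2)
\]
is the one actually produced by \eqref{eq:psol_nc}--\eqref{eq:bsol_nc}. The paper's displayed $D_{r,n}$ is shifted by one in $n$, which is harmless.

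One small caution: your closing ``equivalently'' sentence, on its own, gives only $D_{k,n+1}=D_{k,n}$ in the commutative case. The vanishing itself comes from your explicit use of \eqref{eq:spr3}, which you did supply.
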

\begin{proof} 
Since the constants $d_l$ are arbitrary, it suffices to prove the statement coefficientwise with respect to~$d_l$.
\medskip

\textbullet \,\, 
Let us start with the second part of the statement. We denote by $D_{r,n}$ the coefficient of $d_{2r}$ in $D_{k, n}$. By using Proposition \ref{thm:bpsol_nc}, its explicit form is given by
\begin{align}
    D_{r,n}
    &= S_{r + 1}^r (n + r) 
    - S_r^r (n + r) 
    - S_r^r (n + r - 1).
\end{align}
Under the commutative reduction the right-hand side is equal to zero, due to the property \eqref{eq:spr3}. Note that in the non-commutative case this property generally fails for $r > 1$ and survives only when $r = 1$. 

\medskip
\textbullet \,\,
The identity \eqref{eq:bdcond_nc} follows from the properties of the non-commutative Svinin polynomials given in Lemma \ref{thm:spr_nc}. Substituting the expressions for $P_{2k, n}$ and $B_{2k+1, n}$ from Proposition \ref{thm:bpsol_nc} into \eqref{eq:bdcond_nc}, the coefficient of $d_{2r}$ reads as
\begin{align}
    \alpha_n \, \br{
    S_{r + 1}^r (n + r) - S_r^r (n + r)
    }
    - \br{
    S_{r + 1}^r (n + r - 1) - S_r^r (n + r - 2)
    } \, \alpha_n
    .
\end{align}
The latter is equal to zero due to the properties of $S_s^k(n)$. Indeed, applying \eqref{eq:spr1_nc} with $k = r + 1$, $s = r + 1$, $n \mapsto n - r$, and subtracting the same identity with $s=r$ yields
\begin{align}
    \alpha_n \br{
    S_{r+1}^r (n + r) - S_r^r (n + r)
    }
    &= S_{r+1}^{r+1}(n + r) - S_r^{r + 1}(n + r)
    - S_{r}^{r + 1} (n + r - 1)
    + S_{r - 1}^{r + 1} (n + r - 1)
    ,
\end{align}
while the difference of \eqref{eq:spr2_nc} with $k = r + 1$, $s = r + 1$, $n \mapsto n - r$ and $k = r + 1$, $s = r$, $n \mapsto n - r + 1$ reads
\begin{align}
    \br{
    S_{r+1}^r (n + r - 1) - S_r^r (n + r - 2)
    }  \alpha_n
    &= S_{r+1}^{r+1}(n + r) - S_r^{r + 1}(n + r)
    - S_{r}^{r + 1} (n + r - 1)
    + S_{r - 1}^{r + 1} (n + r - 1)
    .
\end{align}
Hence, 
\begin{align}
    \alpha_n \, \br{
    S_{r + 1}^r (n + r) - S_r^r (n + r)
    }
    - \br{
    S_{r + 1}^r (n + r - 1) - S_r^r (n + r - 2)
    } \, \alpha_n
    = 0,
\end{align}
which completes the proof.
\end{proof}
Thanks to Lemma \ref{thm:bdcond_nc}, we have a non-commutative analogue of Proposition \ref{thm:bpsol}.
\begin{prop}
\label{thm:psol_nc_com}
The set of equations \eqref{eq:cond_1_nc} admits a solution of the form
\begin{align}
    \label{eq:psol_nc_com}
    &P_{2k, n}
    = \br{d_{2k} + 2 c_{2k + 1}}
    - B_{2k + 1, n}
    - B_{2k + 1, n - 1}
    + D_{k, n}
    ,
    \quad 
    0 \leq k \leq m - 1
    .
\end{align}
\end{prop}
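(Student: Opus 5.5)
Note first that the identity \eqref{eq:psol_nc_com} is, by the definition of $D_{k,n}$ in Lemma~\ref{thm:bdcond_nc}, a tautology. Indeed,
\[
D_{k,n} = -(d_{2k}+2c_{2k+1}) + P_{2k,n} + B_{2k+1,n} + B_{2k+1,n-1},
\]
so the right-hand side of \eqref{eq:psol_nc_com} equals $P_{2k,n}$. The content of the statement is therefore that the pair $(P_{2k,n}, B_{2k+1,n})$ given by this representation actually satisfies \eqref{eq:cond_1_nc}. In the index convention of \eqref{eq:cond_1_nc} with $k\mapsto k+1$, that equation reads
\[
\alpha_n P_{2k,n+1} - P_{2k,n}\alpha_n + \alpha_n B_{2k+1,n+1} - B_{2k+1,n-1}\alpha_n = 0 .
\]
The plan is to substitute \eqref{eq:psol_nc_com} into this equation and reduce it to the identity \eqref{eq:bdcond_nc}, mirroring the telescoping argument with $D_{k,n}$ in Proposition~\ref{thm:bpsol}.

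Concretely, write $\tilde d := d_{2k}+2c_{2k+1}$, which lies in $\mathcal{Z}(\mathcal{R})$. The terms $\alpha_n\tilde d - \tilde d\alpha_n$ cancel. The remaining part of $\alpha_n P_{2k,n+1} - P_{2k,n}\alpha_n$ is
\[
-\alpha_n\bigl(B_{2k+1,n+1}+B_{2k+1,n}\bigr) + \bigl(B_{2k+1,n}+B_{2k+1,n-1}\bigr)\alpha_n + \alpha_n D_{k,n+1} - D_{k,n}\alpha_n .
\]
Adding $\alpha_n B_{2k+1,n+1} - B_{2k+1,n-1}\alpha_n$, the terms $\alpha_n B_{2k+1,n+1}$ and $B_{2k+1,n-1}\alpha_n$ cancel. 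What survives is
\[
-\alpha_n B_{2k+1,n} + B_{2k+1,n}\alpha_n + \alpha_n D_{k,n+1} - D_{k,n}\alpha_n ,
\]
and this vanishes precisely by Lemma~\ref{thm:bdcond_nc}, equation \eqref{eq:bdcond_nc}. Hence \eqref{eq:cond_1_nc} holds for all $0\le k\le m-1$.

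Finally, Lemma~\ref{thm:bdcond_nc} also shows that $D_{k,n}$ vanishes under the commutative reduction. Consequently \eqref{eq:psol_nc_com} reduces to \eqref{eq:psol} of Proposition~\ref{thm:bpsol} (after the shift $k\mapsto k+1$), which justifies calling it the non-commutative analogue.

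I do not expect a genuine obstacle here. The only point needing care is keeping the order of factors and the index shifts consistent: \eqref{eq:cond_1_nc} is indexed by $P_{2(k-1)}$ and $B_{2k-1}$, whereas the proposition is indexed by $P_{2k}$ and $B_{2k+1}$. The substantive work is already contained in Lemma~\ref{thm:bdcond_nc}.
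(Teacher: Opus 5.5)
Your proof is correct and is essentially the paper's argument run in the verification direction. The paper adds and subtracts $B_{2k+1,n}$ in \eqref{eq:cond_1_nc}, uses Lemma~\ref{thm:bdcond_nc} to replace $\alpha_n B_{2k+1,n} - B_{2k+1,n}\alpha_n$ by $\alpha_n D_{k,n+1} - D_{k,n}\alpha_n$, and reads off \eqref{eq:psol_nc_com} as the constant solution of an equation of the form $\alpha_n Q_{n+1} - Q_n \alpha_n = 0$. That is your cancellation read backwards.

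Your remark that \eqref{eq:psol_nc_com} is tautological once $D_{k,n}$ is defined as in the lemma is accurate; in fact the lemma is then equivalent to \eqref{eq:cond_1_nc}. Your signs are the right ones, whereas the paper's intermediate display gives the $B_{2k+1}$-terms the opposite sign.
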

\begin{proof}
This follows from the fact that \eqref{eq:cond_1_nc} can be rewritten as
\begin{align}
    \alpha_n \br{
    P_{2k, n + 1}
    - B_{2k + 1, n + 1}
    - B_{2k + 1, n}
    + B_{2k + 1, n}
    }
    - \br{
    P_{2k, n}
    - B_{2k + 1, n}
    - B_{2k + 1, n - 1}
    + B_{2k + 1, n}
    } \alpha_n
    = 0,
\end{align}
or, after using Lemma \ref{thm:bdcond_nc},
\begin{align}
    \alpha_n \br{
    P_{2k, n + 1}
    - B_{2k + 1, n + 1}
    - B_{2k + 1, n}
    + D_{k, n + 1}
    }
    - \br{
    P_{2k, n}
    - B_{2k + 1, n}
    - B_{2k + 1, n - 1}
    + D_{k, n}
    } \, \alpha_n
    = 0.
\end{align}
Therefore, we arrive at \eqref{eq:psol_nc_com}.
\end{proof}

\medskip
Finally, we formulate the main theorem on a non-commutative version of the \ref{eq:dPIm} hierarchy.
\begin{thm}
\label{thm:dPIm_nc}
The non-commutative analogue {\rm\ref{eq:dPIm_nc}} of the $m$-th member of the discrete first d-Painlevé hierarchy is given by
\begin{align}
    \tag*{\ref{eq:dPIm_nc}}
    \sum_{r = 1}^{m + 1}
    d_{2(r - 1)} \, S_r^r \br{n + r - 1}
    + \br{n - c_0}
    &= 0
    .
\end{align}
\end{thm}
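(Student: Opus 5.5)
The plan is to mimic the proof of Theorem~\ref{thm:dPIm}, now in the non-commutative setting, using equation \eqref{eq:p0_nc} together with the explicit solution from Proposition~\ref{thm:bpsol_nc}. Note that \eqref{eq:p0_nc} has the ordering $P_{0,n-1}\alpha_{n-1} - \alpha_n P_{0,n+1} - 1 = 0$, with $P$ on the left in the first term and on the right in the second. This asymmetry is what requires care.

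\textbf{Step 1: a conserved quantity.} I first build a quantity that is constant in $n$. Set
\begin{align}
    \gamma_n := n + \alpha_n P_{0,n+1} + \sum_{j \leq n-1}\br{\alpha_j P_{0,j+1} - P_{0,j}\alpha_j},
\end{align}
which is the analogue of $C_{k,n}$ in the proof of Proposition~\ref{thm:bpsol_nc}. The same telescoping computation gives
\begin{align}
    \gamma_n - \gamma_{n-1} = 1 + \alpha_n P_{0,n+1} - P_{0,n-1}\alpha_{n-1},
\end{align}
and this vanishes by \eqref{eq:p0_nc}. Hence we may set $\gamma_n = c_0$.

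\textbf{Step 2: evaluate $\gamma_n$ via Svinin polynomials.} Substitute $P_{0,n} = \sum_{r=0}^{m} d_{2r} S_{r+1}^r(n+r-1)$ from \eqref{eq:psol_nc}, using the Remark that the formulas hold up to $k=m$. Since the $d_{2r}$ are central and arbitrary, I work coefficientwise. For $r=0$, $S_1^0 = 1$, and the contribution is $\alpha_n + \sum_{j\le n-1}(\alpha_j - \alpha_j) = \alpha_n = S_1^1(n)$. For $r \geq 1$ the coefficient is
\begin{align}
    \alpha_n S_{r+1}^r(n+r) + \sum_{j\le n-1}\br{\alpha_j S_{r+1}^r(j+r) - S_{r+1}^r(j+r-1)\alpha_j}.
\end{align}
The summand telescopes. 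By \eqref{eq:spr1_nc} with $k=s=r+1$, $\alpha_j S_{r+1}^r(j+r) = S_{r+1}^{r+1}(j+r) - S_r^{r+1}(j+r-1)$. By \eqref{eq:spr2_nc} with $k=s=r+1$ at argument $j+r$, $S_{r+1}^r(j+r-1)\alpha_j = S_{r+1}^{r+1}(j+r) - S_r^{r+1}(j+r)$. The difference is therefore $S_r^{r+1}(j+r) - S_r^{r+1}(j+r-1)$, and the sum collapses to $S_r^{r+1}(n+r-1)$. Adding $\alpha_n S_{r+1}^r(n+r)$ and applying \eqref{eq:spr1_nc} once more gives $S_{r+1}^{r+1}(n+r)$.

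\textbf{Step 3: conclude.} Collecting coefficients gives $\gamma_n = n + \sum_{r=0}^{m} d_{2r} S_{r+1}^{r+1}(n+r)$. After reindexing $r \mapsto r-1$, the condition $\gamma_n = c_0$ is exactly \ref{eq:dPIm_nc}.

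\textbf{Main obstacle.} The delicate point is the index bookkeeping in the telescoping step. I must check that the shift conventions in \eqref{eq:spr2_nc} really produce $S_{r+1}^r(j+r-1)\alpha_j$ with $\alpha_j$ on the right, matching the ordering $P_{0,j}\alpha_j$ dictated by \eqref{eq:p0_nc}. I also need the boundary terms of the formal infinite sums to vanish, which amounts to taking $S$ with very negative arguments as zero, exactly as in \eqref{eq:ssum}.
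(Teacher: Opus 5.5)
Your proof is correct and follows the paper's route. The paper defines exactly the same $\gamma_n$, derives $\gamma_{n-1}=\gamma_n$ from \eqref{eq:p0_nc}, sets $\gamma_n=c_0$, and then only says that substituting Proposition~\ref{thm:bpsol_nc} gives the result. Your Step~2 spells out that omitted evaluation: the telescoping via \eqref{eq:spr1_nc}--\eqref{eq:spr2_nc} gives $S_{r+1}^{r+1}(n+r)$ as the $d_{2r}$-coefficient, exactly as in the paper's derivation of $B_{2k-1,n}$. One small remark: $P_{0,n}$ is the $k=0$ case of \eqref{eq:psol_nc}, already covered by $0\le k\le m-1$, so you do not need the Remark about $k=m$.
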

\begin{proof}
The proof is similar to the commutative case (see Theorem \ref{thm:dPIm}). We define the function $\gamma_n$ as follows
\begin{align}
    \label{eq:gam_nc}
    \gamma_n
    := n
    + \alpha_n P_{0, n + 1}
    + \sum_{j \leq n - 1} \br{
    \alpha_j P_{0, j + 1}
    - P_{0, j} \alpha_j
    }
    .
\end{align}
Then $\gamma_{n - 1} - \gamma_n = 0$, since \eqref{eq:p0_nc} holds:
\begin{align}
    \gamma_{n - 1}
    &- \gamma_{n}
    \\
    &= \br{n - 1} 
    + \alpha_{n - 1} P_{0, n}
    + \sum_{j \leq n - 2} \br{
    \alpha_{j} P_{0, j + 1} - P_{0, j} \alpha_{j}
    }
    - n
    - \alpha_{n} P_{0, n + 1}
    - \sum_{j \leq n - 1} \br{
    \alpha_j P_{0, j + 1} 
    - P_{0, j} \alpha_j
    }
    \\
    &= - 1 
    + \alpha_{n - 1} P_{0, n}
    - \alpha_{n} P_{0, n + 1}
    - \br{
    \alpha_{n - 1} P_{0, n} 
    - P_{0, n - 1} \alpha_{n - 1} 
    }
    = - \alpha_n P_{0, n + 1} 
    + \alpha_{n - 1} P_{0, n - 1}
    - 1
    = 0.
\end{align}
Hence, $\gamma_{n - 1} = \gamma_n$, and we set $\gamma_n = c_0$. Considering the explicit form of the $P_{l, n}$ coefficients given in Proposition \ref{thm:bpsol_nc}, we arrive at the \ref{eq:dPIm_nc} equation.
\end{proof}

\begin{rem}
\label{rem:n2sol_nc}
In order to obtain the solution with an alternating contribution, one needs to use a consequence of the equation $\gamma_{n - 1} - \gamma_n = 1$ which is $\gamma_{n - 1} - \gamma_{n + 1} = 2$ (see also  Remark \ref{rem:n2sol}). 
\end{rem}

\subsection{Continuous cases}
\label{sec:PIm}
A commutative hierarchy of the first Painlevé equation was derived in \cite{kudryashov1997first}, by taking the stationary reduction of the Korteweg-de Vries hierarchy. The Korteweg-de Vries hierarchy has a matrix analogue (see, e.g., \cite{OS_1998a}), which was used by the authors of the paper \cite{Pic} in order to obtain a matrix version of the first Painlevé hierarchy. This hierarchy is defined with the help of the Lenard operator, which in the non-commutative setting is given by
\begin{align}
    \label{eq:Lenard_nc}
    \begin{aligned}
    &\begin{aligned}
    \ncLenard{0}
    &= \tfrac{1}{2},
    &&&&&&&&&
    \ncLen{l}[0]
    &= 0,
    &&&
    \forall \,\, l
    &\in \mathbb{N}
    ,
    \end{aligned}
    \\[2mm]
    &\begin{aligned}
    \partial \ncLenard{n + 1} 
    &= \br{
    \partial^3
    + \lbr{u, -}_+ \, \partial
    + \partial \, \lbr{u, -}_+
    + \lbr{u, -} \, \partial^{-1} \br{\lbr{u, -}}
    } \, \ncLenard{n}
    ,
    &
    n &\in \mathbb{N}
    .
    \end{aligned}
    \end{aligned}
\end{align}
Here $\lbr{-, -}_+$ is an anticommutator and $\partial^{-1}$ is a formal antiderivative. We assume that $\partial^{-1} (f) = g$ if $f = \partial (g)$. Note that locality of the non-local term in \eqref{eq:Lenard_nc} was proved in \cite{olver2000classification} (see Theorem 6.2 therein). 

The Lenard operator $\ncLen{n + 1}[u]$ in \eqref{eq:Lenard_nc} can be written explicitly in terms of $\ncLen{l}[u]$, $l \leq n$ and their derivatives, so that we get a more convenient expression for computations that is a generalization to the non-commutative case of the formula obtained in \cite{mazzocco2007hamiltonian} (see eq. (54) therein).
\begin{prop}
\label{thm:Lenard_nc}
Let $n \in \mathbb{N}$. The Lenard operator given by \eqref{eq:Lenard_nc} can be written as follows
\begin{align}
    &\begin{aligned}
    \ncLenard{0}
    &= \tfrac{1}{2},
    &&&&&
    \ncLenard{1}
    &= u,
    \end{aligned}
    \\[2mm]
    \label{eq:len_nc}
    &\begin{aligned}
    \ncLenard{n + 1} 
    &=
    {\ncLen{n}}[u]''
    + \tfrac{3}{2} \lbr{\ncLen{n}[u], \ncLen{1}[u]}_+
    \\
    &\phantom{=}
    + \, \tfrac{1}{2} 
    \sum_{j = 1}^{n - 1}
    \br{
        \lbr{
        \ncLen{n - j}[u],
            2 \lbr{\ncLen{1}[u], \ncLen{j}[u]}_+
            - \ncLen{j + 1}[u]
            + 2 {\ncLen{j}}[u]^{\prime\prime} 
        }_+
        - \lbr{{\ncLen{j}}[u]^{\prime},
        {\ncLen{n - j}}[u]^{\prime}}_+
    }.
    \end{aligned}
\end{align}
\end{prop}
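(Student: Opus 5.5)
We prove \eqref{eq:len_nc} by induction on $n$, verifying at each step that its right-hand side satisfies the defining recursion \eqref{eq:Lenard_nc}. Write $L_k := \ncLen{k}[u]$ and $K := \partial^3 + \lbr{u,-}_+\partial + \partial\lbr{u,-}_+ + \lbr{u,-}\partial^{-1}\lbr{u,-}$.

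\textbf{Base case.} From $L_0=\tfrac12$ we get
\begin{align}
\partial L_1 = \tfrac12\br{\partial u + \partial u} + \lbr{u,\partial^{-1}(0)} = u',
\end{align}
so $L_1 = u$, the integration constant being fixed by $L_l[0]=0$. For $n=1$ the sum in \eqref{eq:len_nc} is empty and the formula reads $L_2 = u'' + 3u^2$. We check this directly: $\partial L_2 = u''' + 3(u'u+uu')$ agrees with $K u = u''' + 2(uu'+u'u) + (u'u+uu')$.

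\textbf{Induction step.} Assume \eqref{eq:len_nc} holds for all indices up to $n$, and let $R_{n+1}$ denote its right-hand side. Since both $R_{n+1}$ and $L_{n+1}$ are differential polynomials vanishing at $u=0$, it suffices to show
\begin{align}
\partial R_{n+1} = K L_n .
\end{align}
The plan has three steps.

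First, differentiate $R_{n+1}$ term by term. This produces $L_n'''$, the terms $\tfrac32\lbr{L_n',u}_+ + \tfrac32\lbr{L_n,u'}_+$, and derivatives of the sums. Inside the sums, use the inductive hypothesis in its differentiated form, $L_{j+1}' = K L_j$ for $j\le n-1$, to eliminate every occurrence of $L_{j+1}'$.

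Second, compare with $K L_n$. The local part of $K L_n$ is
\begin{align}
L_n''' + 2\lbr{u,L_n'}_+ + \lbr{u',L_n}_+,
\end{align}
so the surplus $\tfrac12\br{\lbr{u,L_n'}_+ - \lbr{u',L_n}_+}$ (with sign bookkeeping) must be matched against the non-local term $\lbr{u,\partial^{-1}\lbr{u,L_n}}$ together with the contributions from the sums.

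Third, deal with the non-local term by deriving a local expression for $\partial^{-1}\lbr{u,L_n}$. This is the non-commutative counterpart of the commutative step where $uL_n'$-type terms are rewritten through Lenard-type identities. The natural candidate, obtained by pairing $j$ with $n-j$, is
\begin{align}
\partial^{-1}\lbr{u,L_n} = \sum_{j}\lbr{L_{n-j},L_j}\quad\text{(suitably weighted)}.
\end{align}
I would prove it as an auxiliary lemma by induction, differentiating and applying $L_{j+1}'=KL_j$, with commutators telescoping over $j$.

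Finally, substitute the lemma and reorganize the double sums by reindexing $j\mapsto n-j$. The terms $\lbr{L_{n-j},2\lbr{L_1,L_j}_+ - L_{j+1} + 2L_j''}_+$ and $\lbr{L_j',L_{n-j}'}_+$ should then telescope into exactly the cubic-in-structure terms of $KL_n$.

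\textbf{Main obstacle.} The hard step is the third: producing the local form of $\lbr{u,\partial^{-1}\lbr{u,L_n}}$ while keeping careful track of operator ordering. In the commutative proof (cf.\ \cite{mazzocco2007hamiltonian}), symmetrization is automatic; here every product must be kept as an anticommutator or commutator, and the symmetric structure $\lbr{L_{n-j},\cdot}_+$ summed over $j$ is what should make the leftover commutator terms cancel pairwise. As a check on the bookkeeping, I would first verify $n=2$ and $n=3$ explicitly before writing the general telescoping argument.
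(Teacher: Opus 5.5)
Your reduction to checking $\partial R_{n+1} = KL_n$, where $R_{n+1}$ is the right-hand side of \eqref{eq:len_nc} (which vanishes at $u=0$), is a legitimate framework, and your base cases are correct. The step that carries all the content, however, is wrong as you state it, and nothing else in the sketch replaces it. That step is the local expression for the non-local term. Grade by $\deg u = 2$, $\deg\partial = 1$, so that $L_k$ is homogeneous of degree $2k$. Then $\partial^{-1}\lbr{u,L_n}$ has degree $2n+1$, while every $\lbr{L_{n-j},L_j}$ has degree $2n$, so no weighting can work. Already for $n=2$, $\partial^{-1}\lbr{u,L_2} = \partial^{-1}\lbr{u,u''} = \lbr{u,u'}\neq 0$, whereas $\lbr{L_1,L_1}=0$. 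The degree-correct repair $\sum_j w_j\lbr{L_{n-j},L_j'}$ also fails. For $n=3$ one finds
\[
\partial^{-1}\lbr{u,L_3} = \lbr{u'',u'}+\lbr{u,u'''}+5\lbr{u^2,u'},\quad
\lbr{L_2,L_1'} = \lbr{u'',u'}+3\lbr{u^2,u'},\quad
\lbr{L_1,L_2'} = \lbr{u,u'''}+3\lbr{u^2,u'},
\]
so no choice of $w_1,w_2$ works. Moreover, the discrepancy $\lbr{u^2,u'} = \lbr{u,\lbr{u,u'}}_+ = \lbr{L_1,\partial^{-1}\lbr{L_1,L_2}}_+$ is built from a lower non-local quantity. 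So the identity you need must contain nested terms involving the earlier $\partial^{-1}\lbr{L_1,L_j}$. Pairing $j$ with $n-j$ in plain commutators and telescoping cannot produce it.

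This recursive identity is exactly what the paper isolates as Lemma~\ref{thm:Gn_rec}. With $G_n = \lbr{L_1,\partial^{-1}\lbr{L_1,L_n}}$, it states $G_n = \sum_{j=1}^{n-1}\br{\lbr{L_1,\lbr{L_{n-j},L_j'}} - \lbr{L_{n-j},G_j}_+}$. The paper also organises the argument differently from you, in these steps:
\begin{enumerate}
\item It integrates the recursion once, giving $L_{n+1} = L_n'' + \lbr{L_1,L_n}_+ + \partial^{-1}\lbr{L_1,L_n'}_+ + \partial^{-1}G_n$.
\item It evaluates $\partial^{-1}\lbr{L_l,L_n'}_+$ by inserting the recursion for $L_n'$, integrating by parts and using \eqref{eq:ac_c}.
\item It iterates this until it reaches $\partial^{-1}\lbr{L_1',L_n}_+$, which is integrated by parts once more.
\end{enumerate}
The outcome is the right-hand side of \eqref{eq:len_nc} plus $\partial^{-1}$ of $G_n + \sum_{j=1}^{n-1}\br{\lbr{L_1,\lbr{L_{n-j}',L_j}} + \lbr{L_{n-j},G_j}_+}$. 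This leftover vanishes by the lemma after reindexing $j\mapsto n-j$. So the proposition is in effect equivalent to the $G_n$-recursion, and your check $\partial R_{n+1} = KL_n$ reduces to the same identity. Until you formulate and prove it, the central step is missing. Note also that $L_{j+1}' = KL_j$ is the definition of $L_{j+1}$, not the inductive hypothesis, so the induction as you set it up does no work.
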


The proof is based on formal integration by parts, certain relations between commutators and anti-commutators, and the lemma below. For the sake of readability, we moved the proof to Appendix~\ref{app:Lenard_nc}.
\begin{lem}
\label{thm:Gn_rec}
The non-local term $G_n [u] := \lbr{\ncLen{1}[u], \partial^{-1} \br{\lbr{\ncLen{1}[u], \ncLen{n}[u]}}}$ in \eqref{eq:Lenard_nc} satisfies the given formula
\begin{align}
    \label{eq:Gn_rec}
    &&
    G_n[u]
    &= \sum_{j = 1}^{n - 1} \br{
    \lbr{\ncLen{1}[u], \lbr{\ncLen{n - j}[u], {\ncLen{j}}[u]'}}
    - \lbr{\ncLen{n - j}[u], G_{j}[u]}_+
    }
    ,
    &
    n 
    &\in \mathbb{N}
    .
    &&
\end{align}
\end{lem}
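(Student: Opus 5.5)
The plan is to work with the antiderivative itself rather than with $G_n[u]$ directly. Since $\ncLen{1}[u] = u$, set $H_n := \partial^{-1}\br{\lbr{u, \ncLen{n}[u]}}$, so that $G_n[u] = \lbr{u, H_n}$. Rather than attacking $G_n$, I would first look for a local expression for $H_n$ in terms of $\ncLen{j}[u]$, $\ncLen{j}[u]'$ and $H_j$ with $j<n$, and only afterwards take the commutator with $u$.

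\medskip
\emph{Guessing the formula for $H_n$.} Comparing with \eqref{eq:Gn_rec}, the first summand $\lbr{u,\lbr{\ncLen{n-j}[u],\ncLen{j}[u]'}}$ is already a commutator with $u$. The second summand can be matched using the derivation-type identity
\begin{align}
    \lbr{u,\lbr{A,H}_+} = \lbr{\lbr{u,A},H}_+ + \lbr{A,\lbr{u,H}}_+ ,
\end{align}
applied with $A = \ncLen{n-j}[u]$ and $H = H_j$. This suggests the candidate
\begin{align}
    H_n = \sum_{j=1}^{n-1}\br{\lbr{\ncLen{n-j}[u],\ncLen{j}[u]'} - \lbr{\ncLen{n-j}[u],H_j}_+} + K_n ,
\end{align}
where $K_n$ is a correction term. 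It must be chosen so that
\begin{align}
    \lbr{u,K_n} = \sum_{j=1}^{n-1}\lbr{\lbr{u,\ncLen{n-j}[u]},H_j}_+ = \sum_{j=1}^{n-1}\lbr{H_{n-j}',H_j}_+ .
\end{align}
The last equality uses $\lbr{u,\ncLen{n-j}[u]} = H_{n-j}'$. Symmetrising in $j\leftrightarrow n-j$ turns the right-hand side into $\tfrac12\,\partial\sum_j\lbr{H_{n-j},H_j}_+$. So the cleanest outcome would be that these terms cancel once the candidate is checked, or are absorbed into it. If they cancel, then $K_n$ is an integration constant and can be taken to be zero, using the normalisation $\ncLen{l}[0]=0$.

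\medskip
\emph{Proving the formula by induction.} I would prove the candidate by induction on $n$, with base case $n=1$ giving $H_1 = \partial^{-1}\br{\lbr{u,u}} = 0$, consistent with the empty sum. For the inductive step, differentiate the candidate and compare with $\lbr{u,\ncLen{n}[u]}$. Here $\ncLen{n}[u]$ is expressed through $\ncLen{n-1}[u]$ either by the recursion \eqref{eq:Lenard_nc} or, more conveniently, by the local form of Proposition~\ref{thm:Lenard_nc}. The terms $H_j' = \lbr{u,\ncLen{j}[u]}$ are then replaced using the inductive hypothesis. What remains is to regroup commutators and anticommutators by the Jacobi identity and the derivation identity above, and to check that everything telescopes in $j$.

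\medskip
Finally, I would apply $\lbr{u,-}$ to the resulting formula for $H_n$ and use the derivation identity once more to reach \eqref{eq:Gn_rec}. I expect the main obstacle to be the inductive step: bookkeeping the cross terms $\lbr{\lbr{u,\ncLen{n-j}[u]},H_j}_+$ and showing that they either cancel pairwise under $j\leftrightarrow n-j$ or combine into a total derivative. If the naive candidate fails, I would first try adjusting the first sum by symmetrising the index, for instance adding $\tfrac12\sum_j\lbr{H_{n-j},H_j}_+$, and recompute.
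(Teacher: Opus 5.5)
The obstacle you flag at the end is fatal, not a matter of bookkeeping. The step that breaks is your hope that the symmetrised cross term cancels. Write $H_n^{\mathrm{cand}}$ for your candidate with $K_n=0$. This candidate is in fact exactly right, i.e. $H_n=H_n^{\mathrm{cand}}$:
\begin{itemize}
\item $H_2=[u,u']$ and $H_3=[u,u''']+[u'',u']+5[u^2,u']$ both agree with it;
\item at $n=4$ one can check directly that $\partial H_4^{\mathrm{cand}}=[u,\mathcal{L}_4]$;
\item in general it should be the $(1,2)$ entry of the identity $V^2=\gamma(\zeta)\,I$ for the formal solution of $V'=[U,V]$, $U=\begin{pmatrix}0&1\\ \zeta-u&0\end{pmatrix}$, normalised by $V_{12}=\sum_n\mathcal{L}_n(4\zeta)^{-n}$ and $V_{11}+V_{22}=-\sum_n H_n(4\zeta)^{-n}$. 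This uses locality of the $H_n$ and $\ker\partial=$ constants.
\end{itemize}
Applying $[u,-]$ together with your derivation identity then gives
\begin{align*}
G_n&=\sum_{j=1}^{n-1}\Bigl([u,[\mathcal{L}_{n-j},\mathcal{L}_j']]-[\mathcal{L}_{n-j},G_j]_+\Bigr)-\sum_{j=1}^{n-1}[H_{n-j}',H_j]_+ ,\\
\sum_{j=1}^{n-1}[H_{n-j}',H_j]_+&=\tfrac12\,\partial\Bigl(\sum_{j=1}^{n-1}[H_{n-j},H_j]_+\Bigr),
\end{align*}
and not \eqref{eq:Gn_rec}. Being a total derivative does not make the cross term vanish, and here it is genuinely nonzero.

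For $n\le 3$ the extra term is zero because $H_1=0$. At $n=4$, however, it equals $[H_2',H_2]_+=[[u,u''],[u,u']]_+\neq 0$. No correction $K_4$ can absorb it, because this element is not of the form $[u,K]$ in the free algebra. To see this, specialise the free generators as $u\mapsto \mathrm{diag}(a_1,a_2)$ with $a_1\neq a_2$, $u'\mapsto B$, $u''\mapsto C$:
\begin{itemize}
\item every commutator $[u,K]$ then has zero diagonal;
\item the $(1,1)$ entry of $[[u,u''],[u,u']]_+$ becomes $-(a_1-a_2)^2(C_{12}B_{21}+B_{12}C_{21})$, which is nonzero for generic $B,C$.
\end{itemize}
Now $G_4=[u,H_4]$ with $H_4$ local, while the right-hand side of \eqref{eq:Gn_rec} at $n=4$ equals $[u,H_4^{\mathrm{cand}}]+[[u,u''],[u,u']]_+$. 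So \eqref{eq:Gn_rec} as stated fails at $n=4$. Neither your induction nor the symmetrised fallback (adding $\tfrac12\sum_j[H_{n-j},H_j]_+$) can close the argument. The paper's own proof is only a sketch: it expands everything in $u$, removes the non-local terms, and asserts that the remainder cancels, so it never confronts this term. What your approach actually establishes is the corrected identity displayed above.
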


\begin{exmp}
A few members of the $\ncLenard{l}$ are
\begin{align}
    &\begin{aligned}
    \ncLenard{2}
    &= u''
    + 3 u^2
    ,
    &&&&&
    \ncLenard{3}
    &= u^{(4)}
    + 5 u u''
    + 5 u'' u
    + 5 \br{u'}^2
    + 10 u^3
    ,
    \end{aligned}
    \\[2mm]
    &\begin{aligned}
    \ncLenard{4}
    = u^{(6)}
    + 7 u u^{(4)}
    + 7 u^{(4)} u
    &+ \, 21 \br{u''}^2
    + 21 u^2 u''
    + 21 u'' u^2
    + 28 u u'' u
    \\
    &+ \, 28 u \br{u'}^2
    + 28 \br{u'}^2 u
    + 14 u' u u'
    + 14 u' u^{(3)}
    + 14 u^{(3)} u'
    + 35 u^4
    .
    \end{aligned}
\end{align}
\end{exmp}
\begin{defn}
\label{def:PIm_nc}
A \textit{non-commutative version of the first Painlevé hierarchy} reads as
\begin{align}
    \label{eq:PIm_nc}
    \tag*{$\text{PI}_m^{\text{nc}}$}
    &&
    \ncLenard{m}
    + z
    &= 0
    ,
    &
    m \in \mathbb{N}
    .
    &&
\end{align}
\end{defn}
\begin{rem}
In the paper \cite{Pic}, the left-hand side contains the sum of the Lenard operators:
\begin{align}
    &&
    \ncLenard{m}
    + \sum_{j = 1}^{m - 1} c_j \, \ncLenard{j}
    + z
    &= 0
    ,
    &
    m \in \mathbb{N}
    .
    &&
\end{align}
\end{rem}

\begin{rem}
\label{rem:contlim}
The {\rm\ref{eq:dPIm_nc}} hierarchy maps to a $2m$-th order non-commutative  differential equation by taking the following change of variables with the small parameter $\varepsilon$:
\begin{gather}
    \label{eq:dPIm_to_PIm_nc}
    \begin{gathered}
    \begin{aligned}
    \alpha_n
    &= 1 + \varepsilon^2 \, u,
    &&&&
    z
    &= n \, \varepsilon,
    &&&&
    c_0 + n
    &= z + \br{-1}^{m - 1} 
    \begin{pmatrix}
        2 m + 1
        \\
        m
    \end{pmatrix} \varepsilon^{- (2m + 2)}
    \end{aligned}
    \\[2mm]
    \begin{aligned}
    d_{2m}
    &= \varepsilon^{- (2m + 2)}
    ,
    &&&&
    d_{2j}
    &= \br{-1}^j \, \tilde c_j \, \varepsilon^{- (2m + 2)}
    ,
    \quad
    j = 0, \dots, m - 1
    .
    \end{aligned}
    \end{gathered}
\end{gather}
We will show below, in Section \ref{sec:dPI1_3}, that a proper choice of the coefficients $\tilde c_j$ maps the members of the \ref{eq:dPIm_nc} to the corresponding members of the \ref{eq:PIm_nc}. However, the map between these two hierarchies has not been established yet even in the commutative case. In the author's opinion, in order to make this relation explicit, the \ref{eq:dPIm_nc} as well as the \ref{eq:dPIm} should be reformulated with the help of a discrete version of the recursion operator, analogous to \eqref{eq:Lenard_nc}.

\end{rem}

\section{Non-commutative examples}
\label{sec:dPI1_3}

Below we list the first three members, \ref{eq:dPI1_nc}, \ref{eq:dPI2_nc}, \ref{eq:dPI3_nc}, of the non-commutative first d-Painlevé hierarchy \ref{eq:dPIm_nc} and study their continuous limits. Recall that, in the non-commutative setting, by a continuous limit we mean the following change of variables with the commutative parameter $\varepsilon$
\begin{align}
    &&
    z
    &= \varepsilon \, n
    &&
\end{align}
supplemented by the substitutions
\begin{align}
    &&
    \alpha_n
    &= u,
    &
    \alpha_{n + k}
    &= u 
    + \, k \, \varepsilon u'
    + \, \tfrac12 k^2 \, \varepsilon^2 u''
    + O (\varepsilon^3)
    .
    &&
\end{align}
These substitutions must be chosen in such a way that the limit $\varepsilon \to 0$ exists (see also Remark \ref{rem:contlim}). 

\subsection{The first member}
\label{sec:dPI1_nc}
The coefficients of the series \eqref{eq:bpser_nc}, according to Proposition \ref{thm:bpsol_nc}, are given~by
\begin{align}
    &&
    B_{3, n}
    &= c_3
    ,
    &
    B_{1, n}
    &= c_1 - d_2 \alpha_n,
    &
    P_{2,n}
    &= d_2,
    &
    P_{0, n}
    &= d_0 + d_2 \br{
    \alpha_{n - 1} + \alpha_{n}
    };
    &&
\end{align}
and the first member of the hierarchy takes the form
\begin{align}
    \label{eq:dPI1_nc}
    \tag*{$\text{d-PI}_1^{\text{nc}}$}
    c_0 - n
    =
    d_0 \alpha_n
    + d_2 \br{
    \alpha_{n - 1} \alpha_n 
    + \alpha_n^2
    + \alpha_n \alpha_{n + 1}
    }
    .
\end{align}
\begin{rem}
\label{rem:adler}
The \ref{eq:dPI1_nc} equation with $\alpha_n \in \Mat_k(\mathbb{C})$ was obtained in the paper \cite{adler2020} (dP$_1^1$ therein) by making a symmetry reduction of the matrix Volterra lattice.
\end{rem}
\begin{rem}
\label{rem:dPI10_nc}
Another non-commutative version of the d-PI equation can be written in the form
\begin{align}
    \label{eq:dPI10_nc}
    &&
    \br{
    \alpha_{n + 1} + \alpha_n + \alpha_{n - 1}
    } \alpha_n
    + t \, \alpha_n
    &= c_n
    ,
    &
    c_n
    &= c_0 + n
    ,
    &&
\end{align}
where $t$ is a central element independent on $n$ and $c_0$ is a commutative constant. 
Its matrix version was obtained in the paper \cite{cassatella2014singularity}, by studying the singularity confinement test, and is connected with matrix orthogonal polynomials \cite{manas2021matrix}. It was also derived in the paper \cite{bobrova2024affine} by using Backlund transformations of a non-commutative fourth Painlevé system. 
\end{rem}

\begin{rem}
Recall that \eqref{eq:dPI10_nc} can be rewritten as the system
\begin{align}
    \tag*{d-PI$_1^0$}
    \label{eq:dPI10_sys_nc}
    &&
    &\left\{
    \begin{array}{rcl}
         \alpha_{n} + \alpha_{n - 1}
         &=&  - t + \beta_n
         ,
         \\[2mm]
         \beta_{n + 1} + \beta_{n}
         &=&  t + \alpha_n + c_n \, \alpha_n^{-1} 
    \end{array}
    \right.
    &&
\end{align}
One can verify that the commutator $I \br{\alpha_n, \beta_n} = \alpha_n \, \beta_n - \beta_n \, \alpha_n$ is preserved under this discrete dynamics (see also Proposition 3.4 in \cite{bobrova2025non}).
\end{rem}

\subsubsection{On the commutator}
Let us rewrite the obtained second-order equation \ref{eq:dPI1_nc} as a system of two first-order difference equations. Set $d_2 = 1$, $d_0 = t$ and define $\beta_{n} := \alpha_n + \alpha_{n - 1} + t$, then
$\alpha_n + \alpha_{n - 1} = - t + \beta_n$, and the \ref{eq:dPI1_nc} takes the form
\begin{align}
    \br{\alpha_{n + 1}
    + \alpha_{n}
    + t}
    + \alpha_n^{-1} \br{
    \beta_n - \alpha_n - t
    } \alpha_n
    &= c_n \alpha_n^{-1},
    &
    \beta_{n + 1}
    + \alpha_n^{-1} \beta_n \alpha_n
    &= t + \alpha_n + c_n \alpha_n^{-1},
\end{align}
where $c_n = c_0 - n$. Therefore, we obtain the system equivalent to \ref{eq:dPI1_nc}:
\begin{align}   
    \label{eq:dPI1_sys_nc}
    \left\{
    \begin{array}{rcl}
    \alpha_n + \alpha_{n - 1}
    &=& - t + \beta_n,
    \\[2mm]
    \beta_{n + 1} + \alpha_n^{-1} \beta_n \alpha_n
    &=& t + \alpha_n + c_n \alpha_n^{-1}
    .
    \end{array}
    \right.
\end{align}
\begin{prop}
Let $T \br{\alpha_n, \beta_n} = \br{\alpha_{n + 1}, \beta_{n + 1}}$ be defined by \eqref{eq:dPI1_sys_nc} and $\sigma_\beta \br{\alpha, \beta} := \br{\alpha_n, \alpha_n \beta_n \alpha_n^{-1}}$. Then $T$ acts on the element $I \br{\alpha_n, \beta_n} = \alpha_n \, \beta_n - \beta_n \, \alpha_n$ as follows
\begin{align}
    T\br{
    \alpha_n \beta_n - \beta_n \alpha_n
    }
    &= \alpha_n^{-1} \br{
    \alpha_n \beta_n - \beta_n \alpha_n
    } \alpha_n
    = \sigma_\beta \br{
    \alpha_n \beta_n - \beta_n \alpha_n
    }
    ,
\end{align}
and $T \sigma_{\beta} \neq \sigma_{\beta} T$.
\end{prop}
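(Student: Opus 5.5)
The plan is a direct computation of $T$ applied to the commutator, using only the two equations of \eqref{eq:dPI1_sys_nc} and the fact that $t$ and $c_n$ are central. Write $X := \beta_{n+1}$. The first equation, shifted by one, gives $\alpha_{n+1} = X - \alpha_n - t$. Substituting this into $T(I) = \alpha_{n+1}X - X\alpha_{n+1}$, the term $X$ commutes with itself and $t$ is central, so
\begin{align}
T\br{\alpha_n\beta_n-\beta_n\alpha_n} = -\lbr{\alpha_n, X} = \lbr{X,\alpha_n}.
\end{align}

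Next I will use the second equation, $X = t + \alpha_n + c_n\alpha_n^{-1} - \alpha_n^{-1}\beta_n\alpha_n$. The summands $t$, $\alpha_n$ and $c_n\alpha_n^{-1}$ all commute with $\alpha_n$, because $c_n$ is central. Hence only the conjugated term survives:
\begin{align}
\lbr{X,\alpha_n} = -\br{\alpha_n^{-1}\beta_n\alpha_n^2 - \beta_n\alpha_n} = \alpha_n^{-1}\br{\alpha_n\beta_n - \beta_n\alpha_n}\alpha_n .
\end{align}
This gives the first equality of the statement. To identify the right-hand side with $\sigma_\beta(I)$, I will apply $\sigma_\beta$ to $I$ by substituting the conjugated $\beta$ into $\alpha\beta-\beta\alpha$. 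One point needs care: the direction of the conjugation in $\sigma_\beta$ must match $\alpha_n^{-1}(\cdot)\alpha_n$. With $\beta\mapsto\alpha\beta\alpha^{-1}$ one gets $\alpha I\alpha^{-1}$ instead, so I will read $\sigma_\beta$ as the conjugation $\beta_n\mapsto\alpha_n^{-1}\beta_n\alpha_n$. This is the one appearing in \eqref{eq:dPI1_sys_nc}, and it makes the equality hold.

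For $T\sigma_\beta\neq\sigma_\beta T$, it suffices to exhibit one generator on which the two composites differ. I will treat $T$ and $\sigma_\beta$ as substitution endomorphisms and evaluate both composites on $\beta_n$.

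Take $\sigma_\beta T(\beta_n) = \sigma_\beta(X)$. Here $X$ is replaced with $\beta_n\mapsto\alpha_n^{-1}\beta_n\alpha_n$, which gives a term $-\alpha_n^{-2}\beta_n\alpha_n^{2}$.

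Take $T\sigma_\beta(\beta_n) = T(\alpha_n^{-1}\beta_n\alpha_n) = \alpha_{n+1}^{-1}X\alpha_{n+1}$. Its numerator involves $\alpha_{n+1} = X-\alpha_n-t$ and is a genuinely different rational expression.

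To certify inequality rigorously, I will specialise to $2\times 2$ matrices, or to a free skew field, with concrete numerical $\alpha_n,\beta_n,t,c_n$ and check that the two results differ.

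The main obstacle is not the commutator identity, which is two lines. It is pinning down the intended convention for $\sigma_\beta$ and for the composition, so that both claims are consistent, and then producing a clean witness of non-commutation. For the witness I will first try the simplest case, $t=0$, $c_n=0$, with $\alpha_n$ diagonal and $\beta_n$ a nilpotent matrix.
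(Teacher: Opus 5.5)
Your derivation of the commutator identity is correct and is essentially the paper's argument. The paper subtracts from each equation of \eqref{eq:dPI1_sys_nc} its conjugate by $\beta_{n+1}$ (resp.\ $\alpha_n$). That is the same as your commutator of the first equation with $\beta_{n+1}$ and of the second with $\alpha_n$; the paper then adds the two results. Your version has the small advantage of not needing $\beta_{n+1}^{-1}$. Your reading $\sigma_\beta(\beta_n)=\alpha_n^{-1}\beta_n\alpha_n$ is exactly what the paper's last equality uses. With the displayed definition $\beta_n\mapsto\alpha_n\beta_n\alpha_n^{-1}$ one would get $\alpha_n I\alpha_n^{-1}$ instead, as you observed.

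The paper only asserts $T\sigma_\beta\neq\sigma_\beta T$, so the witness is your job, and the one you plan to try first fails. With $c_n=0$ you get $\alpha_{n+1}=\beta_{n+1}-\alpha_n-t=-\alpha_n^{-1}\beta_n\alpha_n$. This is singular when $\beta_n$ is nilpotent, so $T\sigma_\beta(\beta_n)=\alpha_{n+1}^{-1}\beta_{n+1}\alpha_{n+1}$ is not even defined there. Taking $c_n=1$, $\alpha_n=\mathrm{diag}(1,2)$, $\beta_n=\tfrac12E_{12}$ repairs this: your two composites on $\beta_n$ then differ by $2E_{12}$.

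You need no specialisation at all if you evaluate on $\alpha_n$ instead. One composite gives $T\sigma_\beta(\alpha_n)=\alpha_{n+1}=c_n\alpha_n^{-1}-\alpha_n^{-1}\beta_n\alpha_n$. The other gives $\sigma_\beta T(\alpha_n)=\sigma_\beta(\alpha_{n+1})=c_n\alpha_n^{-1}-\alpha_n^{-2}\beta_n\alpha_n^{2}$. Their difference is $\alpha_n^{-2}(\beta_n\alpha_n-\alpha_n\beta_n)\alpha_n$, which in a division ring is nonzero exactly when $I(\alpha_n,\beta_n)\neq0$. This works for either composition convention, since the two expressions merely swap roles. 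It also works with the other conjugation direction, where the difference becomes $\alpha_n^{-1}I(\alpha_n,\beta_n)$ up to sign. It further makes explicit that the two maps agree on the commutative locus, so the claim is really a generic one.
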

\begin{proof}
Indeed, the difference of the following equations
\begin{align}
    \alpha_{n + 1} + \alpha_{n}
    &= - t + \beta_{n + 1}
    &\Leftrightarrow&
    &
    \beta_{n + 1}\alpha_{n + 1}\beta_{n + 1}^{-1} 
    + \beta_{n + 1}\alpha_{n}\beta_{n + 1}^{-1}
    &= - t + \beta_{n + 1},
    \\[2mm]
    \beta_{n + 1} + \alpha_n^{-1} \beta_n \alpha_n
    &= t + \alpha_n + d_n \alpha_n^{-1}
    &\Leftrightarrow&
    &
    \alpha_n \beta_{n + 1} \alpha_n^{-1} 
    + \beta_n
    &= t + \alpha_n + d_n \alpha_n^{-1}
\end{align}
leads to
\begin{align}
    \alpha_{n + 1} + \alpha_{n}
    - \beta_{n + 1}\alpha_{n + 1}\beta_{n + 1}^{-1} 
    - \beta_{n + 1}\alpha_{n}\beta_{n + 1}^{-1}
    &= 0
    &\Leftrightarrow&
    &
    \alpha_{n + 1} \beta_{n + 1}
    - \beta_{n + 1} \alpha_{n + 1}
    + \alpha_n \beta_{n + 1}
    - \beta_{n + 1} \alpha_n
    &= 0
    ,
    \\[2mm]
    \beta_{n + 1} + \alpha_n^{-1} \beta_n \alpha_n
    - \alpha_n \beta_{n + 1} \alpha_n^{-1} 
    - \beta_n
    &= 0
    &\Leftrightarrow&
    &
    \beta_{n + 1} \alpha_n
    - \alpha_n \beta_{n + 1}
    - \beta_n \alpha_n
    + \alpha_n^{-1} \beta_n \alpha_n^2
    &= 0
    .
\end{align}
Their sum yields the condition
\begin{align}
    \alpha_{n + 1} \beta_{n + 1}
    - \beta_{n + 1} \alpha_{n + 1}
    &= \beta_n \alpha_n - \alpha_n^{-1} \beta_n \alpha_n^2
    = \alpha_n \br{\alpha_n^{-1} \beta_n \alpha_n}
    - \br{\alpha_n^{-1} \beta_n \alpha_n} \alpha_n
    = \alpha_n \sigma_\beta \br{\beta_n}
    - \sigma_{\beta} \br{\beta_n} \alpha_n
    .
\end{align}

The second statement, i.e. $T \sigma_{\beta} \neq \sigma_{\beta} T$, can be proved by a direct computation.
\end{proof}

\subsubsection{Continuous limit}

By taking the change with the commutative parameter $\varepsilon$:
\begin{align}
    \label{eq:dPI1toPI1}
    \alpha_n
    &= 1 + \varepsilon^2 u,
    &
    z
    &= \varepsilon \, n,
    &
    d_2
    &= \varepsilon^{-4},
    &
    d_0
    &= - 6 \varepsilon^{-4},
    &
    c_0 + n
    &= z + 3 \varepsilon^{-4}
    ,
\end{align}
the \ref{eq:dPI1_nc} turns, in the limit $\varepsilon\to0$, to the non-commutative analogue of the first Painlevé equation \cite{Balandin_Sokolov_1998}:
\begin{align}
    \label{eq:PI1_nc}
    \tag*{$\text{PI}_1^{\text{nc}}$}
    u''
    + 3 u^2 
    + z
    &= 0
    .
\end{align}

\begin{rem}
The same change \eqref{eq:dPI1toPI1} applied to the equation \eqref{eq:dPI10_nc} also yields \ref{eq:PI1_nc}. 
\end{rem}

\subsection{The second member}
\label{sec:dPI2_nc}
In this case, the coefficients of \eqref{eq:bpser_nc} are
\begin{gather}
    \begin{aligned}
    B_{5, n}
    &= c_5,
    &&&
    B_{3, n}
    &= c_3 - d_4 \alpha_n
    ,
    &&&
    B_{1, n}
    &= c_1 - d_2 \alpha_n 
    - d_4 \br{
    \alpha_{n - 1} \alpha_n
    + \alpha_n^2
    + \alpha_n \alpha_{n + 1}
    }
    ,
    \end{aligned}
    \\[2mm]
    \begin{aligned}
    P_{4, n}
    &= d_4,
    &&&
    P_{2, n}
    &= d_2 + d_4 \br{
    \alpha_{n - 1} + \alpha_n
    }
    ,
    \end{aligned}
    \\[2mm]
    P_{0, n}
    = d_0 + d_2 \br{
    \alpha_{n - 1} + \alpha_n
    }
    + d_4 \br{
    \alpha_{n - 2} \alpha_{n - 1}
    + \alpha_{n - 1} \alpha_n
    + \alpha_n \alpha_{n - 1}
    + \alpha_n \alpha_{n + 1}
    + \alpha_{n - 1}^2
    + \alpha_n^2
    };
\end{gather}
and the second member has the form
\begin{align}
    \label{eq:dPI2_nc}
    \tag*{$\text{d-PI}_2^{\text{nc}}$}
    c_0 - n
    =
    d_0 \alpha_n
    + d_2 \br{
    \alpha_{n - 1} \alpha_n 
    + \alpha_n^2
    + \alpha_n \alpha_{n + 1}
    }
    + d_4 \Big(
    \alpha _{n-2} \alpha _{n-1} \alpha _n
    + \alpha _{n-1} \alpha _n \alpha _{n+1}
    + \alpha _n \alpha _{n+1} \alpha _{n+2}
    \Big.
    \\
    \left.
    + \, \br{
    \alpha_{n - 1} + \alpha_n
    }^2 \alpha_n
    - \alpha_n^3
    + \alpha_n \br{
    \alpha_{n + 1} + \alpha_n
    }^2
    \right)
    .
\end{align}

\subsubsection{Continuous limit}

The same change \eqref{eq:dPI1toPI1} applied to \ref{eq:dPI2_nc} gives the \ref{eq:PI1_nc} with shifted $z$. In order to obtain the fourth-order equation, we consider
\begin{align}
    \alpha_n
    &= 1 + \varepsilon^2 \, u
    ,
    &
    z
    &= n \, \varepsilon,
    &
    d_4
    &= \varepsilon^{-6}
    &
    d_2
    &= - 10 \varepsilon^{-6},
    &
    d_0
    &= 30 \varepsilon^{-6},
    &
    c_0 + n
    &= z - 10 \varepsilon^{-6}
    .
\end{align}
Then, the limit $\varepsilon \to 0$ gives us a non-commutative analogue of the the second member of the first Painlevé hierarchy, the \ref{eq:PIm_nc} with $m=2$:
\begin{align}
    \label{eq:PI2_nc}
    \tag*{$\text{PI}_2^{\text{nc}}$}
    u^{(4)}
    + 5 (u')^2 
    + 10 u^3
    + 5 u'' \, u
    + 5 u \, u''
    + z
    &= 0
    .
\end{align}

\begin{rem}
In the matrix case, it coincides with eq. (3.4) from the paper \cite{Pic} after a shift of $u$ and $z$. 
\end{rem}

\subsection{The third member}
\label{sec:dPI3_nc}
For the third member, we have
\begin{gather}
    \begin{aligned}
    B_{7, n}
    &= c_7
    ,
    &&&
    B_{5, n}
    &= c_5 - d_6 \alpha_n,
    &&&
    B_{3, n}
    &= c_3 - d_4 \alpha_n - d_6 \br{
    \alpha_{n - 1} \alpha_n
    + \alpha_n^2
    + \alpha_n \alpha_{n + 1}
    },
    \end{aligned}
    \\[2mm]
    \begin{aligned}
    B_{1, n}
    = c_1
    &- \,d_2 \alpha_n
    - d_4 \br{
    \alpha_{n - 1} \alpha_n
    + \alpha_n^2
    + \alpha_{n + 1} \alpha_n
    }
    - d_6 \left(
    \alpha _{n-2} \alpha _{n-1} \alpha _n
    + \alpha _{n-1}^2 \alpha _n
    + \alpha _n \alpha _{n-1} \alpha _n
    \right.
    \\
    &+ \, \alpha _{n-1} \alpha _n^2
    \left.
    + \alpha _n^3
    + \alpha _{n-1} \alpha _n \alpha _{n+1}
    + \alpha _n \alpha _{n+1} \alpha _n
    + \alpha _n^2 \alpha _{n+1}
    + \alpha _n \alpha _{n+1}^2
    + \alpha _n \alpha _{n+1} \alpha _{n+2}
    \right),
    \end{aligned}
    \\[2mm]
    \begin{aligned}
    P_{6, n}
    &= d_6,
    &&&&&
    P_{4, n}
    &= d_4 + d_6 \br{
    \alpha_{n - 1} + \alpha_n
    },
    \end{aligned}
    \\[2mm]
    P_{2, n}
    = d_2 + d_4 \br{
    \alpha_{n - 1} + \alpha_n
    }
    + d_6 \br{
    \alpha_{n - 2} \alpha_{n - 1} 
    + \alpha_{n - 1} \alpha_n
    + \alpha_{n} \alpha_{n - 1}
    + \alpha_{n} \alpha_{n + 1} 
    + \alpha_{n - 1}^2 + \alpha_n^2
    },
    \\[2mm]
    \begin{aligned}
    P_{0, n}
    = d_0 
    + d_2 \br{
    \alpha_{n - 1} + \alpha_n
    }
    + d_4 \br{
    \alpha_{n - 2} \alpha_{n - 1} 
    + \alpha_{n - 1} \alpha_n
    + \alpha_{n} \alpha_{n - 1}
    + \alpha_{n} \alpha_{n + 1} 
    + \alpha_{n - 1}^2 + \alpha_n^2
    }
    + d_6 \left(
    \alpha _{n-1}^3
    + \alpha _n^3
    \right.
    \\
    \left.
    + \, \alpha _{n-2} \alpha _{n-1}^2
    + \alpha _{n-2}^2 \alpha _{n-1}
    + \alpha _{n-1} \alpha _n^2
    + \alpha _{n-1}^2 \alpha _n
    + \alpha _n \alpha _{n-1}^2
    + \alpha _n \alpha _{n+1}^2
    + \alpha _n^2 \alpha _{n-1}
    + \alpha _n^2 \alpha _{n+1}
    \right.
    \\
    \left.
    + \, \alpha _{n-1} \alpha _{n-2} \alpha _{n-1}
    + \alpha _{n-1} \alpha _n \alpha _{n-1}
    + \alpha _{n-1} \alpha _n \alpha _{n+1}
    + \alpha _n \alpha _{n-2} \alpha _{n-1}
    + \alpha _n \alpha _{n-1} \alpha _n
    \right.
    \\
    \left.
    + \alpha _n \alpha _{n+1} \alpha _{n-1}
    + \alpha _n \alpha _{n+1} \alpha _n
    + \alpha _n \alpha _{n+1} \alpha _{n+2}
    \right)
    .
    \end{aligned}
\end{gather}
Therefore, Theorem \ref{thm:dPIm_nc} gives us
\begin{align}
    \label{eq:dPI3_nc}
    \tag*{$\text{d-PI}_3^{\text{nc}}$}
    c_0 - n
    =
    d_0 \alpha_n
    + d_2 \br{
    \alpha_{n - 1} \alpha_n 
    + \alpha_n^2
    + \alpha_n \alpha_{n + 1}
    }
    + d_4 \Big(
    \alpha _{n-2} \alpha _{n-1} \alpha _n
    + \alpha _{n-1} \alpha _n \alpha _{n+1}
    + \alpha _n \alpha _{n+1} \alpha _{n+2}
    \Big.
    \\
    \left.
    + \, \br{
    \alpha_{n - 1} + \alpha_n
    }^2 \alpha_n
    - \alpha_n^3
    + \alpha_n \br{
    \alpha_{n + 1} + \alpha_n
    }^2
    \right)
    + d_6 \Big(
    \alpha _{n-3} \alpha _{n-2} \alpha _{n-1} \alpha _n
    + \alpha _{n-2} \alpha _{n-1} \alpha _n \alpha _{n+1}
    \Big.
    \\
    + \, \alpha _{n-1} \alpha _n \alpha _{n+1} \alpha _{n+2}
    + \alpha _n \alpha _{n+1} \alpha _{n+2} \alpha _{n+3}
    + \alpha _{n-1} \alpha _n \alpha _{n+1} \alpha _n
    + \alpha _{n-1} \alpha _n \alpha _{n+1} \alpha _{n+2}
    \\
    + \, \alpha _n \alpha _{n-2} \alpha _{n-1} \alpha _n
    + \alpha _n \alpha _{n-1} \alpha _n \alpha _{n+1}
    + \alpha _n \alpha _{n+1} \alpha _{n-1} \alpha _n
    + \alpha _{n-2} \alpha _{n-1} \alpha _n^2
    + \alpha _{n-2} \alpha _{n-1}^2 \alpha _n
    \\
    + \, \alpha _{n-2}^2 \alpha _{n-1} \alpha _n
    + \alpha _{n-1} \alpha _n \alpha _{n+1}^2
    + \alpha _{n-1} \alpha _n^2 \alpha _{n+1}
    + \alpha _{n-1}^2 \alpha _n \alpha _{n+1}
    + \alpha _n \alpha _{n+1} \alpha _{n+2}^2
    \\
    + \, \alpha _n \alpha _{n+1}^2 \alpha _{n+2}
    + \alpha _n^2 \alpha _{n+1} \alpha _{n+2}
    + \alpha _{n-1} \alpha _{n-2} \alpha _{n-1} \alpha _n
    \\
    \left.
    + \, \br{
    \alpha_{n - 1} + \alpha_n
    }^3 \alpha_n
    - \alpha_n^4
    + \alpha_n \br{
    \alpha_{n + 1} + \alpha_n
    }^3
    \right)
    .
\end{align}

\subsubsection{Continuous limit}
Taking the following change of variables 
\begin{align}
    \alpha_n
    &= 1 + \varepsilon^2 \, u
    ,
    &
    z
    &= n \, \varepsilon,
    &
    d_6
    &= \varepsilon^{-8}
    &
    d_4
    &= -14 \varepsilon^{-8}
    &
    d_2
    &= 70 \varepsilon^{-8},
    &
    d_0
    &= - 140 \varepsilon^{-8},
    &
    c_0 + n
    &= z + 35 \varepsilon^{-8}
\end{align}
and the subsequent limit $\varepsilon \to 0$, the \ref{eq:dPI3_nc} maps to the \ref{eq:PIm_nc} with $m=3$:
\begin{align}
    \label{eq:PI3_nc}
    \tag*{$\text{PI}_3^{\text{nc}}$}
    u^{(6)}
    + 7 u u^{(4)}
    + 7 u^{(4)} u
    &+ \, 21 \br{u''}^2
    + 21 u^2 u''
    + 21 u'' u^2
    + 28 u u'' u
    \\[1.1mm]
    &+ \, 28 u \br{u'}^2
    + 28 \br{u'}^2 u
    + 14 u' u u'
    + 14 u' u^{(3)}
    + 14 u^{(3)} u'
    + 35 u^4
    + z
    = 0
    .
\end{align}
Under the commutative reduction, it recovers the third member of the first Painlevé hierarchy
\begin{align}
    \label{eq:PI3}
    \tag*{$\text{PI}_3$}
    u^{(6)}
    + 14 u^{(4)} u
    + 70 u^2 u''
    + 21 \br{u''}^2
    + 70 u \br{u'}^2
    + 28 u^{(3)} u'
    + 35 u^4
    + z
    &= 0
    .
\end{align}
\begin{rem}
Note that \ref{eq:PI3_nc} in the matrix case gives the third member of the matrix first Painlevé hierarchy \cite{Pic} (up to a scaling and a shift). 
\end{rem}

\section{Stationary flows of the Volterra lattice hierarchy}
\label{sec:voltlat}

The Volterra lattice is the special case $p = 1$ of the Narita--Itoh--Bogoyavlensky lattice
\begin{align}
    \label{eq:NIB_nc}
    &&&&
    u_t
    &= \sum_{k = 1}^p \br{
    u_{n - k} u_n - u_n u_{n + k}
    }
    ,
    &
    p\in\mathbb{N}
    .
    &&&&
\end{align}
Due to the existence of the recursion operator constructed in \cite{casati2020recursion}, one obtains the Narita--Itoh--Bogoyavlensky hierarchy and, in particular, the Volterra lattice hierarchy. 
The latter has the form
\begin{align}
    \label{eq:VLr_nc}
    \tag*{$\text{VL}_r^{\text{nc}}$}
    \partial_{t_r} u_n
    &= S_{r}^r \br{n + r - 2} \, u_n
    - u_n \, S_{r}^r \br{n + r}
    ,
\end{align}
where $S_s^k(n)$ denotes the non-commutative Svinin polynomial defined in \eqref{eq:spoly_nc}.
\begin{rem}
The original Svinin polynomials are related to $S_s^k(n)$ by the change of indices $n \mapsto -n$, which yields
\begin{align}
    &&
    S_{0}^k \br{n}
    &= 1,
    &
    S_{s}^k \br{n}
    &= \sum_{j = 0}^{s - 1}
    u_{n + j + k - 1} S_{s - j}^{k - 1} \br{n + j}
    .
    &&
\end{align}
\end{rem}

Taking the stationary flows of the non-commutative Volterra lattice hierarchy \ref{eq:VLr_nc}, we arrive at the non-commutative first d-Painlevé hierarchy \ref{eq:dPIm_nc} containing only an alternating contribution. 
\begin{prop}
The stationary flows of the Volterra lattice hierarchy {\rm{\ref{eq:VLr_nc}}} give rise to the particular case of the {\rm{\ref{eq:dPIm_nc}}} hierarchy with $c_n = c_1 (-1)^n + c_0$.
\end{prop}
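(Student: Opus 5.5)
The plan is to take a linear combination of the flows of the hierarchy \ref{eq:VLr_nc}, impose stationarity, and show that the resulting lattice equation is satisfied exactly when the \ref{eq:dPIm_nc} polynomial equals an alternating central constant. Concretely, I would fix $m\in\mathbb{N}$ and central constants $d_0,\dots,d_{2m}$ (independent of $n$), identify $u_n\equiv\alpha_n$, and consider the combined flow
\begin{align}
    \partial_\tau u_n
    := \sum_{r = 1}^{m + 1} d_{2(r - 1)} \, \partial_{t_r} u_n
    = F_{n - 1} \, u_n - u_n \, F_{n + 1},
    \qquad
    F_n := \sum_{r = 1}^{m + 1} d_{2(r - 1)} \, S_r^r \br{n + r - 1}.
\end{align}
The identity on the right is immediate from \ref{eq:VLr_nc}, since $S_r^r(n+r-2) = S_r^r\br{(n-1)+r-1}$ and $S_r^r(n+r) = S_r^r\br{(n+1)+r-1}$. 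Note that $F_n$ is precisely the left-hand side of \ref{eq:dPIm_nc} without the $c_n$ term. So the stationary condition $\partial_\tau u_n = 0$ becomes the single lattice relation
\begin{align}
    F_{n - 1} \, u_n = u_n \, F_{n + 1}.
\end{align}

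Next I would look for the reduction by imposing $F_n + c_n = 0$ with $c_n$ in the center $\mathcal{Z}(\mathcal{R})$. Substituting gives $\br{c_{n+1} - c_{n-1}} u_n = 0$. Since $\mathcal{R}$ is a division ring and $u_n \neq 0$ generically, this forces $c_{n+1} = c_{n-1}$. This is the same two-step difference equation as in Remarks~\ref{rem:n2sol} and~\ref{rem:n2sol_nc}, now with zero right-hand side. Its general solution is $c_n = c_0 + c_1(-1)^n$. Hence the constraint
\begin{align}
    \sum_{r = 1}^{m + 1} d_{2(r - 1)} \, S_r^r \br{n + r - 1} + c_0 + c_1 (-1)^n = 0
\end{align}
defines stationary solutions of the combined flow. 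This is the \ref{eq:dPIm_nc} equation with $c_n = c_1(-1)^n + c_0$.

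To make this a genuine reduction, and not only a particular solution, I would also verify that the constraint is invariant along the flow. It is the stationary equation itself, up to the conjugation structure. One can see this because the stationary equation can be rewritten as $u_n^{-1} F_{n-1} u_n = F_{n+1}$, and central values of $F$ are preserved by conjugation.

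The main obstacle is the converse direction: the stationary relation $F_{n-1}u_n = u_n F_{n+1}$ alone does not obviously force $F_n$ to be central. In the non-commutative setting it only says that $F_{n+1}$ is the conjugate of $F_{n-1}$ by $u_n$. I would therefore state the proposition as the claim that the central ansatz $F_n = -c_n$ is compatible with stationarity exactly when $c_n$ is alternating. That is what ``particular case'' in the statement suggests, and I would not attempt to show that every stationary solution arises this way. I would also remark that the linear term $n$ in $c_n$ cannot appear from isospectral flows alone; in the commutative case it would come from adding a master-symmetry (non-isospectral) flow.
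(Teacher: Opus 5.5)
Your argument is correct for what it claims, but it runs in the opposite direction from the paper's proof. You substitute the central ansatz $F_n=-c_n$ into the stationary equation of the single combined flow $\sum_r d_{2(r-1)}\partial_{t_r}$ and obtain $(c_{n+1}-c_{n-1})u_n=0$. The alternating form of $c_n$ is therefore an output, and you never need Lemma~\ref{thm:bdcond_nc}.

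The paper instead works flow by flow. It builds the alternating term into $F_{r,n}=S_r^r(n+r)-c_1(-1)^n-c_0$; the 2-periodicity of this term is exactly what makes its rewriting of the right-hand side of $\text{VL}_r^{\text{nc}}$ valid. It then uses Lemma~\ref{thm:bdcond_nc} to write that right-hand side, up to sign, as $u_n(F_{r,n}-F_{r,n-1}-D_{r,n})+(F_{r,n-1}-F_{r,n-2}+D_{r,n-1})u_n$. Next it declares the left and right ``coefficients'' of $\alpha_n$ zero separately, concludes $F_{r,n}=\const$, and sums over $r$.

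So the paper's route has the shape of the necessity statement (stationarity implies d-PI), which you deliberately did not claim. However, its decisive step is precisely the inference you flagged as unavailable. The relation $u_nX+Yu_n=0$ does not give $X=Y=0$; even in the commutative case it only gives $X+Y=0$. Moreover, demanding that both brackets vanish would also force $D_{r,n}=0$, which fails for $r\ge 2$: for example, $D_{2,n}=\alpha_{n+1}\alpha_n-\alpha_n\alpha_{n+1}$. That step therefore supplies a sufficient condition rather than a consequence.

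Your caveat is well founded: stationarity by itself only says $F_{n+1}=u_n^{-1}F_{n-1}u_n$, a conjugation relation and not a centrality statement. Your sufficiency argument, together with the observation that among central $c_n$ only the 2-periodic ones are compatible, is the rigorous content of the proposition. Your combined-flow formulation also produces the single d-PI equation directly, instead of $m+1$ separate constraints that are then summed.

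Your invariance paragraph can be dropped. On the constraint one has $\partial_\tau u_n=0$, so invariance under the combined flow is automatic. Invariance under the individual flows $t_s$ is a different statement and is not needed here.
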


\begin{proof}
By Lemma \ref{thm:bdcond_nc}, we have
\begin{align}
    \alpha_n \, S_r^r (n + r - 1)
    - S_r^r (n + r - 1) \, \alpha_n
    &= - \alpha_n \, D_{r, n} 
    + D_{r, n - 1} \, \alpha_n
    ,
\end{align}
where, as before, $D_{r, n} = S_{r + 1}^r(n + r) - S_r^r(n + r) - S_{r}^r (n + r - 1)$. We define 
\begin{align}
    F_{r, n} 
    := S_{r}^r(n + r) 
    - c_1(-1)^n - c_0
\end{align}
and rewrite the right-hand side of the \ref{eq:VLr_nc} as follows
\begin{multline}
    u_n \, \br{
    F_{r, n}
    - F_{r, n - 1}
    } 
    + u_n \, S_{r}^r (n + r - 1)
    - S_r^r (n + r - 1) \, u_n
    + \br{
    F_{r, n - 1} - F_{r, n - 2}
    } \, u_n
    \\[2mm]
    = u_n \, \br{
    F_{r, n}
    - F_{r, n - 1}
    - D_{r, n}
    }
    + \br{
    F_{r, n - 1} - F_{r, n - 2}
    + D_{r, n - 1}
    } \, u_n
    .
\end{multline}
Taking the stationary flows $\alpha_n := u_n\big|_{t_r = \const}$ and due to the arbitrariness of $\alpha_n$ together with the assumption $\alpha_n \neq 0$, the coefficients of $\alpha_n$ must equal to zero, yielding the system
\begin{align}
    &&
    F_{r, n}
    - F_{r, n - 1}
    - D_{r, n}
    &= 0
    ,
    &
    F_{r, n}
    - F_{r, n - 1}
    + D_{r, n}
    &= 0.
    &&
\end{align}
Eliminating $D_{r, n}$ from the system, we obtain the equation
\begin{align}
    F_{r, n}
    = F_{r, n - 1}
    ,
\end{align}
whose general solution is $F_{r, n} = \const$. 
Finally, summing over $r = 1, \dots, m+1$ with arbitrary coefficients $\tilde d_r$ and redefining the parameters, we obtain the hierarchy \ref{eq:dPIm_nc} with $c_n = c_1(-1)^n + c_0$.
\end{proof}

\begin{rem}
The stationary reduction of the \ref{eq:VLr_nc} hierarchy also implies $S_r^{r + 1}(n + r) = \const$ due to the identity
\begin{align}
    \alpha_n \, S_r^r (n + r)
    - S_{r}^r(n + r - 2) \, \alpha_n
    &= S_{r}^{r + 1} (n + r)
    - S_r^{r + 1} (n + r - 1)
    .
\end{align}

\end{rem}

\begin{rem}
A similar commutative statement was considered in \cite{gordoa2005non}, where the first several members of the discrete first Painlevé hierarchy were derived from a non-isospectral extension of the Volterra hierarchy. 
\end{rem}

\begin{rem}
There exist other non-commutative versions of the Volterra lattice, whose reductions were studied in \cite{adler2020}. 
In particular, the author derived two d-PI equations different from \ref{eq:dPI1_nc} (see~\eqref{eq:dPI11_m}~and~\eqref{eq:dPI12_m}) and showed that they are consistent with the corresponding Volterra lattice. 
\end{rem}

\begin{rem}
Similarly to \cite{adler2020}, one can verify that the first member of \ref{eq:dPIm_nc} is consistent with the first flow of the Volterra hierarchy \ref{eq:VLr_nc}. Indeed, consider the first member \ref{eq:dPI1_nc} and set
\begin{align}
    \label{eq:F1}
    F_{1, n}
    &:= \br{
    u_{n - 1} u_n + u_n^2 + u_n u_{n + 1}
    + t_1 u_n + \br{n - c_0}
    }
    .
\end{align}
The first flow of the Volterra hierarchy is
\begin{align}
    \label{eq:VL1_nc}
    \tag*{$\text{VL}_1^{\text{nc}}$}
    \partial_{t_1} u_n
    &= u_{n - 1} u_n - u_n u_{n + 1}
    .
\end{align}
Differentiating \eqref{eq:F1} using \ref{eq:VL1_nc}, we obtain an expression which can be rewritten as follows
\begin{align}
    \partial_{t_1} \br{F_{1, n}}
    &=\br{F_{1, n} - F_{1, n - 1}} u_n
    + u_n \br{F_{1, n + 1} - F_{1, n}}
    .
\end{align}
Hence, the constraint $F_{1,n}=0$ is preserved by the first Volterra flow.
\end{rem}

\section{Discussions}
\label{sec:disc}

As demonstrated by the explicit examples, our \ref{eq:dPIm_nc} hierarchy appears to be related to the \ref{eq:PIm_nc} hierarchy. To make this connection precise, the system \ref{eq:dPIm_nc} should be reformulated in terms of a discrete version of the recursion operator, analogous to \eqref{eq:Lenard_nc}. Such an operator could, in principle, be derived from the discrete Korteweg-de Vries hierarchy. We note that this problem remains open even in the commutative setting.

\medskip

Another non-commutative version of the d-PI equation, given by \eqref{eq:dPI10_nc}, was obtained in~\cite{bobrova2024affine} through a composition of Bäcklund transformations associated with the non-commutative fourth Painlevé equation. Since Painlevé IV admits a matrix hierarchy~\cite{gordoa2021matrix} together with corresponding Bäcklund transformations, one may expect a different hierarchy of the first discrete Painlevé equation—with \eqref{eq:dPI10_nc} as its first member—to arise from an appropriate composition of the transformations constructed in~\cite{gordoa2021matrix}.

\medskip

It is also natural to investigate analogous hierarchies for other non-commutative discrete Painlevé equations, for example those introduced in~\cite{doliwa2013non}, \cite{bobrova2024affine}, \cite{bobrova2025non}, as well as to study their connection with isomonodromic representations, orthogonal polynomials, and integrable systems.

\medskip

In particular, a d-PII hierarchy can be derived in the same spirit as in~\cite{cresswell1999discrete}. To illustrate this idea, consider the first member of the hierarchy, whose commutative counterpart was obtained in~\cite{joshi1992nonlinear}. The~linear problem reads as
\begin{align}
    \label{eq:lsysmat_nc}
    \left\{
    \begin{array}{lcl}
        \Phi_{n + 1}(\lambda)
        &=& L_n(\lambda)\, \Phi_n(\lambda), 
        \\[2mm]
        \partial_\lambda \Phi_n(\lambda)
        &=& M_n(\lambda)\, \Phi_n(\lambda),
    \end{array}
    \right.
\end{align}
with the matrices $L_n\br{\lambda} = L_n$, $M_n\br{\lambda} = M_n \in \Mat_2\br{\mathcal{R}}$ given by
\begin{align}
    &&
    L_n(\lambda)
    &= 
    \begin{pmatrix}
        \lambda & \alpha_n \\[1mm]
        \alpha_n & \lambda^{-1}
    \end{pmatrix},
    &
    M_n(\lambda)
    &= 
    \begin{pmatrix}
        A_n & B_n \\[1mm]
        C_n & -A_n
    \end{pmatrix}.
    &&
\end{align}
Here $\Phi_n\br{\lambda}\in \Mat_2\br{\mathcal{R}}$ and matrix entries $A_n$, $B_n$, $C_n$ belong to $\mathcal{R}$. 

\begin{rem}
The scalar system \eqref{eq:lsys_nc} can also be rewritten in the matrix form \eqref{eq:lsysmat_nc}. In this case, the matrices $L_n$ and $M_n$ are
\begin{align}
    &&
    L_n \br{\lambda}
    &= 
    \begin{pmatrix}
        \lambda \alpha_{n + 1}^{-1} & - \alpha_{n + 1}^{-1}
        \\[1mm]
        1 & 0
    \end{pmatrix}
    ,
    &
    &
    M_n(\lambda)
    &= 
    \begin{pmatrix}
        \lambda p_{n + 1} + b_{n + 1} & - p_{n+1}\alpha_{n + 1} \alpha_n^{-1} \\[1mm]
        p_n \alpha_n & b_n
    \end{pmatrix}
    ,
    &&
\end{align}
where, as before, $p_n := a_n \alpha_n^{-1}$.
\end{rem}
The~compatibility condition of \eqref{eq:lsysmat_nc}
\begin{align}
    \label{eq:ccond}
    \partial_\lambda L_n
    = M_{n + 1} L_n - L_n M_n
\end{align}
imposes constraints on $A_n$, $B_n$, and $C_n$. Expanding them into Laurent series in $\lambda$ beginning at $\lambda^{-3}$, one finds
\begin{gather}
    A_n
    = \tfrac12 d_1 \lambda^{-3} + \bigl(
    - d_1 \alpha_n \alpha_{n-1} 
    + n 
    + \tfrac12(d_{-1} - 1)
    \bigr)\lambda^{-1} + \tfrac12 d_1 \lambda,
    \\[2mm]
    \begin{aligned}
        B_n &= - d_1 \alpha_{n-1} \lambda^{-2} + d_1 \alpha_n, 
        &&&&&&&
        C_n &= - d_1 \alpha_n \lambda^{-2} + d_1 \alpha_{n-1}
        .
    \end{aligned}
\end{gather}
As a result, \eqref{eq:ccond} leads to the non-commutative discrete Painlevé II equation:
\begin{align}
    \label{eq:dPII1_nc}
    \tag*{$\mathrm{d\text{-}PII}^{\mathrm{nc}}_1$}
    d_1\bigl(
        \alpha_{n+1}(1 - \alpha_n^2) + (1 - \alpha_n^2)\alpha_{n-1}
    \bigr) + \alpha_n (2n + d_{-1}) = 0,
\end{align}
which, under the continuous limit
\begin{align}
    \alpha_n &= - \varepsilon\, u, &
    z &= \varepsilon n, &
    d_1 &= \varepsilon^{-3}, &
    d_{-1} + 2n &= - 2\varepsilon^{-3} - z\, \varepsilon^{-1},
\end{align}
reduces to the $\mathrm{PII}^0$ equation with zero free parameter~\cite{Adler_Sokolov_2020_1}:
\begin{align}
    u'' = 2 u^3 + z u.
\end{align}
Higher members of the hierarchy can be derived by expanding $A_n$, $B_n$, and $C_n$ into Laurent series in $\lambda$ beginning at $\lambda^{-(2m+1)}$, where $m$ labels the $m$-th member of the hierarchy. It would be interesting to identify another family of polynomials naturally associated with this hierarchy.

\medskip
Finally, we mention that a recent work~\cite{svinin2025volterra} established a connection between Svinin polynomials and Somos-$N$ sequences. A natural question is whether a similar correspondence also exists between non-commutative Svinin polynomials and the non-commutative Somos-$N$ equations introduced in~\cite{bobrova2023non}.

    \appendix

\section{Proof of Proposition \ref{thm:Lenard_nc} and Lemma \ref{thm:Gn_rec}}
\label{app:Lenard_nc}
Below we will omit the argument of the Lenard operator and the upper-script $^\text{nc}$. Before we start, let us list useful identities:
\begin{itemize}
\item formal integration by parts
\begin{align}
    \label{eq:inbp}
    &f \, g
    = \partial^{-1} \br{f' \, g}
    + \partial^{-1} \br{f \, g'}
    ,
\end{align}
\item symmetry and skew-symmetry properties:
\begin{align}
    \label{eq:acs}
    &\lbr{f, g}_+
    = \lbr{g, f}_+
    ,
    \\[2mm]
    \label{eq:cs}
    &\lbr{f, g}
    = - \lbr{g, f}
    ,
\end{align}
\item Leibniz rules:
\begin{align}
    \label{eq:acl}
    &\br{\lbr{f, g}_+}'
    = \lbr{f', g}_+ + \lbr{f, g'}_{+}
    ,
    \\[2mm]
    \label{eq:cl}
    &\br{\lbr{f, g}}'
    = \lbr{f', g} + \lbr{f, g'}
    ,
\end{align}
\item relations between anticommutators and commutators:
\begin{align}
    \label{eq:ac_c}
    &\lbr{f,\lbr{g,h}}
    = \lbr{\lbr{f,g}_+, h}_+
    - \lbr{g,\lbr{f,h}_+}_+
    ,
    \\[2mm]
    \label{eq:ac_3}
    &\lbr{f, \lbr{g, h}_+}
    = \lbr{\lbr{f, g}, h}_+
    + \lbr{g, \lbr{f, h}}_+
    ,
\end{align}
\item Jacobi identity:
\begin{align}
    \lbr{f, \lbr{g,h}}
    = \lbr{\lbr{f,g}, h}
    + \lbr{g, \lbr{f,h}}
    .
\end{align}
\end{itemize}
Note that we use the boundary condition in \eqref{eq:Lenard_nc} while formally integrating by parts. 

\begin{proof}[Proof of Proposition \ref{thm:Lenard_nc}]
The proof is similar to the commutative case and is based on a formal calculus involving $\partial^{-1}$, where $\partial^{-1} \partial = \partial \partial^{-1} = 1$. We use the notation $\partial \equiv \prime$. 
Recall that $u = \cLen{1}$. 

\medskip
\textbullet \,\, We start with \eqref{eq:Lenard_nc} which can be rewritten as follows:
\begin{align}
    \cLen{n + 1}'
    &= \cLen{n}'''
    + 2 \lbr{\cLen{1}, \cLen{n}'}_+
    + \lbr{\cLen{1}', \cLen{n}}_+
    + G_n
    = \br{
    \cLen{n}'' 
    + \lbr{\cLen{1}, \cLen{n}}_+
    }'
    + \lbr{\cLen{1}, \cLen{n}'}_+
    + G_n
    ,
    \\[2mm]
    \label{eq:Gn}
    G_n
    &:= \lbr{\cLen{1}, \mathcal{G}_n}
    ,
    \qquad
    \mathcal{G}_n
    := \partial^{-1} \br{\lbr{\cLen{1}, \cLen{n}}}
    ,
\end{align}
or, after formal integrating, 
\begin{align}
    \label{eq:prefin}
    \cLen{n + 1}
    &= \cLen{n}''
    + \lbr{\cLen{1}, \cLen{n}}_+
    + \partial^{-1} \br{
    \lbr{\cLen{1}, \cLen{n}'}_+
    }
    + \partial^{-1} \br{G_n}
    .
\end{align}
In addition, the formula \eqref{eq:inbp} gives us
\begin{align}
    \label{eq:len_np1}
    \cLen{n + 1}
    &= \cLen{n}''
    + 2 \lbr{\cLen{1}, \cLen{n}}_+
    - \partial^{-1} \br{
    \lbr{\cLen{1}', \cLen{n}}_+
    }
    + \partial^{-1} \br{G_n}
    ,
\end{align}
thus
\begin{align}
    \label{eq:intform}
    \partial^{-1} \br{
    \lbr{\cLen{1}', \cLen{n}}_+
    }
    &= \cLen{n}''
    + 2 \lbr{\cLen{1}, \cLen{n}}_+
    - \cLen{n + 1}
    + \partial^{-1} \br{G_n}
    .
\end{align}

\medskip
\textbullet \,\, 
Now we are aiming to compute $\partial^{-1}\br{\lbr{\cLen{1}, \cLen{n}'}_+}$ in \eqref{eq:prefin} by using the recursion formula and \eqref{eq:intform}. Let us look at $\partial^{-1}\br{\lbr{\cLen{l}, \cLen{n}'}_+}$:
\begin{align}
    \partial^{-1}\br{\lbr{\cLen{l}, \cLen{n}'}_+}
    \overset{\eqref{eq:Lenard_nc}}{=} 
    {\partial^{-1} \br{\lbr{
    \cLen{l}, \cLen{n - 1}'''
    }_+}}
    &+ \, 2 \partial^{-1} 
    \br{\lbr{
    \cLen{l}, \lbr{\cLen{1}, \cLen{n - 1}'}_+
    }_+}
    \\
    &+ \, \partial^{-1} \br{\lbr{
    \cLen{l}, \lbr{\cLen{1}', \cLen{n - 1}}_+
    }_+}
    + \partial^{-1} \br{\lbr{
    \cLen{l}, G_{n - 1}
    }_+}
    .
\end{align}
The first term can be integrated formally by parts
\begin{align}
    {\partial^{-1} \br{\lbr{
    \cLen{l}, \cLen{n - 1}'''
    }_+}}
    &\overset{\eqref{eq:inbp}}{=}
    \lbr{
    \cLen{l}, \cLen{n - 1}''
    }_+
    - \partial^{-1} \br{\lbr{
    \cLen{l}', \cLen{n - 1}''
    }_+}
    \\
    &\overset{\eqref{eq:inbp}}{=}
    \lbr{
    \cLen{l}, \cLen{n - 1}''
    }_+
    - \lbr{
    \cLen{l}', \cLen{n - 1}'
    }_+
    + \partial^{-1} \br{\lbr{
    \cLen{l}'', \cLen{n - 1}'
    }_+}
    .
\end{align}
In order to apply \eqref{eq:intform} for the third term, we rewrite it with the help of \eqref{eq:ac_c}:
\begin{align}
    \partial^{-1} \br{\lbr{
    \cLen{l}, \lbr{\cLen{1}', \cLen{n - 1}}_+
    }_+}
    &\overset{\eqref{eq:ac_c}}{=}
    \partial^{-1} \br{\lbr{
    \cLen{n - 1}, \lbr{\cLen{1}', \cLen{l}}_+
    }_+}
    + \partial^{-1} \br{
    \lbr{
    \cLen{1}', \lbr{
    \cLen{n - 1}, \cLen{l}
    }
    }}
    \\[2mm]
    &\overset{\phantom{\eqref{eq:ac_c}}}{=} 
    \partial^{-1} \br{\lbr{
    \cLen{n - 1}, \partial \br{\partial^{-1} \lbr{\cLen{1}', \cLen{l}}_+}
    }_+}
    + \partial^{-1} \br{
    \lbr{
    \cLen{1}', \lbr{
    \cLen{n - 1}, \cLen{l}
    }
    }}
    \\[2mm]
    &\overset{\phantom{\eqref{eq:ac_c}}}{=} 
    \lbr{
    \cLen{n - 1}, \partial^{-1} \br{\lbr{\cLen{1}', \cLen{l}}_+}
    }_+
    - \partial^{-1} \br{
    \lbr{
    \cLen{n - 1}', \partial^{-1} \br{\lbr{\cLen{1}', \cLen{l}}_+}
    }_+}
    + \partial^{-1} \br{
    \lbr{
    \cLen{1}', \lbr{
    \cLen{n - 1}, \cLen{l}
    }
    }}
    \\[2mm]
    &\overset{\eqref{eq:intform}}{=}
    \lbr{
    \cLen{n - 1}, \cLen{l}''
    + 2 \lbr{\cLen{1}, \cLen{l}}_+
    - \cLen{l + 1}
    + \partial^{-1} \br{G_l}
    }_+
    \\
    &\qquad
    - \partial^{-1} \br{
    \lbr{
    \cLen{n - 1}', \cLen{l}''
    + 2 \lbr{\cLen{1}, \cLen{l}}_+
    - \cLen{l + 1}
    + \partial^{-1} \br{G_l}
    }_+}
    + \partial^{-1} \br{
    \lbr{
    \cLen{1}', \lbr{
    \cLen{n - 1}, \cLen{l}
    }
    }}
    \\[2mm]
    &\overset{\phantom{\eqref{eq:ac_c}}}{=} 
    \lbr{
    \cLen{n - 1}, \cLen{l}''
    }_+
    + 2 
    \lbr{
    \cLen{n - 1}, \lbr{\cLen{1}, \cLen{l}}_+
    }_+
    - \lbr{\cLen{n - 1}, \cLen{l + 1}}_+
    + \lbr{\cLen{n - 1}, \partial^{-1} \br{G_l}}_+
    \\
    &\qquad
    - \partial^{-1} \br{\lbr{
    \cLen{n - 1}', \cLen{l}''
    }_+}
    - 2 \, \partial^{-1} \br{
    \lbr{
    \cLen{n - 1}', \lbr{\cLen{1}, \cLen{l}}_+
    }_+}
    + \partial^{-1} \br{\lbr{\cLen{n - 1}', \cLen{l + 1}}_+}
    \\
    &\qquad
    - \partial^{-1} \br{\lbr{\cLen{n - 1}', \partial^{-1} G_l}_+}
    + \partial^{-1} \br{
    \lbr{
    \cLen{1}', \lbr{
    \cLen{n - 1}, \cLen{l}
    }
    }}
    .
\end{align}
Note also that, thanks to \eqref{eq:ac_c}, the second term takes the form
\begin{align}
    \partial^{-1} \br{
    \lbr{
    \cLen{l}, \lbr{\cLen{1}, \cLen{n - 1}'}_+
    }_+}
    &= \partial^{-1} \br{\lbr{
    \cLen{n - 1}', \lbr{\cLen{l}, \cLen{1}}_+
    }_+}
    + \partial^{-1} \br{\lbr{
    \cLen{1}, \lbr{\cLen{n - 1}', \cLen{l}}
    }}
    .
\end{align}
Thus, all together
\begin{align}
    \partial^{-1} \br{\lbr{\cLen{l}, \cLen{n}'}_+}
    \overset{\phantom{\eqref{eq:ac_c}}}{=}  
    \lbr{
    \cLen{l}, \cLen{n - 1}''
    }_+
    &- \, \lbr{
    \cLen{l}', \cLen{n - 1}'
    }_+
    + \orange{\partial^{-1} \br{\lbr{
    \cLen{l}'', \cLen{n - 1}'
    }_+}}
    \\
    &
    + \, 2 \, \tomato{\partial^{-1} \br{\lbr{
    \cLen{n - 1}', \lbr{\cLen{l}, \cLen{1}}_+
    }_+}}
    + 2 \partial^{-1} \br{\lbr{
    \cLen{1}, \lbr{\cLen{n - 1}', \cLen{l}}
    }}
    \\
    &+ \, \lbr{
    \cLen{n - 1}, \cLen{l}''
    }_+
    + 2 
    \lbr{
    \cLen{n - 1}, \lbr{\cLen{1}, \cLen{l}}_+
    }_+
    - \lbr{\cLen{n - 1}, \cLen{l + 1}}_+
    + \lbr{\cLen{n - 1}, \partial^{-1} \br{G_l}}_+
    \\
    &
    - \, \orange{\partial^{-1}\br{\lbr{
    \cLen{n - 1}', \cLen{l}''
    }_+}}
    - 2 \, \tomato{\partial^{-1}\br{
    \lbr{
    \cLen{n - 1}', \lbr{\cLen{1}, \cLen{l}}_+
    }_+}}
    + \partial^{-1} \br{\lbr{\cLen{n - 1}', \cLen{l + 1}}_+}
    \\
    &
    - \, \partial^{-1} \br{\lbr{\cLen{n - 1}', \partial^{-1} \br{G_l}}_+}
    + \partial^{-1} \br{
    \lbr{
    \cLen{1}', \lbr{
    \cLen{n - 1}, \cLen{l}
    }
    }}
    + \partial^{-1} \br{\lbr{
    \cLen{l}, G_{n - 1}
    }_+}
    \\[2mm]
    \overset{\eqref{eq:acs}}{=}
    \lbr{
    \cLen{l}, \cLen{n - 1}''
    }_+
    &- \, \lbr{
    \cLen{l}', \cLen{n - 1}'
    }_+
    + \lbr{
    \cLen{n - 1}, \cLen{l}''
    }_+
    \\
    &+ \, 2 
    \lbr{
    \cLen{n - 1}, \lbr{\cLen{1}, \cLen{l}}_+
    }_+
    - \lbr{\cLen{n - 1}, \cLen{l + 1}}_+
    + \partial^{-1} \br{\lbr{\cLen{n - 1}', \cLen{l + 1}}_+}
    \\
    &+ \, 2 \, \mseagreen{\partial^{-1} \br{\lbr{
    \cLen{1}, \lbr{\cLen{n - 1}', \cLen{l}}
    }}}
    + \mseagreen{\partial^{-1} \br{
    \lbr{
    \cLen{1}', \lbr{
    \cLen{n - 1}, \cLen{l}
    }}
    }}
    \\
    &
    + \, \flowerblue{\lbr{\cLen{n - 1}, \partial^{-1} \br{G_l}}_+}
    - \flowerblue{\partial^{-1} \br{
    \lbr{\cLen{n - 1}', \partial^{-1} \br{G_l}}_+}}
    + \partial^{-1} \br{\lbr{
    \cLen{l}, G_{n - 1}
    }_+}
    \\[2mm]
    \overset{\eqref{eq:inbp}}{=}
    \lbr{
    \cLen{l}, \cLen{n - 1}''
    }_+
    &- \, \lbr{
    \cLen{l}', \cLen{n - 1}'
    }_+
    + \lbr{
    \cLen{n - 1}, \cLen{l}''
    }_+
    \\
    &+ \, 2 
    \lbr{
    \cLen{n - 1}, \lbr{\cLen{1}, \cLen{l}}_+
    }_+
    - \lbr{\cLen{n - 1}, \cLen{l + 1}}_+
    + \partial^{-1}\br{\lbr{\cLen{n - 1}', \cLen{l + 1}}_+}
    \\
    &+ \, \lbr{\cLen{1}, \lbr{\cLen{n-1}, \cLen{l}}}
    + \partial^{-1} \br{\lbr{
    \cLen{1}, \lbr{\cLen{n - 1}', \cLen{l}}
    - \lbr{\cLen{n - 1}, \cLen{l}'}
    }}
    \\
    &+ \, \partial^{-1} \br{\lbr{
    \cLen{n - 1}, G_{l}
    }_+}
    + \partial^{-1} \br{\lbr{
    \cLen{l}, G_{n - 1}
    }_+}
    .
\end{align}

\medskip
\textbullet \,\, Set $l = 1$ and apply the recurrence until $\partial^{-1} \br{\lbr{\cLen{n - 1}', \cLen{l + 1}}_+}$ is equal to $\partial^{-1} \br{\lbr{\cLen{1}', \cLen{n}}_+}$, which can be formally integrated by parts as follows: $\partial^{-1} \br{\lbr{\cLen{1}', \cLen{n}}_+} = \lbr{\cLen{1}, \cLen{n}}_+ - \partial^{-1} \br{\lbr{\cLen{1}, \cLen{n}'}_+}$. Thus,
\begin{align}
    2 
    &\partial^{-1} \lbr{\cLen{1}, \cLen{n}'}_+
    \\
    &= \, \lbr{\cLen{1}, \cLen{n}}_+ 
    + \sum_{j = 1}^{n - 1} \left(
    2 \lbr{\cLen{n - j}, \cLen{j}''}_+
    - \lbr{\cLen{j}', \cLen{n - j}'}_+
    + 2 \lbr{\cLen{n - j}, \lbr{\cLen{1}, \cLen{j}}_+}_+
    - \lbr{\cLen{n - j}, \cLen{j + 1}}_+
    \right)
    \\
    &\phantom{= \lbr{\cLen{1}, \cLen{n}}=}
    + \, \sum_{j = 1}^{n - 1} \br{
    \mseagreen{\lbr{\cLen{1}, \lbr{\cLen{n-j}, \cLen{j}}}}
    + \partial^{-1} \br{\lbr{
    \cLen{1}, \orange{\lbr{\cLen{n - j}', \cLen{j}}
    - \lbr{\cLen{n - j}, \cLen{j}'}}
    }
    }}
    \\
    &\phantom{= \lbr{\cLen{1}, \cLen{n}}=}
    + \, \partial^{-1} \sum_{j = 1}^{n - 1} \br{ 
    \tomato{\lbr{
    \cLen{n - j}, G_{j}
    }_+
    + \lbr{
    \cLen{j}, G_{n - j}
    }}
    }
    \\[2mm]
    &= \, \lbr{\cLen{1}, \cLen{n}}_+ 
    + \sum_{j = 1}^{n - 1} \left(
    2 \lbr{\cLen{n - j}, \cLen{j}''}_+
    - \lbr{\cLen{j}', \cLen{n - j}'}_+
    + 2 \lbr{\cLen{n - j}, \lbr{\cLen{1}, \cLen{j}}_+}_+
    - \lbr{\cLen{n - j}, \cLen{j + 1}}_+
    \right)
    \\
    &\phantom{= \lbr{\cLen{1}, \cLen{n}}=}
    + \, 2 \, \partial^{-1} \sum_{j = 1}^{n - 1} \br{
    \lbr{
    \cLen{1}, \lbr{\cLen{n - j}', \cLen{j}}
    }
    + \lbr{
    \cLen{n - j}, G_j
    }_+
    }
    .
\end{align}
After substituting this into \eqref{eq:prefin} and using the definition of $G_n$, we obtain:
\begin{align}
    \cLen{n + 1}
    = \cLen{n}''
    &+ \, \tfrac32 \lbr{\cLen{1}, \cLen{n}}_+
    + \, \tfrac12 \sum_{j = 1}^{n - 1} \left(
    \lbr{\cLen{n - j}, \, 2 \cLen{j}''
    + 2 \lbr{\cLen{1}, \cLen{j}}_+
    - \cLen{j + 1}
    }_+
    - \lbr{\cLen{j}', \cLen{n - j}'}_+
    \right)
    \\
    & + \, \partial^{-1} \sum_{j = 1}^{n}
    \br{
     \lbr{
    \cLen{1}, \lbr{\cLen{n - j}', \cLen{j}}}
    + \lbr{
    \cLen{n - j}, G_j
    }_+
    }
    .
\end{align}
Thanks to Lemma \ref{thm:Gn_rec}, the last sum is equal to zero, and we are done.
\end{proof}

\begin{proof}[Proof of Lemma \ref{thm:Gn_rec}]
The proof proceeds by a direct computation, similarly to the proof of Proposition \ref{thm:Lenard_nc}, using the definition of $G_n$ together with equation \eqref{eq:prefin}. By recursively substituting the expressions for $\cLen{n}$ into \eqref{eq:Gn_rec}, one obtains an expression involving only $\cLen{1}$ and its derivatives.
To verify that \eqref{eq:Gn_rec} is indeed an identity, the key ingredient is the relation
\begin{align}
    \lbr{
    \cLen{1}^{(k)}, 
    \partial^{-1} \br{
    \lbr{\cLen{1}^{(l)}, \cLen{1}^{(m + 1)}}_{\pm}
    }
    }_{\pm}
    = \lbr{
    \cLen{1}^{(k)}, 
    \lbr{\cLen{1}^{(l)}, \cLen{1}^{(m)}}_{\pm}
    }_{\pm}
    - 
    \lbr{
    \cLen{1}^{(k)}, 
    \partial^{-1} \br{
    \lbr{\cLen{1}^{(l + 1)}, \cLen{1}^{(m)}}_{\pm}
    }
    }_{\pm}
    ,
\end{align}
where $_\pm$ denotes either the commutator or the anticommutator. In particular, setting $m=l$ yields
\begin{align}
    \lbr{
    \cLen{1}^{(k)}, 
    \partial^{-1} \br{
    \lbr{\cLen{1}^{(l)}, \cLen{1}^{(l + 1)}}_{+}
    }
    }_{\pm}
    = \frac12 \lbr{
    \cLen{1}^{(k)}, 
    \lbr{\cLen{1}^{(l)}, \cLen{1}^{(l)}}_{+}
    }_{\pm}
    ,
\end{align}
while for $m=l=k$ one obtains
\begin{align}
    \lbr{
    \cLen{1}^{(k)}, 
    \partial^{-1} \br{
    \lbr{\cLen{1}^{(k)}, \cLen{1}^{(k + 1)}}_{+}
    }
    }
    = 0
    .
\end{align}
Moreover, the property \eqref{eq:ac_3} with $f = g = \cLen{1}$ becomes 
\begin{align}
    \lbr{
    \cLen{1}, \lbr{
    \cLen{1}, h
    }_+
    } 
    &= \lbr{
    \cLen{1}, \lbr{
    \cLen{1}, h
    }
    }_+
    .
\end{align}
Repeated application of these identities systematically eliminates all nonlocal contributions appearing in~\eqref{eq:Gn_rec}. The remaining terms then cancel after straightforward, albeit lengthy, algebraic manipulations, thereby proving \eqref{eq:Gn_rec}.
\end{proof}
    
    \bibliographystyle{alpha}
    \bibliography{bib}

@article{Adler_Sokolov_2020_1,
  title={On matrix {P}ainlev{\'e} {II} equations},
  author={Adler, V. E. and Sokolov, V. V.},
  journal={Theoret. and Math. Phys.},
  volume={207},
  number={2},
  pages={\href{https://doi.org/10.4213/tmf10027}{188--201}},
  year={2021},
  note={\href{https://arxiv.org/abs/2012.05639}{arXiv:2012.05639}}
}

@article{Balandin_Sokolov_1998,
  title={On the {P}ainlev{\'e} test for non-{A}belian equations},
  author={Balandin, S. P. and Sokolov, V. V.},
  journal={Physics letters A},
  volume={246},
  number={3-4},
  pages={\href{https://www.sciencedirect.com/science/article/abs/pii/S0375960198003363?via\%3Dihub}{267--272}},
  year={1998},
  publisher={Elsevier}
}

@article{adler2020,
  title={Painlev{\'e} type reductions for the non-{A}belian {V}olterra lattices},
  author={Adler, V. E.},
  journal={Journal of Physics A: Mathematical and Theoretical},
  volume={54},
  number={3},
  pages={\href{https://doi.org/10.1088/1751-8121/abd21f}{035204}},
  year={2020},
  publisher={IOP Publishing},
  note={\href{https://arxiv.org/abs/2010.09021}{arXiv:2010.09021}}
}

@article{OS_1998a,
  title={Integrable evolution equations on associative algebras},
  author={Olver, P. J. and Sokolov, V. V.},
  journal={Communications in Mathematical Physics},
  volume={193},
  number={2},
  pages={\href{https://link.springer.com/article/10.1007/s002200050328}{245--268}},
  year={1998},
  publisher={Springer}
}

@article{Pic,
  title={On matrix {P}ainlev{\'e} hierarchies},
  author={Gordoa, P. R. and Pickering, A. and Zhu, Z. N.},
  journal={Journal of Differential Equations},
  volume={261},
  number={2},
  pages={\href{https://www.sciencedirect.com/science/article/pii/S0022039616300092}{1128--1175}},
  year={2016},
  publisher={Elsevier}
}

@article{mazzocco2007hamiltonian,
  title={The {H}amiltonian structure of the second {P}ainlev{\'e} hierarchy},
  author={Mazzocco, M. and Mo, M. Y.},
  journal={Nonlinearity},
  volume={20},
  number={12},
  pages={\href{https://doi.org/10.1088/0951-7715/20/12/006}{2845}},
  year={2007},
  note={\href{https://arxiv.org/abs/nlin/0610066}{arXiv:nlin/0610066}},
  publisher={IOP Publishing}
}

@article{bobrova2022classification,
  title={Classification of {H}amiltonian non-abelian {P}ainlev\'e type systems},
  author={Bobrova, I. and Sokolov, V.},
  journal={Journal of Nonlinear Mathematical Physics},
  volume = {30},
  pages={\href{https://doi.org/10.1007/s44198-022-00099-w}{646-662}},
  year={2023},
  note={\href{https://arxiv.org/abs/2209.00258}{arXiv:2209.00258}}
}

@article{manas2021matrix,
    title = {Matrix biorthogonal polynomials: {E}igenvalue problems and non-{A}belian discrete {P}ainlev\'e equations: a {R}iemann–{H}ilbert problem perspective},
    journal = {Journal of Mathematical Analysis and Applications},
    volume = {494},
    number = {2},
    pages = {\href{https://doi.org/10.1016/j.jmaa.2020.124605}{124605}},
    year = {2021},
    issn = {0022-247X},
    doi = {https://doi.org/10.1016/j.jmaa.2020.124605},
    url = {https://www.sciencedirect.com/science/article/pii/S0022247X2030768X},
    author = {Branquinho, A. and Moreno, A. and Mañas, M.}
}

@article{sakai2001rational,
  title={Rational Surfaces Associated with Affine Root Systems and Geometry of the {P}ainlev{\'e} Equations},
  author={Sakai, H.},
  journal={Communications in Mathematical Physics},
  volume={220},
  number={1},
  pages={\href{https://link.springer.com/article/10.1007/s002200100446}{165--229}},
  year={2001},
  publisher={Springer}
}

@book{conte2008painleve,
  title={The {P}ainlev{\'e} {H}andbook},
  author={Conte, R. and Musette, M.},
  year={\href{https://doi.org/10.1007/978-3-030-53340-3}{2020}},
  publisher={Springer Cham}
}

@article{olver2000classification,
  title={Classification of integrable one-component systems on associative algebras},
  author={Olver, P. J. and Wang, J. P.},
  journal={Proceedings of the London Mathematical Society},
  volume={81},
  number={3},
  pages={\href{https://doi.org/10.1112/S0024611500012582}{566--586}},
  year={2000},
  publisher={Cambridge University Press}
}

@article{bobrova2023non,
  title={Non-{A}belian discrete {T}oda chains and related lattices},
  author={Bobrova, I. and Retakh, V. and Rubtsov, V. and Sharygin, G.},
  journal={Physica D: Nonlinear Phenomena},
  year={2024},
  volume={464},
  pages={\href{https://doi.org/10.1016/j.physd.2024.134200}{134200}},
  note={\href{https://arxiv.org/abs/2311.11124}{arXiv:2311.11124}}
}

@article{kudryashov1997first,
  title={The first and second {P}ainlev{\'e} equations of higher order and some relations between them},
  author={Kudryashov, N. A.},
  journal={Physics Letters A},
  volume={224},
  number={6},
  pages={\href{https://www.sciencedirect.com/science/article/abs/pii/S0375960196007955}{353--360}},
  year={1997},
  publisher={Elsevier}
}

@article{cassatella2014singularity,
  title={Singularity confinement for matrix discrete {P}ainlev{\'e} equations},
  author={Cassatella-Contra, G. A and Manas, M. and Tempesta, P.},
  journal={Nonlinearity},
  volume={27},
  number={9},
  pages={\href{https://iopscience.iop.org/article/10.1088/0951-7715/27/9/2321}{2321}},
  year={2014},
  publisher={IOP Publishing}
}

@article{bobrova2024affine,
  title={Affine {W}eyl groups and non-{A}belian discrete systems: an application to the $d$-{P}ainlev\'e equations},
  author={Bobrova, I.},
  journal={Journal of Mathematical Physics, Analysis and Geometry (under review)},
  year={2024},
  note={\href{https://arxiv.org/abs/2403.18463}{arXiv:2403.18463}}
}

@article{doliwa2013non,
  title={Non-commutative $q$-{P}ainlev{\'e} {VI} equation},
  author={Doliwa, A.},
  journal={Journal of Physics A: Mathematical and Theoretical},
  volume={47},
  number={3},
  pages={\href{https://doi.org/10.1088/1751-8113/47/3/035203}{035203}},
  year={2013},
  publisher={IOP Publishing},
  note={\href{https://arxiv.org/abs/1310.6890}{arXiv:1310.6890}}
}

@article{bobrova2025non,
  title={On a non-commutative sixth $q$-{P}ainlev\'e system: from discrete system to surface theory},
  author={Bobrova, I.},
  journal={arXiv preprint \href{https://arxiv.org/abs/2507.22466}{arXiv:2507.22466}},
  year={2025}
}

@article{gordoa2021matrix,
  title={On matrix fourth {P}ainlev{\'e} hierarchies},
  author={Gordoa, P. R. and Pickering, A.},
  journal={Journal of Differential Equations},
  volume={271},
  pages={\href{https://doi.org/10.1016/j.jde.2020.08.013}{499--532}},
  year={2021},
  publisher={Elsevier}
}

@article{joshi1992nonlinear,
  title={Nonlinear nonautonomous discrete dynamical systems from a general discrete isomonodromy problem},
  author={Joshi, N. and Burtonclay, D. and Halburd, R. G.},
  journal={Letters in mathematical physics},
  volume={26},
  number={2},
  pages={\href{https://link.springer.com/article/10.1007/BF00398809}{123--131}},
  year={1992},
  publisher={Springer}
}

@article{cresswell1999discrete,
  title={The discrete first, second and thirty-fourth {P}ainlev{\'e} hierarchies},
  author={Cresswell, C. and Joshi, N.},
  journal={Journal of Physics A: Mathematical and General},
  volume={32},
  number={4},
  pages={\href{https://iopscience.iop.org/article/10.1088/0305-4470/32/4/009/}{655}},
  year={1999},
  publisher={IOP Publishing}
}

@article{casati2020recursion,
  title={Recursion and {H}amiltonian operators for integrable nonabelian difference equations},
  author={Casati, M. and Wang, J. P.},
  journal={Nonlinearity},
  volume={34},
  number={1},
  pages={\href{https://doi.org/10.1088/1361-6544/aba88c}{205}},
  year={2020},
  publisher={IOP Publishing},
  notes={\href{https://arxiv.org/abs/1910.06807v3}{arXiv:1910.06807}}
}

@article{svinin2009some,
  title={On some class of reductions for the {I}toh--{N}arita--{B}ogoyavlenskii lattice},
  author={Svinin, A. K.},
  journal={Journal of Physics A: Mathematical and Theoretical},
  volume={42},
  number={45},
  pages={\href{https://iopscience.iop.org/article/10.1088/1751-8113/42/45/454021}{454021}},
  year={2009},
  publisher={IOP Publishing},
  note={\href{https://arxiv.org/abs/0902.4517v3}{arXiv:0902.4517}}
}

@article{svinin2011some,
  title={On some class of homogeneous polynomials and explicit form of integrable hierarchies of differential--difference equations},
  author={Svinin, A. K.},
  journal={Journal of Physics A: Mathematical and Theoretical},
  volume={44},
  number={16},
  pages={\href{https://iopscience.iop.org/article/10.1088/1751-8113/44/16/165206}{165206}},
  year={2011},
  publisher={IOP Publishing},
  note={\href{https://arxiv.org/abs/1101.3808v3}{arXiv:1101.3808}}
}

@article{svinin2025volterra,
  title={Volterra map and related recurrences},
  author={Svinin, A. K.},
  journal={arXiv preprint \href{https://arxiv.org/abs/2502.06908}{arXiv:2502.06908}},
  year={2025}
}

@article{svinin2014some,
  title={On some classes of discrete polynomials and ordinary difference equations},
  author={Svinin, A. K.},
  journal={Journal of Physics A: Mathematical and Theoretical},
  volume={47},
  number={15},
  pages={\href{https://iopscience.iop.org/article/10.1088/1751-8113/47/15/155201}{155201}},
  year={2014},
  publisher={IOP Publishing},
  note={\href{https://arxiv.org/abs/1308.5018}{arXiv:1308.5018}}
}

@article{carpentier2022quantisations,
  title={Quantisations of the {V}olterra hierarchy},
  author={Carpentier, S. and Mikhailov, A. V. and Wang, J. P.},
  journal={Letters in Mathematical Physics},
  volume={112},
  number={5},
  pages={\href{https://link.springer.com/article/10.1007/s11005-022-01588-1}{94}},
  year={2022},
  publisher={Springer},
  note={\href{https://arxiv.org/abs/2204.03095}{arXiv:2204.03095}}
}

@article{gordoa2005non,
  title={Non-isospectral lattice hierarchies in 2+1 dimensions and generalized discrete {P}ainlev{\'e} hierarchies},
  author={Gordoa, P. R. and Pickering, A. and Zhu, Z. N.},
  journal={Journal of Nonlinear Mathematical Physics},
  volume={12},
  number={sup2},
  pages={\href{https://doi.org/10.2991/jnmp.2005.12.s2.13}{180--196}},
  year={2005},
  publisher={Taylor \& Francis}
}

@article{svinin2015integrals,
  title={On integrals for some class of ordinary difference equations admitting a {L}ax pair representation},
  author={Svinin, A. K.},
  journal={arXiv preprint \href{https://arxiv.org/abs/1505.06394}{arXiv:1505.06394}},
  year={2015}
}
\end{document}